\keywords{strategic reasoning; infinite-state
  systems; higher-order pushdown automata; imperfect information;  model checking; distributed systems; hierarchical information}
\def\eg{{\em e.g.}\xspace}
\def\ie{{\em i.e.}\xspace}
\begin{document}

\title[Reasoning about Strategies on Collapsible Pushdown Arenas]{Reasoning
  about Strategies on Collapsible Pushdown Arenas with
Imperfect Information}

\author[B.~Maubert]{Bastien Maubert}	%required
\address{Universit\`a degli Studi di Napoli Federico II, Naples, Italy}	%required
%\email{name1@email1}  %optional
%\thanks{thanks 1, optional.}	%optional

\author[A.~Murano]{Aniello Murano}	%optional
\address{Universit\`a degli Studi di Napoli Federico II, Naples, Italy}	%optional
% \email{name2@email2; ditto for email addresses}  %optional
% \thanks{thanks 2, optional.}	%optional

\author[O.~Serre]{Olivier Serre}	%optional
\address{Universit\'e de Paris, IRIF, CNRS, France}	%optional
% \urladdr{name3@url3\quad\rm{(optionally, a web-page can be specified)}}  %optional
% \thanks{thanks 3, optional.}	%optional

%% etc.

%% required for running head on odd and even pages, use suitable
%% abbreviations in case of long titles and many authors:

%%%%%%%%%%%%%%%%%%%%%%%%%%%%%%%%%%%%%%%%%%%%%%%%%%%%%%%%%%%%%%%%%%%%%%%%%%%

%% the abstract has to PRECEDE the command \maketitle:
%% be sure not to issue the \maketitle command twice!

\begin{abstract}
    Strategy Logic with
  imperfect information (\SLi) is a very expressive logic designed to express
  complex properties of strategic abilities in distributed systems.
Previous work on $\SLi$ focused on \emph{finite} systems, and showed that the model-checking problem is
decidable when  information on the control states of the system is hierarchical
among the {players or components} of the system, meaning
that {the players or components} can be totally ordered
according to their respective knowledge {of the state}. We show that moving from finite to
infinite systems generated by {collapsible (higher-order)} pushdown systems  preserves decidability, under the natural restriction
that the stack content is visible.

The proof follows the same lines as in the case of finite systems, but
requires to use ({collapsible}) alternating pushdown tree automata.
Such automata are undecidable, but \emph{semi-alternating} pushdown tree automata were introduced
and proved decidable, to study a strategic problem on pushdown systems with two players.
In order to tackle multiple players with hierarchical information, we refine
further these automata: we define  \emph{direction-guided}
({collapsible}) pushdown tree automata, and show that they are
stable under {projection,
nondeterminisation and narrowing. For the latter operation, used to deal with
imperfect information, stability holds under some assumption that is
satisfied when used for systems with visible stack.} We then use these automata to prove our
main result.
\end{abstract}

\maketitle

\section{Introduction}
\label{sec-intro}

Logics for strategic reasoning, such as Alternating-time Temporal
Logic (\ATL)~\cite{DBLP:journals/jacm/AlurHK02} and Strategy Logic
(\SL)~\cite{chatterjee2010strategy,DBLP:journals/tocl/MogaveroMPV14},
are powerful languages to specify
complex synthesis problems for distributed systems and verify
strategic abilities in multi-agent systems.  Strategy Logic in
particular is very expressive: it can express the
existence of distributed strategies  satisfying important game-theoretic
solution concepts such as Nash equilibria or subgame-perfect
equilibria; and since model-checking algorithms for \SL can usually
provide witnesses of distributed strategies when they exist, such algorithms 
constitute generic solutions for a range of synthesis problems such as distributed
synthesis~\cite{PR90,kupermann2001synthesizing,DBLP:conf/lics/FinkbeinerS05}
or rational
synthesis~\cite{fisman2010rational,DBLP:conf/icalp/ConduracheFGR16,DBLP:journals/amai/KupfermanPV16,filiot2018rational}.

Most works on such logics have focused on finite-state systems, but in
the recent years a line of work has considered the model-checking
problem for \ATL and \SL on a class of infinite systems that plays an
important role in program verification, namely those generated by
pushdown systems. Pushdown systems are finite-state transition systems
equip\-ped with a stack. Because these systems can capture the flow of
procedure  calls and returns in programs~\cite{jones1977even}, many
problems in formal methods that were initially concerned with
finite-state systems have been studied and solved on such infinite
systems: model checking temporal
logics~\cite{bouajjani1997reachability,finkel1997direct,DBLP:journals/iandc/EsparzaKS03},
solving reachability and parity
games~\cite{walukiewicz2001pushdown,serre2003note,cachat2003games,serrePHD,piterman2004global,hague2009winning},
module checking~\cite{bozzelli2010pushdown,aminof2013pushdown}, and
more recently model checking of logics for strategic
reasoning~\cite{murano2015pushdown,chen2016global,DBLP:conf/ijcai/ChenSW16,chen2017model}.

Two of these works~\cite{aminof2013pushdown,chen2017model} consider
pushdown systems with \emph{imperfect information}, \ie
systems where  players or components  may not
observe perfectly the state of the system. Imperfect information plays an
important role in game theory and distributed systems, but it
usually increases greatly the complexity of analysing such systems:
 already for finite-state systems, multiplayer reachability
games are undecidable when no assumption is made on the relative
information of the players~\cite{DBLP:conf/focs/PetersonR79}.  To
retrieve decidability, a common restriction is to consider systems
with \emph{hierarchical information}, \ie where the players can be
totally ordered according to how well they observe the system.  This restriction has been used to establish results on
multiplayer
games~\cite{peterson2002decision,DBLP:journals/acta/BerwangerMB18} and
distributed
synthesis~\cite{PR90,kupermann2001synthesizing,DBLP:conf/lics/FinkbeinerS05},
and more recently on the model-checking problem for \SLi, an extension of Strategy Logic to
the imperfect-information setting~\cite{DBLP:conf/lics/BerthonMMRV17}.
This result states that the model-checking problem for \SLi is
decidable as long as strategies quantified deeper in the formula
observe the system better than those higher up in the syntactic
tree. {We show that  this result can be extended to
infinite arenas generated by collapsible pushdown systems, as long as the stack is
visible to all players, who thus have imperfect information only on
the control states. This higher-order extension of pushdown system permits to capture higher-order procedure calls (see \eg \cite{HagueMOS17,BroadbentCHS12,BroadbentCHS13}), a feature embraced by many modern day programming languages such as \cpp, Haskell, OCaML, Javascript, Python, or Scala.}

We first consider the simpler case of pushdown systems.
We extend the approach followed
in~\cite{DBLP:conf/lics/BerthonMMRV17}, which consists in reducing the
model-checking problem for \SLi to that of \QCTLsi, an intermediary,
low-level logic introduced in~\cite{DBLP:conf/lics/BerthonMMRV17} as
an imperfect-information extension of
\QCTLs~\cite{DBLP:journals/corr/LaroussinieM14}, which itself extends
\CTLs with second-order quantification on atomic
propositions. In~\cite{DBLP:conf/lics/BerthonMMRV17}, \QCTLsi is
evaluated on finite \emph{compound Kripke structures}, which are
Kripke structures whose states are tuples of \emph{local states}, and
the second-order quantifiers are parameterised by an indication of
which components of states they can observe.
We introduce \emph{pushdown compound Kripke structures}, which
are compound Kripke structures equipped with a
 stack, 
and we show that the model-checking problem for \QCTLsi on such structures
is decidable for the \emph{hierarchical fragment} of \QCTLsi, where
innermost quantifiers observe better than outermost ones.

To prove this,  we generalise the automata construction
  from~\cite{DBLP:conf/lics/BerthonMMRV17}. Instead of alternating
  tree automata we naturally use alternating  pushdown tree automata
  (\APTA), introduced in~\cite{kupferman2002pushdown}.  The emptiness
  problem for these automata being
  undecidable~\cite{aminof2013pushdown}, we actually resort to the
  subclass of \emph{semi-alternating pushdown tree automata} (\SPTA). These
  automata were introduced in~\cite{aminof2013pushdown} to solve the
  module-checking problem on pushdown systems with imperfect
  information.  The idea is the following:
  in the automata constructions  considered, the stack of an
  automaton is always used to simulate that of the (unfolding
  of the) pushdown system that it reads as input. In addition, the operations on the
  system's stack are  coded in the directions of the input tree.
  It follows that the content of an automaton's stack is determined by
  the node it visits, and thus all copies of an automaton that visit a same node in the input
  tree share the same stack content, unlike general
  alternating pushdown tree automata.  \SPTAs were introduced to exploit this property
  and obtain a simulation procedure (elimination of alternation), and
  thus a decidable class of \APTAs.
  %to solve module
% checking on pushdown systems.
However, these automata are not closed
under an operation that is central  in the approach
from~\cite{DBLP:conf/lics/BerthonMMRV17} that we
generalise: the   \emph{narrowing} operation.

Narrowing is an operation on tree
automata that was introduced by Kupferman and Vardi to deal with imperfect
information in the automata approach to \LTL
synthesis~\cite{kupferman1999church,kupermann2001synthesizing}.
Intuitively, if a tree automaton works on $X\times Y$-trees (\ie
trees where nodes are words over $X\times Y$), its
narrowing to $X$ is an automaton that works on $X$-trees and can thus
guess a strategy that observes only $X$.  We generalise this operation
to pushdown tree automata. This presents no difficulty, but it turns
out that \SPTAs are not closed under narrowing: if an \SPTA sends two
copies of itself in two directions $(x,y)$ and $(x,y')$ with different
operations on the stack (which is possible in an \SPTA if $y\neq y'$), in its narrowing to $X$ these two copies take
the same direction $x$ and thus arrive in the same node with two
different stack contents. To solve this problem, we identify a
subclass of semi-alternating pushdown tree automata that is stable
under narrowing,
and we prove that it is also stable under simulation and
projection (the latter is trivial), the two other main operations 
involved in the automata construction.

The idea is the following:  in \SPTAs the operation on the stack
can depend on the direction taken in the input tree. % , on
% the label of the current node and the top of the stack.
We observe
 that actually, since the automata we build work on  unfoldings of  pushdown systems whose stack
 operations are coded as part of the directions, these stack operations are
 determined by a specific component of the directions. 
And moreover, because the
stack is visible, this component coding stack operations is never erased by
the narrowing operations we perform. We say that an \APTA
working on $X\times Y$-trees is \emph{$X$-guided} if stack operations
are determined by the $X$ component of the direction taken,
and we will use the fact that
 if an automaton working on $X\times Y\times Z$-trees is $X$-guided, then its
 narrowing to $X\times Y$ is also $X$-guided.

% The proof is via semi-alternating pushdown tree automata, introduced
% in~\cite{aminof2013pushdown}. Unlike alternating pushdown tree
% automata, semi-alternating ones can be nondeterminised
% and thus have a decidable emptiness problem~\cite{aminof2013pushdown}.
% However they are not closed under the narrowing operation from~\cite{kupferman1999church}, which we use to deal
% with hierarchical imperfect information.
% We thus introduce
% \emph{direction-guided}  pushdown tree automata, the subclass
% of  semi-alternating automata for which the operations on the stack
% are completely determined by the directions taken in the
% tree.\todo{rework this}
%  Under a small assumption that is satisfied thanks to the fact that
%  the stack is visible, such automata are closed under narrowing, which
%  is crucial for our construction.
%  \todo{explain better what is new technically}
% \todo{talk about higher order}
 {In the higher-order case, we follow the same road map. The
   main technical difficulty arises when defining regular
   labelling functions, which are tools to describe the atomic
   propositions satisfied in a given configuration of the collapsible
   pushdown system. For that we follow the approach from
   \cite{BCOS10} {which, in particular,} permits to rely on a closure
   property of the model of alternating collapsible pushdown automata
   to solve most of the technical difficulties.}

% \vskip1em
 \subsection*{Related work}
Pushdown systems with imperfect information and visible stack were
 considered in~\cite{aminof2013pushdown}, where it is proved that
 module checking
is undecidable if the stack is not visible. This is also the
case of the model-checking problem for \SLi, as it subsumes module
checking. The only existing work on logics for strategic reasoning
on pushdown systems with imperfect information
is~\cite{chen2017model}. The logics it considers  are incomparable to
\SLi: they involve epistemic operators, but are based on \ATL instead
of the richer \SL; also, while we work in the setting of perfect recall, they consider memoryless players, which makes it
possible to make less restrictive assumptions on the visibility of the
stack while retaining decidability.

% \vskip1em
\subsection*{Plan}
 We start in Section~\ref{sec-QCTLi} 
 by defining \QCTLsi and pushdown compound Kripke structures.
{Section~\ref{sec-direction-guided} contains the main conceptual
 novelty of this work, which is the introduction of direction-guided
 pushdown automata, and the proof that
they are stable under projection, simulation and narrowing.
In Section~\ref{sec-proof}  we use these automata to extend the automata
construction from~\cite{DBLP:conf/lics/BerthonMMRV17} to the case of
pushdown systems, and obtain
 our decidability result
 for  \QCTLsi model checking (Theorem~\ref{theo-decidable-QCTLi}).}
 We then apply this result to Strategy Logic with imperfect
 information. In Section~\ref{sec-sl} we
recall its syntax, define its semantics on pushdown game arenas,
 and we show how the
 hierarchy-preserving reduction from  \SLi to \QCTLsi can be extended
 to the pushdown setting, which 
 entails our main result {on pushdown arenas}
(Theorem~\ref{theo-SLi}).  {Finally, in
Section~\ref{sec-ho} we {show how to generalise this result} to a much more general
case in which pushdown {arenas} are replaced with
collapsible pushdown {arenas} while preserving
decidability (Theorem~\ref{theo-SLi-HO}).}

\section{\QCTLs with imperfect information}
\label{sec-QCTLi}

We start by recalling the syntax and semantics of \QCTLsi. The
definitions are as
in~\cite{DBLP:conf/lics/BerthonMMRV17},
except that the models are now \emph{pushdown} compound Kripke structures
instead of finite ones.

%\vspace{1ex}
\subsection*{Preliminaries} As usual we write $A^*$ (\resp $A^+$, $A^\omega$) for the set of
finite (\resp finite nonempty, infinite) words over some finite alphabet $A$.
 % For $k\geq 0$ we denote by $A^{\geq k}$ (\resp $A^{\leq k}$) the set of words of length at least (\resp at most) $k$.
The \emph{length} of a finite word $w=w_{0}w_{1}\ldots
w_{n}$ is $|w|\egdef n+1$, and
 $\last(w)$ is the last
  letter .
Given a finite (resp. infinite) word $w$ and $0 \leq i < |w|$ (resp. $i\in\setn$), we let $w_{i}$ be the
letter at position $i$ in $w$, $w_{\leq i}$ is the prefix of $w$ that
ends at position $i$ and $w_{\geq i}$ is the suffix of $w$ that starts
at position $i$. %If $w$ is finite we also let $\mirr{w}$ be the mirror of $w$.
% %If $w$ is infinite, we let $w^{i}\egdef w[i,\omega]$.
% We write $w\pref w'$ if $w$ is a prefix of $w'$, and $\FPref{w}$ is
% the set of finite prefixes of word $w$.
The domain
of a mapping $f$ is written $\dom(f)$, for a relation $\relation\subseteq A\times B$ and $a\in A$,
$\relation(a)\egdef\{b\in B \mid (a,b)\in \relation\}$,   and for $n\in\setn$ we let
$[n]\egdef\{i \in \setn: 1 \leq i \leq n\}$.
% \bmchanged{Finally, every time we consider a stack alphabet $\stacka$,
%   we assume that we have  a  bottom
%   symbol $\stackb\notin\stacka$ and we let $\stacka_\stackb=\stacka\cup\{\stackb\}$.}

\subsection{\QCTLsi Syntax}
\label{sec-syntax-QCTLi}

For the rest of the paper we fix
a finite  set of
\defin{atomic propositions} $\APf$,
and some natural number $n\in\setn$ which
parameterises the logic \QCTLsi, and which is the number of components
in states of the models.
We also  let $\{\setlstates_{i}\}_{i\in [n]}$ be a family of $n$ disjoint  sets of
\defin{local states}.
% Similarly to \SLi where strategy quantifiers are parameterised by observations,
In \QCTLsi each quantifier on atomic propositions is parameterised by
a set of indices that represents which components of each state the
quantifier observes; it thus defines the
``observation'' of that quantifier.
Accordingly, a set $\cobs \subseteq [n]$ is called a
\defin{concrete observation} (to distinguish it from observation
symbols $\obs$ used in \SLi, see Section~\ref{sec-sl}).

\begin{defi}%[\QCTLsi Syntax]
  \label{def-syntax-QCTLsi}
  % Fix % a finite set $\APf$ of atomic predicates, and
  % $n \in \setn$.
  The syntax of \QCTLsi is defined by the following grammar:
  \begin{align*}
  \phi\egdef &\; p \mid \neg \phi \mid \phi\ou \phi \mid \E \psi \mid
  \existsp[p]{\cobs} \phi\\
    \psi\egdef &\; \phi \mid \neg \psi \mid \psi\ou \psi \mid \X \psi \mid
  \psi \until \psi
\end{align*}
where $p\in\APf$ % , $q\in\AP\setminus\APlstates$
and $\cobs\subseteq [n]$. %\sr{remove: is a concrete observation.}
\end{defi}

Formulas of type $\phi$ are \emph{state formulas}, those of type $\psi$
are \emph{path formulas}, and \QCTLsi consists of all the state formulas
defined by the grammar.
%We use standard abbreviation % $\top\egdef p\ou\neg p$,
% $\perp\egdef\neg\top$, $\F\psi \egdef \top \until \psi$, $\G\psi \egdef \neg \F
% \neg \psi$ and
%$\A\psi \egdef \neg\E\neg\psi$.
% The size $|\phi|$ of a formula $\phi$ is defined inductively as usual, but the
% following case: $|\existsp{\cobs}\phi|\egdef 1 + |\cobs| + |\phi|$.
%We also use $\exists p.\,\phi$ as a shorthand for $\existsp[p]{[n]}
%\phi$, 
% indeed, as the intuition suggests, the semantics that we define in the next
% section is such that a quantifier that observes all components of the structure
% behaves exactly like the usual quantifier from the perfect information
% setting.
%and we let $\forall p.\, \phi\egdef \neg \exists p.\, \neg \phi$.

The set of \emph{quantified
  propositions} $\APq(\phi)\subseteq\APf$ of a \QCTLsi formula $\phi$ is the set of atomic propositions $p$ such that
$\phi$ has a subformula of the form $\existsp[p]{\cobs}\phi'$. We also
define the set of \emph{free propositions} $\APfree(\phi)\subseteq\APf$ as the set
of atomic propositions that have an occurrence which is not under the scope of any quantifier
of the form $\existsp[p]{\cobs}$ 
Without loss of generality we will assume that
$\APq(\phi)\inter\APfree(\phi)$ is empty and that each $p\in\APq(\phi)$ is
quantified at most once in $\phi$. 

\subsection{Compound Kripke structures}
\label{sec-CKS}

% \todo{to lighten the paper we may define the semantics of \QCTLsi
%   directly on pushdown Kripke structures, and define directly
%   unfoldings as succinct ones. But maybe it is convenient to keep it
%   this way to define the semantics on higher-order pushdown structures.}

Compound Kripke structures~\cite{DBLP:conf/lics/BerthonMMRV17} are  Kripke structures where 
states  are tuples $\sstate=(\lstate_1,\ldots,\lstate_n)$ in which the
$\lstate_i$ are \emph{local states}. A concrete observation
$\cobs\subseteq [n]$ indicates the indices of the local states
observed by a propositional
quantifier. Unlike~\cite{DBLP:conf/lics/BerthonMMRV17}, here we define
the semantics of \QCTLsi on potentially \emph{infinite} structures,
that will be generated first by finite-state pushdown compound Kripke
structures that we introduce in Section~\ref{sec-PCKS}, \bmchanged{and later 
  by (higher-order) collapsible pushdown compound Kripke structures in Section~\ref{sec-ho}.}
  
\begin{defi}
A \emph{compound Kripke structure}, or \CKS, {over local states $\{\setlstates_i\}_{i\in [n]}$}   is a tuple 
$\CKS=(\setstates,\relation,\lab,\sstate_\init)$ where
\begin{itemize}
\item $\setstates\subseteq \prod_{i\in [n]}\setlstates_i$  is a set of
\emph{states},  
\item $\relation\subseteq\setstates\times\setstates$ is a
left-total \emph{transition
relation}, 
\item $\lab:\setstates\to 2^{\APf}$ is a \emph{labelling function} and
\item $\sstate_\init \in \setstates$ is an \emph{initial state}.
\end{itemize}
\end{defi}

A \emph{path} in $\CKS$  is an infinite sequence of states
$\spath=\sstate_{0}\sstate_{1}\ldots$ such that
$\sstate_0=\sstate_\init$ and for all $i\in\setn$,
$(\sstate_{i},\sstate_{i+1})\in \relation$. 
% We let $\Paths(\CKS)$ be the set of all
% paths in $\CKS$. % that start in $\sstate$.  
A \emph{partial path} is a finite non-empty prefix of a path.
% We may write $\sstate\in\CKS$ for $\sstate\in\setstates$, and we
% define the \emph{size} $|\CKS|$ of a \CKS
% $\CKS=(\setstates,\relation,\sstate_\init,\lab)$ as its number of states: $|\CKS|\egdef
% |\setstates|$. %  + \sum_{i=1}^n|\setlstates_i|$, \ie, the number of states
% in $\CKS$ plus the number of possible local states.

\subsection{\QCTLsi semantics}
\label{sec-QCTLsi-semantics}

\QCTLsi is interpreted on
infinite trees, which represent unfoldings of \CKSs.
Let $\Dirtree$ be a (possibly infinite) 
set of \emph{directions}. 
% For a finite set $\Dirtree$,
An \defin{$\Dirtree$-tree} $\tree$ % , or
% \emph{tree} for short when $\Dirtree$ is clear from the context,
 is a
set of words $\tree\subseteq \Dirtree^+$ such that
%\begin{itemize}
$\bm{(1)}$  there exists $\racine\in\Dirtree$,  called the
    \emph{root} of $\tree$, such that each
    $\noeud\in\tree$ starts with $\racine$; % ($\racine\pref\noeud$);
$\bm{(2)}$ if $\noeud\cdot\dir\in\tree$ and $\noeud\cdot\dir\neq\racine$, then
    $\noeud\in\tree$; and
$\bm{(3)}$ if $\noeud\in\tree$ then there exists $\dir\in\Dirtree$ such that $\noeud\cdot\dir\in\tree$.
%\end{itemize}

The elements of a tree $\tree$ are called \emph{nodes}.  
%A node $\noeud\cdot\dir$ is a \emph{child} of $\noeud$.
% The \emph{depth} of a node $\noeud$ is $|\noeud|$.
A \emph{\tpath} in $\tree$ is an infinite sequence of nodes $\tpath=\noeud_0\noeud_1\ldots$
such that for all $i\in\setn$, $\noeud_{i+1}=\noeud_i\cdot\dir$ for
some $\dir\in\Dirtree$, and $\tPaths(\noeud)$ is the set of \tpaths
 that start in node $\noeud$.
 An $\Dirtree$-tree $\tree$ is \emph{complete} if for every $\noeud \in
 \tree$ and  $\dir \in \Dirtree$,  $\noeud \cdot \dir \in \tree$. 
An \defin{$\APf$-labelled $\Dirtree$-tree}, or
\defin{$(\APf,\Dirtree)$-tree} for short, is a pair
$\ltree=(\tree,\lab)$, where $\tree$ is an $\Dirtree$-tree called the
\emph{domain} of $\ltree$ and
$\lab:\tree \rightarrow 2^{\APf}$ is a \emph{labelling}.
A \emph{pointed labelled tree} is a pair $(\ltree,\noeud)$ where
 $\noeud$ is a node of $\ltree$.

Let $p\in\APf$ and $\tree$ a tree. A \emph{$p$-\labeling} for $\tree$ is a mapping
$\plab:\tree\to \{0,1\}$ that indicates in which nodes $p$ holds, and
for a \labeled tree $\ltree=(\tree,\lab)$, the $p$-\labeling of $\ltree$ is
the $p$-\labeling $\noeud \mapsto 1$ if $p\in\lab(\noeud)$, 0 otherwise. 
The composition of a \labeled tree $\ltree=(\tree,\lab)$ with a
$p$-\labeling $\plab$ for $\tree$ is defined as
$\ltree\prodlab\plab\egdef(\tree,\lab')$, where
$\lab'(\noeud)=\lab(\noeud)\union \{p\}$ if $\plab(\noeud)=1$, and
$\lab(\noeud)\setminus \{p\}$ otherwise.
A $p$-\labeling for a labelled tree $\ltree=(\tree,\lab)$ is a
$p$-\labeling for its domain $\tree$.

% For a labelled tree $\ltree=(\tree,\lab)$ and an atomic
% proposition $p\in\APf$, the \defin{$p$-projection of $\ltree$}
% is
% $\proj{\ltree}\;\egdef(\tree,\proj{\lab})$, where for each
% $\noeud\in\tree$, $\proj{\lab}\!(\noeud)\egdef \lab(\noeud)\setminus
% \{p\}$. %For a set of trees $\lang$, we let $\proj{\lang}\;\egdef\{\proj{\ltree}\;\mid\ltree\in\lang\}$.
% Finally, two labelled trees $\ltree=(\tree,\lab)$ and
%   $\ltree'=(\tree',\lab')$ are \defin{equivalent modulo $p$},
%   written $\ltree\Pequiv\ltree'$, if
%   $\proj{\ltree}=\proj{\ltree'}$ (in particular, $\tree=\tree'$).
% \todo{find better notations?}

Let $\Dirtree$ and $\Dirtreea$ be two sets, and let $(\dir,\dira)\in\Dirtree\times\Dirtreea$.
 The
 \defin{$\Dirtree$-narrowing} of $(\dir,\dira)$ is
${\projI[\Dirtree]{(\dir,\dira)}}\egdef \dir$.
This definition extends naturally to words and trees over
$\Dirtree\times\Dirtreea$.
% Given a family of disjoint sets of local states
% $\{\setlstates_{i}\}_{i\in [n]}$ and a subset
For $I\subseteq [n]$, we let $\Dirtreei\egdef\prod_{i\in
I}\setlstates_{i}$ if $I\neq\emptyset$ and
$\Dirtreei[\emptyset]\egdef\{\blank\}$, where $\blank$ is a special symbol.
 For ${I,J\subseteq [n]}$ and $\dirz=(\lstate_{i})_{i\in
   I}\in\Dirtreei[I]$,
 we also let 
 \[{\projI[{J}]{\dirz}}\egdef
\projI[{\Dirtreei[ {I\cap J}]}]{\dirz}{\in \Dirtreei[I\cap J],}\]
 where $\dirz$ is seen as a pair $\dirz=(\dir,\dira)\in
 \Dirtreei[{I\cap J}]\times \Dirtreei[{I\setminus J}]$, \ie we apply the
 above definition with
 $\Dirtree=\Dirtreei[{I\cap J}]$ and $\Dirtreea=\Dirtreei[{I\setminus
   J}]$\footnote{Since sets $\setlstates_{i}$ are disjoint, the
 ordering of local states in $\dirz$ is indifferent and thus this is well defined.}. 
 We extend this definition to words and trees. 
% Observe that when narrowing a tree,
% nodes with same narrowing are merged. In particular, for every
% $\Dirtreei[I]$-tree $\tree$, $\projI[\emptyset]{\tree}$ is the only
% $\Dirtreei[\emptyset]$-tree, $\blank^{\omega}$.

 To define the semantics of quantifier $\exists^{\cobs}p$
 we need to
 define what it means for a $p$-labelling of a tree to be
 $\cobs$-uniform.
% \begin{definition}[$\cobs$-indistinguishability and $\cobs$-uniformity
%   in $p$]
%   \label{def-uniformity}
For $\cobs \subseteq [n]$ and $I \subseteq [n]$,
two tuples $\dir,\dir'\in\Dirtreei[I]$ are \defin{$\cobs$-indistinguishable},
written $\dir\oequiv\dir'$, if 
$\projI[{I\cap\,\cobs}]{\dir}=\projI[{I\cap\,\cobs}]{\dir'}$.
 Two words
   $\noeud=\noeud_{0}\ldots\noeud_{i}$ and
   $\noeud'=\noeud'_{0}\ldots\noeud'_{j}$ over alphabet $\Dirtreei[I]$ are
   \defin{$\cobs$-indistinguishable}, written $\noeud\oequivt\noeud'$, if
   $i=j$ and for all $k\in \{0,\ldots,i\}$ we have
   $\noeud_{k}\oequiv\noeud'_{k}$.
   Finally,
a $p$-\labeling $\plab$ for {an $\Dirtreei$-tree} $\tree$ is \defin{$\cobs$-uniform} if for all
    $\noeud,\noeud'\in\tree$, $\noeud\oequivt\noeud'$ implies
 $\plab(\noeud)=\plab(\noeud')$. 
 %   a labelled tree     $\ltree=(\tree,\lab)$ is \defin{$\cobs$-uniform in $p$} if for all
 % $\noeud,\noeud'\in\tree$ such that $\noeud\oequivt\noeud'$, 
 % $p\in\lab(\noeud)$ iff $p\in\lab(\noeud')$.
% \end{itemize}
% \end{definition}

\begin{defi}%[\QCTLsi semantics]
We define by induction the satisfaction relation $\modelst$ of
\QCTLsi. Let $I\subseteq [n]$, let $\ltree=(\tree,\lab)$ be
an $\APf$-labelled $\Dirtreei$-tree,
$\noeud$  a node and $\tpath$  a path in $\tree$:
\begingroup
  \addtolength{\jot}{-2pt}
\begin{alignat*}{3}
  \ltree,\noeud\modelst & 	\,p 			&& \mbox{ if } &&\quad p\in\lab(\noeud)\\
  \ltree,\noeud\modelst & 	\,\neg \phi		&& \mbox{ if } && \quad\ltree,\noeud\not\modelst \phi\\
  \ltree,\noeud\modelst & 	\,\phi \ou \phi'		&& \mbox{ if } &&\quad \ltree,\noeud \modelst \phi \mbox{ or    }\ltree,\noeud\modelst \phi' \\
  \ltree,\noeud\modelst & 	\,\E\psi			&& \mbox{ if } &&\quad \exists\,\tpath\in\tPaths(\noeud) \mbox{      s.t. }\ltree,\tpath\modelst \psi \\
  \ltree,\noeud\modelst & \,\existsp{\cobs} \phi && \mbox{ if }
  && \quad \exists\,\plab \mbox{ a $\cobs$-uniform $p$-\labeling for
    $\ltree$} \mbox{ such that 
  }\ltree\prodlab\plab,\noeud\modelst\phi\\
 % \end{alignat*}
 % \begin{alignat*}{3}
% \end{align*}
% \begin{align*}
\ltree,\tpath\modelst &		\,\phi 			&& \mbox{ if } &&\quad \ltree,\tpath_{0}\modelst\phi \\ 
\ltree,\tpath\modelst &		\,\neg \psi 		&& \mbox{ if }
&& \quad \ltree,\tpath\not\modelst \psi \\ 
\ltree,\tpath\modelst & \,\psi \ou \psi'	\quad		&& \mbox{ if } && \quad\ltree,\tpath \modelst \psi \mbox{ or }\ltree,\tpath\modelst \psi' \\ 
\ltree,\tpath\modelst & \,\X\psi 				&& \mbox{ if } && \quad\ltree,\tpath_{\geq 1}\modelst \psi \\ 
\ltree,\tpath\modelst & \,\psi\until\psi' 		&& \mbox{ if }
&& \quad\exists\, i\geq 0 \mbox{ s.t.    }\ltree,\tpath_{\geq
  i}\modelst\psi' \text{ and } \forall j \text{ s.t. }0\leq j <i,\; \ltree,\tpath_{\geq j}\modelst\psi
\end{alignat*}
\endgroup
\end{defi}

  Let $\CKS=(\setstates,\relation,\lab,\sstate_\init)$ be a compound
  Kripke structure over $\APf$.  The
  \defin{tree-unfolding of $\CKS$} is the
  $(\APf,\setstates)$-tree $\unfold[\CKS]{\sstate}\egdef (\tree,\lab')$,
  where $\tree$ is the set of all partial paths in $\CKS$, and for every $\noeud\in\tree$,
  $\lab'(\noeud)\egdef \lab(\last(\noeud))$.
% We write $\ltree\modelst\phi$ for $\ltree,\racine\modelst\phi$,
% where $\racine$ is the root of $\ltree$.  
% Given % a \CKS $\KS$ % , a state $\sstate\in\CKS$
%  a \QCTLsi formula $\phi$, w
 We write % $\CKS,\sstate\modelst\phi$ if
% $\unfold[\CKS]{\sstate}\modelst\phi$, and
$\CKS \modelst \phi$ if
$\unfold[\CKS]{\sstate},\sstate_\init \models \phi$.

  \subsection{Pushdown compound Kripke structures}
  \label{sec-PCKS}

  We now focus on infinite compound Kripke structures
generated by \emph{pushdown} compound Kripke structures, which are compound Kripke structures equipped with
a (visible) stack.

% \oscomment{Maybe we should here explain why it makes sens to use regular labelling function, ie why it makes sense that the atomic propositions satisfied in a given configuration depend on the stack content (and note solely on the control state). Typically, this can be justified by the fact that the stack describe the structure of the pending recursive calls and that typical properties are expressed on that.}
% \oscomment{done right after defining regular labelling function}
%\oscomment{Here as in Definition~\ref{def-pgs} I made some changes regarding bottom symbol, and configurations.}
\begin{defi}
  \label{def-pcks}
A \emph{pushdown compound Kripke structure}, or \PCKS, {over
local states $\{\setlstates_i\}_{i\in [n]}$} is a tuple 
$\PCKS=(\stacka,\setstates,\relation,\lab,\sstate_\init)$ where
\begin{itemize}
    \item $\stacka$ is a finite stack alphabet  together with a  bottom
       symbol $\stackb\notin\stacka$, and we let $\stacka_\stackb=\stacka\cup\{\stackb\}$;
\item $\setstates\subseteq \prod_{i\in [n]}\setlstates_i$  is a finite
  set of
states; %  with $\{\setlstates_{i}\}_{i\in [n]}$ a family of $n$ disjoint finite sets of
% local states, 
\item $\relation\subseteq\setstates\times\stacka_\stackb\times\setstates\times{\stacka_\stackb}^*$ is a
 transition
relation; 
\item $\lab:\setstates\times\stacka^*\cdot\stackb\to 2^{\APf}$ is a
  regular labelling function (defined below);
\item $\sstate_\init \in \setstates$ is an initial state.
\end{itemize}
\end{defi}

We require that the bottom symbol can never be removed
nor pushed: for any $\sstate\in\setstates$
 one has
$\relation(\sstate,\stackb)\subseteq
\setstates\times{\stacka}^*\cdot\stackb$ (the bottom symbol is never
removed), and for every $\stacks\in\stacka$,
$\relation(\sstate,\stacks)\subseteq \setstates\times{\stacka}^*$ (the
bottom symbol is never pushed).

A \defin{regular labelling function} is given as a set of finite word automata
$\wautop{\sstate}$ over alphabet $\stacka$, one for each $p\in\APf$
and each $\sstate\in\setstates$. They define the labelling function
that maps to each state $\sstate\in\setstates$ and stack content
$\stackc\in\stacka^*\cdot\stackb$ the set $\lab(\sstate,\stackc)$ of
all atoms $p$ such that $\stackc$ belongs to
$\lang(\wautop{\sstate})$, the language accepted by
$\wautop{\sstate}$.

%\oschanged{
\begin{rem}\label{rk-regular-labelling}
  Because of the definition of regular labelling function, whether an
  atomic proposition holds in a configuration depends not only on the
  control state but on the whole content of the stack. We believe that
  % for modelling,
  it is important to \bmchanged{be able to} express properties about the whole
  stack content, as the latter reflects the recursive calls of a
  system.
	
	The choice of restricting to \emph{regular} properties is for
        decidability issues. However it is already expressive, as for
        instance, it permits to capture all sets of configurations
        that one can define in popular logics such as the monadic
        second order logic or the modal $\mu$-calculus. {Such regular
        labelling functions were used  for
        instance in~\cite{DBLP:journals/iandc/EsparzaKS03}.}
      \end{rem}
      
%}

%\oscomment{When moving to higher-order regular labelling function have to be adapted. This is a technical part (and it may be difficult to give it in full details because it will require quite a lot of space}  
  
A \defin{configuration} is a pair
$\config=\conf{\sstate}{\stackc}\in \setstates\times({\stacka}^*\cdot\stackb)$
where $\sstate$ is the current state and $\stackc$ the current
content of the stack. From  
configuration $\conf{\sstate}{\stacks\cdot\stackc}$ the system can move to
a configuration $\conf{\sstate'}{\stackc'\cdot\stackc}$ if
$(\sstate,\stacks,\sstate',\stackc')\in\relation$, which
we write
$\conf{\sstate}{\stacks\cdot\stackc}\conftrans{}\conf{\sstate'}{\stackc'\cdot\stackc}$. We assume that
for every configuration $\conf{\sstate}{\stackc}$ there exists at
least one configuration $\conf{\sstate'}{\stackc'}$ such that
$\conf{\sstate}{\stackc}\conftrans{}\conf{\sstate'}{\stackc'}$.
A \defin{path} in $\PCKS$  is an infinite sequence of configurations
$\spath=\config_{0}\config_{1}\ldots$ such that
$\config_0=\conf{\sstate_\init}{\stackb}$ and
 for all $i\in\setn$,
$\config_i\conftrans{}\config_{i+1}$. 
A \defin{partial path} is a finite non-empty prefix of a path.
We let $\Paths(\PCKS)$
(resp. $\PPaths(\PCKS)$) be the set of all
paths (resp. partial paths) in $\PCKS$.

\begin{defi}
  \label{def-generated-CKS}
A \PCKS $\PCKS=(\stacka,\setstates,\relation,\lab,\sstate_\init)$
over $\{\setlstates_i\}_{i\in [n]}$  generates an infinite \CKS
$\CKS_\PCKS=(\setstates',\relation',\lab',\sstate'_\init)$ over 
$\{\setlstates_i\}_{i\in [n+1]}$, where
\begin{itemize}
\item $\setlstates_{n+1}={\stacka}^*\cdot\stackb$,
\item $\setstates'=\setstates\times{\stacka}^*\cdot\stackb$, 
\item $(\sstate',\stackc')\in\relation'(\sstate,\stackc)$ if 
$(\sstate,\stackc)\conftrans{}(\sstate',\stackc')$,
\item  $\lab'=\lab$ and
\item $\sstate'_\init=(\sstate_\init,\stackb)$.
\end{itemize}
\end{defi}

%Given a \PCKS $\PCKS$ and a \QCTLsi formula $\phi$, we
We write $\PCKS\models\phi$ if $\CKS_\PCKS\models\phiprime$, where $\phiprime$ is
obtained from $\phi$ by replacing each concrete observation $\cobs\subseteq
[n]$ with $\cobs'=\cobs\union\{n+1\}$. \bmchanged{This reflects the fact that the
stack content is  visible to all  quantifiers  in $\phi$.}

There is a reduction from the model-checking problem for MSO with
equal-level predicate on the infinite binary tree to the
model-checking problem for \QCTLsi, so that this problem is
undecidable already on \emph{finite} compound Kripke structures~\cite{DBLP:conf/lics/BerthonMMRV17}.
However it is proved in~\cite{DBLP:conf/lics/BerthonMMRV17} that the
problem is decidable for the fragment of \emph{hierarchical
  formulas}. We now recall this notion, and then we generalise this
result to the case of pushdown compound Kripke structures.

\begin{defi}%[Hierarchical formulas]
  \label{def-hierarchical}
  A \QCTLsi formula $\phi$ is \defin{hierarchical} if for every
  subformula %$\phi_{1},\phi_{2}$ of the form
  $\phi_{1}=\existsp[p_{1}]{\cobs_{1}}\phi'_{1}$ of $\phi$ and
subformula  $\phi_{2}=\existsp[p_{2}]{\cobs_{2}}\phi'_{2}$ of $\phi'_1$,  % $\phi_{1}$ is a subformula of
 we have $\cobs_{1}\subseteq\cobs_{2}$.
\end{defi}

A formula is thus hierarchical if innermost propositional
quantifiers observe at least as much as  outermost ones.
We let \QCTLsih be the set of hierarchical \QCTLsi formulas.

\begin{thm}
  \label{theo-decidable-QCTLi}
Model checking \QCTLsih on pushdown compound Kripke structures is decidable.
\end{thm}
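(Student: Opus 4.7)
The plan is to extend the automata-theoretic construction of~\cite{DBLP:conf/lics/BerthonMMRV17} from finite compound Kripke structures to pushdown compound Kripke structures, replacing alternating tree automata with the direction-guided semi-alternating pushdown tree automata defined in Section~\ref{sec-direction-guided}. Throughout, I work with the formula $\phiprime$ obtained from $\phi$ by adding the index $n+1$ to every concrete observation, so that the $(n+1)$-th component (the stack content) is observed by every propositional quantifier.

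I would proceed by structural induction on $\phiprime$, building for each subformula $\chi$ a direction-guided \SPTA $\autoA_\chi$ over $\Dirtreei[I]$-trees (for the appropriate $I\subseteq [n+1]$ determined by the free propositions and the ambient observations) such that, for every $(\APf,\Dirtreei[I])$-tree $\ltree$ and node $\noeud$, the automaton accepts $(\ltree,\noeud)$ iff $\ltree,\noeud\modelst\chi$. Atomic propositions, Boolean connectives, and the path quantifier $\E\psi$ are handled by the standard $\CTLs$-to-alternating-tree-automaton machinery, lifted to pushdown tree automata as in~\cite{kupferman2002pushdown}; these constructions trivially preserve direction-guidedness because they do not touch the stack-simulation mechanism. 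The $(n+1)$-th component of each direction carries the stack content of the unfolded \PCKS, and the stack operations of $\autoA_\chi$ are, by construction, driven entirely by this component, so each $\autoA_\chi$ is $\{n+1\}$-guided (in particular, guided by any set containing $n+1$).

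The crux of the induction is the case of a propositional quantifier $\existsp{\cobs}\chi'$ (with $\cobs$ now containing $n+1$). Given $\autoA_{\chi'}$ working on $\Dirtreei[J]$-trees with $\cobs\subseteq J$, I would first apply projection over $p$ (standard, and trivially preserves direction-guidedness) and then narrow the resulting automaton down to $\Dirtreei[\cobs]$-trees. The hierarchical restriction from Definition~\ref{def-hierarchical} is what guarantees that along the outward traversal of the formula, narrowings are applied in a consistent monotone fashion, so that each subautomaton is built over a tree whose directions still include every component the current quantifier is allowed to observe. Because $n+1\in\cobs$ by construction, the narrowing never erases the stack component; by the stability lemmas of Section~\ref{sec-direction-guided}, the resulting automaton is again a direction-guided \SPTA. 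After all quantifiers have been processed, simulation (alternation removal for direction-guided \SPTAs) yields a nondeterministic pushdown tree automaton $\autoB$.

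Finally, to decide $\PCKS\models\phi$, I would check that $\autoB$ accepts the pointed tree $(\unfold[\CKS_\PCKS]{\sstate},\sstate_\init)$. Since this tree is itself generated by the \PCKS, this reduces to a nonemptiness problem for a pushdown tree automaton obtained from the product of $\autoB$ with (a tree automaton representation of) $\PCKS$, and both stacks evolve in lockstep so the resulting device is still within a decidable class. The main obstacle, and the reason the whole construction works, is maintaining direction-guidedness of the \SPTA under the narrowing step; this is precisely where the hierarchical assumption on $\phi$ together with the visibility of the stack in $\phiprime$ is indispensable, since these two conditions together ensure that the $(n+1)$-th component driving the stack operations is preserved by every narrowing performed along the induction.
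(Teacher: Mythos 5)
Your high-level roadmap is the paper's (generalise the construction of~\cite{DBLP:conf/lics/BerthonMMRV17} with direction-guided semi-alternating pushdown tree automata, using hierarchy plus stack visibility to keep the guiding component alive through every narrowing), but the quantifier step as you describe it would fail. You apply projection first, narrow afterwards, and postpone simulation to the very end. Projection is only sound for \emph{nondeterministic} automata --- the valuation guessed at a node is consistent only because each node is visited at most once --- so the automaton must be nondeterminised at \emph{each} quantifier before projecting; this is precisely why semi-alternation (here, $\Dirstack$-guidedness) has to be preserved all along the induction rather than invoked once at the end. Moreover, projecting before narrowing guesses an arbitrary $p$-labelling of the wide tree rather than a $\cobs$-uniform one; the correct order for $\existsp[p]{\cobs}\phi'$ is narrow, then simulate, then project, so that the guessed labelling lives on the narrowed tree and its lifting is automatically $\cobs$-uniform.

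There are three further gaps. First, you say the $(n+1)$-th direction component ``carries the stack content'' and you run the final automaton on the unfolding of $\CKS_\PCKS$; but that tree has the \emph{infinite} direction alphabet $\setstates\times\stacka^*\cdot\stackb$, on which the automata of Section~\ref{sec-direction-guided} are not even defined. The paper works with \emph{succinct unfoldings} (Section~\ref{sec-succinct-unfolding}), whose directions record only the stack \emph{operation} performed (the finite set $\Dirstack$), the automaton's own stack reconstructing the content; $\Dirstack$-guidedness refers to exactly this component. Second, your induction invariant (the automaton accepts $(\ltree,\noeud)$ iff $\ltree,\noeud\modelst\chi$, with the model given as the input tree) does not survive narrowing: once components are hidden, the input no longer determines the model's transitions. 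The paper instead encodes $\PCKS$ inside the automaton and lets the input tree carry only the labellings of \emph{quantified} propositions (Lemma~\ref{lem-final}); the final decision step is then acceptance of a single \emph{regular} tree, which sidesteps your delicate product of two pushdown devices. Third, the base case is not ``standard \CTLs{} machinery'': for a free atom the automaton must simulate the word automaton of the regular labelling on the current stack content by popping it, which is one of the genuinely new ingredients of the pushdown setting.
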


Before proving this result in Section~\ref{sec-proof}, we introduce a new subclass of alternating pushdown tree automata, and
show that it is stable under the operations that are required for the
automata construction on which the proof relies. 

\section{A subclass of pushdown tree automata}
\label{sec-direction-guided}

% \subsection{Overview and main novelties}
% \label{sec-new}

{In this section we present the class of direction-guided pushdown tree
automata, a subclass of alternating pushdown tree automata that is
decidable and stable under the operations needed to generalise the construction
  from~\cite{DBLP:conf/lics/BerthonMMRV17} to the case of pushdown
  systems.}

% Because our automata are guided by the last component of the directions
% $\setstates\times\stacka^*\cdot\stackb$, and because the stack is
% visible to everyone, this last component is never removed by narrowing and 
% our automata always remain direction guided, and thus semi-alternating.

\subsection{Alternating pushdown tree automata}
\label{sec-tree-automata}

%\oscomment{Regarding extension to higher-order this is just a matter here of updating the definition. It is a bit heavy as one needs to introduce higher-order stacks as well as changing a bit the notations because we need in the transition to explicitly mention the stack operation (not only by what we rewrite the top symbol). But this is only a technical point.}
  We  recall alternating pushdown parity tree automata~\cite{ladner1984alternating,kupferman2002pushdown}.
  Because it is sufficient for our needs and simplifies definitions,
  we assume that all input trees are complete trees.
  
  For a set $Z$, $\boolp(Z)$ is the set of formulas built from the
  elements of $Z$ as atomic propositions using the connectives $\ou$
  and $\et$, and with $\top,\perp\, \in \boolp(Z)$.  For $\APf$ a
  finite set of atomic propositions and $\Dirtree$ a finite set of
  directions, an \defin{alternating pushdown tree automaton
    (\APTA) on $(\APf,\Dirtree)$-trees} is a tuple
  $\auto=(\stacka,\tQ,\tdelta,\tq_{\init},\couleur)$ where $\stacka$
  is a finite stack alphabet with a special bottom symbol $\stackb$,
  $Q$ is a finite set of states, $\tq_{\init}\in \tQ$ is an initial
  state,
  $\tdelta : \tQ\times 2^{\APf}\times \stacka \rightarrow
  \boolp(\Dirtree\times \tQ\times\stacka^*)$ is a transition function (that never pushes nor removes the bottom symbol $\stackb$),
  and $\couleur:\tQ\to \setn$ is a colouring function.  Atoms in
  $\boolp(\Dirtree\times\tQ\times\stacka^*)$ are written between
  brackets, such as $[x,\tq,\stackc]$.
A \defin{nondeterministic pushdown tree
    automaton (\NPTA)} is an alternating pushdown tree automaton
  $\nauto=(\stacka,\tQ,\tdelta,\tq_{\init},\couleur)$ such that for
  every $\tq\in \tQ$, $a\in 2^{\APf}$ and  $\stacks\in\stacka$,
  $\tdelta(\tq,a,\stacks)$ is written in disjunctive normal form and
  for every direction $\dir\in \Dirtree$, each disjunct contains
  exactly one element of $\{\dir\}\times Q\times \stacka^*$.

  We define acceptance of a tree by an \APTA in a given initial node
  and a given initial stack content via a two-player (Eve and Adam)
  turn-based perfect-information parity game. Due to space
  constraints, we do not give a formal definition of parity games but
  we refer the reader to,
  e.g.,~\cite{DBLP:journals/tcs/Zielonka98,LNCS2500} for definitions
  and classical concepts such as strategies and winning positions.  Let
  $\SPTA=(\stacka,\tQ,\tdelta,\tq_\init,\couleur)$ be an \APTA over
  $(\APf,\Dirtree)$-trees, let $\ltree=(\tree,\lab)$ be such a tree,
  let $\noeud_\init \in \tree$ be a starting node and let
  $\stackc_\init\in\stacka^*$ be an initial stack content.  We define
  the parity game $\tgame{\SPTA}{\ltree}{\noeud_\init,\stackc_\init}$
  %that is played on the graph $(\setpos,\moves)$ starting from
  %position $\pos_\init$ where %$\couleur')$:
  %the set of positions is
  whose set of positions is
  %$\setpos=\tree\times \tQ\times \stacka^* \times \boolp
  $\tree\times \tQ\times \stacka^* \times \boolp
  (\Dirtree\times \tQ \times \stacka^*)$, and the initial position is
  $\pos_{\init}=(\noeud_\init,\tq_\init,\stackc_\init,\tdelta(\tq_\init,\stacks,\lab(\noeud_\init)))$,
  where $\stacks\in\stacka$ is the top symbol of $\stackc_\init$.  A
  position $(\noeud,\tq,\stackc,\pform)$ belongs to Eve if $\pform$ is
  of the form $\pform_1\vee \pform_2$, otherwise it belongs to Adam
  (note that if $\pform$ is of the form $[\dir,\tq',\stackc']$ then
  there is no choice to be made).  Moves in
  $\tgame{\tauto}{\ltree}{\noeud_\init,\stackc_\init}$ are defined by
  the following rules:
%   \[
%     \begin{array}{l}
%  (\noeud,\tq,\stackc,\pform_1 \;\op\; \pform_2) \move (\noeud,\tq,\stackc,\pform_i) \\[1pt]
% \multicolumn{1}{r}{\mbox{where } 
%  \op \in\{\vee,\wedge\} \mbox{ and } i\in\{1,2\},}  \\[3pt]
%      (\noeud,\tq,\stacks\cdot\stackc,[\dir,\tq',\stackc']) \move (\noeud\cdot
%             \dir,\tq',\stackc'\cdot\stackc,\tdelta(\tq',\lab(\noeud\cdot\dir),\stacks')) \\[1pt]
% \multicolumn{1}{r}{\mbox{where }\stacks' \mbox{ is the top  of }\stackc'\cdot\stackc}
%     \end{array}
    %     \]
    \begin{align*}
 (\noeud,\tq,\stackc,\pform_1 \;\op\; \pform_2) & \move (\noeud,\tq,\stackc,\pform_i) 
 &&\mbox{where } 
 \op \in\{\vee,\wedge\} \mbox{ and } i\in\{1,2\},  \\[3pt]
     (\noeud,\tq,\stacks\cdot\stackc,[\dir,\tq',\stackc']) & \move (\noeud\cdot
            \dir,\tq',\stackc'\cdot\stackc,\tdelta(\tq',\lab(\noeud\cdot\dir),\stacks')) 
 &&\mbox{where }\stacks' \mbox{ is the top  of }\stackc'\cdot\stackc
    \end{align*}

\ie Eve resolves existential/disjunctive choices in the formula while
Adam resolves  universal/conjunctive choices. % and  simulates the transition of $\SPTA$ when ending to an atomic proposition

Positions of the form $(\noeud,\tq,\stackc,\top)$ and
$(\noeud,\tq,\stackc,\perp)$ are deadlocks, winning for Eve and Adam
respectively.  Finally, the colouring function (used to define the
parity condition) is
$\couleur'(\noeud,\tq,\stackc,\pform)=\couleur(\tq)$.

A pointed tree $(\ltree,\noeud)$
is \defin{accepted} by $\tauto$ with initial stack content
$\stackc$ if Eve has a winning strategy in
$\tgame{\tauto}{\ltree}{\noeud,\stackc}$, \ie she has a way of
playing such that whatever the choices of Adam are, the resulting play
either ends up in a winning deadlock for her or is such that the
largest colour  visited infinitely often is even.
% \bmcomment{maybe useless to parameterise with a starting node}
We also let   $\lang(\tauto,\stackc)$ be the set of pointed trees accepted by
$\tauto$ with initial stack content $\stackc$.
% The \defin{language}
% $\lang(\APTA,\stackc)$ is the set of trees that are accepted by
% $\APTA$ from the root with $\stackc$ as initial stack content, and we
% let $\lang(\APTA)=\lang(\APTA,\stackb)$.
% \bmcomment{check what is used}\oscomment{You do not define what it means for a pointed tree to be accepted. Do you mean that the run starts in the node that is pointed?}

%\begin{remark}
%  \label{rem-acc-games}
%  The arenas of the acceptance games defined above differ from
%those defined in Section~\ref{sec-def-games}, where players strictly
%alternate and Eve chooses an action while Adam chooses a next position
%reachable via transition labelled with this action. Here instead
%positions belong either to Eve or Adam, and the player who owns the
%current position chooses the next one, following the transition
%relation. A strategy is thus a function mapping partial histories to
%positions. % In some proofs we will use the notion of \defin{reduced
%  % strategy}, which is a partial function
%\end{remark}

{Finally, classic \ATAs and \NTAs (without pushdown store)
  are obtained by removing from the above definitions all components
  referring to the pushdown store. For instance, an \ATA is a tuple
  $\auto=(\tQ,\tdelta,\tq_{\init},\couleur)$ with a transition function
 of type $\tdelta : \tQ\times 2^{\APf} \rightarrow
  \boolp(\Dirtree\times \tQ)$.}
   
\subsection{Direction-guided pushdown tree automata}
\label{subsec-direction-guided}

%\oscomment{Regarding extension to higher-order this is again a simple update of the definition.}

{We recall how semi-alternating pushdown tree automata are
defined by constraining
the  behaviour of the stack in \APTAs~\cite{aminof2013pushdown}, and then we constrain it further by
letting stack operations depend only on precise components of directions taken in the
input tree. We call the resulting class of automata \emph{direction-guided} pushdown tree automata.}

  A \defin{semi-alternating pushdown tree automaton (\SPTA)} is an
  \APTA such
  that for all $\tq_1,\tq_2\in\tQ$, $\stacks\in\stacka$ and
  $a\in 2^\APf$, if $[x,\tq_1',\stackc_1]$ appears in
  $\tdelta(\tq_1,a,\stacks)$ and $[x,\tq_2',\stackc_2]$ appears in
  $\tdelta(\tq_2,a,\stacks)$, then $\stackc_1=\stackc_2$: whenever two
  copies of the automaton, possibly in different states, read the same
  input with the same symbol on the top of the stack, and move in the
  same direction,  they must push the same thing on the stack.
{The transition function of an \SPTA can be split
into a \emph{state transition function} $\tdeltastate: \tQ \times 2^{\APf}\times \stacka \rightarrow
  \boolp(\Dirtree\times \tQ)$ and a \emph{stack update
    function} $\tdeltastack: 2^{\APf}\times\stacka\times\Dirtree\to \stacka^*$ such that for all
  $(\tq,a,\stacks)\in\tQ\times 2^{\APf}\times \stacka$ we have
  $\tdelta(\tq,a,\stacks)=\tdeltastate(\tq,a,\stacks)$, in which each
  $[\dir,\tq']$ is replaced with
  $[\dir,\tq',\tdeltastack(a,\stacks,x)]$
  (see~\cite{aminof2013pushdown} for details).}

{We now refine this definition to capture semi-alternating
  automata whose stack operations do not depend on the whole directions,
but only on precise components of the directions.}

\begin{defi}
  \label{def-guided}
{An \APTA $\auto=(\stacka,\tQ,\tdelta,\tq_{\init},\couleur)$ over $\Dirtree\times \Dirtreea$-trees has an
  \defin{$\Dirtree$-guided stack}, or simply is \defin{$\Dirtree$-guided}, if there exists a
  function $\tdeltastack: 2^{\APf}\times\stacka\times\Dirtree\to \stacka^*$ such that for all
  $(\tq,a,\stacks)\in\tQ\times 2^{\APf}\times \stacka$, all atoms
  appearing in
  $\tdelta(\tq,a,\stacks)$ are of the form $[(\dir,\dira),\tq',\tdeltastack(a,\stacks,x)]$.}
\end{defi}
Note that $\Dirtree$-guided
  \APTAs are semi-alternating.

%   In
%   our construction, automata will read succinct unfoldings of \PCKSs
%   (see Section~\ref{sec-succinct-unfolding}) with some
%   components hidden to capture imperfect information on the
%   state. Because  the stack is always visible, we
%   will thus work with
%  trees over sets of directions of the form
%   $\prod_{i\in I}\setlstates_{i}\times \Dirstack$, where
%   $\setlstates_i$ are sets of local states, $I\subseteq [n]$ indicates
%   which local states are observed, and $\Dirstack$ is the set of stack
%   operations in the transition relation of
%   the given \PCKS. The stack of
%   the automaton is used to simulate that of the \PCKS, and thus in
%   each transition the word written on the stack is totally determined
%   by the direction alone, and more precisely by its last component:
%   when moving in direction $((\lstate_i)_{i\in I},\stackc)$, $\stackc$
%   is pushed on the stack. 
%   {In addition, every time  we will launch an automaton in a
%     node $\noeud$ of an input tree, it will be initialised with the stack
%     content corresponding to the sequence of stack operations from
%     the root to $\noeud$.
% As a result, the stack content of an automaton will be
%   determined by the node that it visits.}

%  \todo{example}
%that  we present in the next section.

%\oscomment{I rewrote that assuming no definition of games where given before}

We will need three operations on tree automata: projection, to guess
valuations of atomic propositions, simulation, because
projection is defined only for nondeterministic automata, and
narrowing, to deal with imperfect information by hiding components of
 directions.
{It
 was established in~\cite{aminof2013pushdown} that \SPTAs can be
 nondeterminised, and we show that the projection and narrowing
 operations on classic tree automata can be
easily extended to pushdown tree automata.
% so can $\Dirtree$-guided pushdown tree
% automata.
We then notice that the simulation procedure presented
in~\cite{aminof2013pushdown} preserves $\Dirtree$-guidedness, and so
does projection, as well as narrowing if the $\Dirtree$ component is not
erased by the operation.}

\subsection{Projection}
\label{sec-projection}

Projection is defined in~\cite{rabin1969decidability} for classic
nondeterministic tree automata. {The construction is simple: the
automaton projected on atom $p\in\APf$ guesses, in every node of its
input, a valuation for $p$ in this node, and proceeds accordingly.  This construction is correct
because nondeterministic automata only visit each node at most
once. The construction and the proof of
correctness are indifferent to the pushdown aspect, so that we have
the following result. % (proved in Appendix~\ref{app-proj}):}

\begin{prop}%[Projection]
  \label{theo-projection}
  Given an \NPTA $\NTA$  and  $p\in\APf$, one can build  an \NPTA
  $\proj{\NTA}$ % working on $(\APf\setminus \{p\},\Dirtree)$-trees
 % of same
 %  size and index
  such that for every pointed tree $(\ltree,\noeud)$ and initial stack
  content $\stackc_\init\in\stacka^*\cdot\stackb$,
%  \begin{align*}
 \[   (\ltree{,\noeud})\in\, \lang(\proj{\NTA},\stackc_\init) \mbox{\bigiff}
    \exists \plab \mbox{ a\, $p$-\labeling for $\ltree$
  s.t. }(\ltree\prodlab\plab,\noeud)\in\lang(\NTA,\stackc_\init) \]    
%  \end{align*}
\end{prop}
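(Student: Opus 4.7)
The projected automaton keeps the same stack alphabet, state set, initial state and colouring function as $\NTA=(\stacka,\tQ,\tdelta,\tq_\init,\couleur)$, and modifies only the transition function so that, at each input node, it nondeterministically guesses whether $p$ holds there. Concretely, I would take $\proj{\NTA}\egdef (\stacka,\tQ,\tdelta',\tq_\init,\couleur)$ with
\[
\tdelta'(\tq,a,\stacks) \egdef \tdelta(\tq, a\setminus\{p\},\stacks) \;\vee\; \tdelta(\tq, a\cup\{p\},\stacks).
\]
Since each of the two right-hand side expressions is already in the \NPTA normal form (a disjunction of conjunctions in which, for every direction $\dir\in\Dirtree$, exactly one atom $[\dir,\tq',\stackc']$ occurs per disjunct), their disjunction is as well, and $\proj{\NTA}$ is an \NPTA.

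For correctness I would work directly with the acceptance games. For the $(\Leftarrow)$ direction, given a $p$-\labeling $\plab$ of $\ltree$ and a winning strategy for Eve in $\tgame{\NTA}{\ltree\prodlab\plab}{\noeud,\stackc_\init}$, she wins $\tgame{\proj{\NTA}}{\ltree}{\noeud,\stackc_\init}$ by resolving the extra top-level disjunction at each visited node $\noeudp$ in favour of $\tdelta(\tq,\lab(\noeudp)\cup\{p\},\stacks)$ when $\plab(\noeudp)=1$ and of $\tdelta(\tq,\lab(\noeudp)\setminus\{p\},\stacks)$ otherwise, and then replaying her original strategy. After this initial choice the two games coincide, so the parity condition is preserved.

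The $(\Rightarrow)$ direction is the dual and rests on one crucial observation. Given a winning strategy $\sigma$ of Eve in $\tgame{\proj{\NTA}}{\ltree}{\noeud,\stackc_\init}$, I would define $\plab(\noeudp)\egdef 1$ when, in the unique $\sigma$-play that visits node $\noeudp$, Eve resolves the top-level disjunction in favour of $\tdelta(\tq,\lab(\noeudp)\cup\{p\},\stacks)$, and $0$ otherwise (the value being chosen arbitrarily on unvisited nodes). The uniqueness of such a play is precisely what makes $\plab$ well defined: because $\proj{\NTA}$ is nondeterministic, every disjunct of $\tdelta'$ sends exactly one copy per direction, so each node of $\tree$ is reached by at most one copy of the automaton. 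The residual strategy, in which Eve skips her now-forced first choice, is then winning in $\tgame{\NTA}{\ltree\prodlab\plab}{\noeud,\stackc_\init}$.

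\textbf{Main point of attention.} There is no deep obstacle here: this is essentially Rabin's classical projection construction, oblivious to the pushdown store. The only delicate point is the well-definedness of $\plab$ in the $(\Rightarrow)$ direction, and it is exactly this step that explains why projection is defined only for nondeterministic automata, and why simulation (nondeterminisation) must precede projection in the overall construction of Section~\ref{sec-proof}.
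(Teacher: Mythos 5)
Your construction and correctness argument coincide with the paper's: the same disjunctive modification of the transition function, the same strategy-transfer argument in both directions, and the same appeal to nondeterminism to ensure that each node is visited by at most one copy, which makes the extracted $p$-labelling well defined. The proposal is correct and essentially identical to the paper's proof.
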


\begin{proof}
Let $\nauto=(\stacka,\tQ,\tdelta,\tq_{\init},\couleur)$  be a
nondeterministic pushdown tree automaton, and let
$\proj{\NTA}=(\stacka,\tQ,\tdelta',\tq_{\init},\couleur)$ where for
all $(\tq,a,\stacks)\in\tQ\times 2^{\APf\setminus \{p\}}\times
\stacka$,
\[\tdelta'(\tq,a,\stacks)=\tdelta(\tq,a\setminus \{p\},\stacks) \vee
  \tdelta(\tq,a\cup \{p\},\stacks).\]

To see that this construction is correct, fix a pointed tree
$(\ltree,\noeud)$, and assume first that there
exists a $p$-\labeling $\plab$ for $\ltree$
such that $(\ltree\prodlab\plab,\noeud)\in\lang(\NTA,\stackc_\init)$, \ie
Eve has a winning strategy $\strat$ in the acceptance game
$\tgame{\NTA}{\ltree\prodlab\plab}{\noeud,\stackc_\init}$. Observe that
$\tgame{\proj{\NTA}}{\ltree}{\noeud,\stackc_\init}$ is essentially the same game,
except that Eve has additional choices to make: everytime a new node
$\noeuda$ is reached, Eve has to choose between
$\tdelta(\tq,a\setminus \{p\},\stacks)$ and
$\tdelta(\tq,a\cup \{p\},\stacks)$. A winning strategy for Eve in
this game is obtained by
letting her choose $\tdelta(\tq,a\cup \{p\},\stacks)$ if $\plab(\noeud)=1$, $\tdelta(\tq,a\setminus \{p\},\stacks)$
otherwise, and all her remaining choices follow $\strat$. In other
words, Eve guesses the $p$-\labeling $\plab$ and otherwise behaves
as in $\tgame{\NTA}{\ltree\prodlab\plab}{\noeud,\stackc_\init}$.

Now assume that $\proj{\NTA}$ accepts $(\ltree=(\tree,\lab),\noeud)$,
and let $\strat$ be a winning strategy for Eve in
$\tgame{\proj{\NTA}}{\ltree}{\noeud,\stackc_\init}$.  Since $\NTA$ is
nondeterministic, by construction $\proj{\NTA}$ is also 
nondeterministic and thus each node of $\ltree$ is visited exactly
once in the outcomes of $\strat$. More precisely, for each node
$\noeuda\in\ltree$ that is below $\noeud$, there is a unique position
of the form
$(\noeuda,\tq,\stacks\cdot\stackc,\tdelta'(\tq,\lab(\noeuda),\stacks))$ that
can be reached while Eve follows strategy $\strat$. In addition, by
definition of $\proj{\NTA}$, we have that
\[\tdelta'(\tq,\lab(\noeuda),\stacks)=\tdelta(\tq,\lab(\noeuda)\setminus \{p\},\stacks)\vee
\tdelta(\tq,\lab(\noeuda)\cup \{p\},\stacks).\]  We can thus define the
$p$-\labeling 
\[\plab:\noeuda\mapsto
  \begin{cases}
    0 & \mbox{if }\strat \mbox{ chooses the first disjunct,}\\
    1 & \mbox{otherwise.}
  \end{cases}
\]
It is then not hard to see that $\strat$ induces a winning strategy
for Eve in $\tgame{\NTA}{\ltree\prodlab\plab}{\noeud,\stackc_\init}$.
\end{proof}

\subsection{Simulation}
\label{sec-simulation}

It is proved in~\cite{aminof2013pushdown} that, unlike alternating
pushdown tree automata, semi-alternating ones can be nondeterminised.

\begin{thm}[\cite{aminof2013pushdown}]
\label{theo-simulation}
Given an \SPTA $\SPTA$, one can build % in exponential time
an \NPTA $\NPTA$ % of exponential size
 % and linear index
such that for every initial stack content $\stackc\in\stacka^*\cdot\stackb$, \[\lang(\NPTA,\stackc)=\lang(\SPTA,\stackc).\] 
\end{thm}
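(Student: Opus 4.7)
The plan is to adapt the classical alternation-elimination for parity tree automata (à la Muller--Schupp or Safra) to the pushdown setting, and observe that the semi-alternating restriction is exactly what is needed to make the pushdown store behave consistently. Recall that in the standard construction over \ATAs, the \NTA maintains at each node of its input a Safra-style data structure over the state set of the \ATA, tracking the set of states in which a run of the alternating automaton could simultaneously be, together with auxiliary bookkeeping that lets the original parity condition be recast as a parity condition on the resulting \NTA. Our task is to replay this construction while simultaneously maintaining a single pushdown store that faithfully simulates the stacks of all the alternating copies at once.

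The key invariant I would establish first is the following: in any play of the acceptance game $\tgame{\SPTA}{\ltree}{\noeud_\init,\stackc_\init}$, all copies of $\SPTA$ that are simultaneously active at a given node $\noeud$ of $\ltree$ have the exact same stack content. The proof is by induction on the depth of $\noeud$. The base case is immediate, since every play begins at $\noeud_\init$ with stack $\stackc_\init$. For the inductive step, consider two copies at $\noeud$ with common stack $\stacks\cdot\stackc$; both consult $\tdelta(\cdot,\lab(\noeud),\stacks)$, and by the semi-alternating condition, any two atoms $[\dir,\tq_1',\stackc_1]$ and $[\dir,\tq_2',\stackc_2]$ appearing in these two transition formulas have $\stackc_1=\stackc_2$, so the two copies that move in the common direction $\dir$ end up at $\noeud\cdot\dir$ with the same stack $\stackc_i\cdot\stackc$. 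Equivalently, using the decomposition $\tdelta=(\tdeltastate,\tdeltastack)$, the stack update $\tdeltastack(\lab(\noeud),\stacks,\dir)$ depends only on the node and the direction, not on the state.

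With this invariant in hand, the \NPTA $\NPTA$ is constructed as follows. Its states are Safra-style trees (or Muller--Schupp structures) whose nodes are labelled by subsets of $Q$ together with the classical parity bookkeeping, and its stack alphabet is $\stacka$. At a node $\noeud$ of the input with top symbol $\stacks$ and labelling $a=\lab(\noeud)$, $\NPTA$ nondeterministically resolves Eve's choices in $\tdeltastate(\tq,a,\stacks)$ for every $\tq$ appearing in the current Safra tree, aggregates the resulting atoms direction by direction to compute the successor Safra tree in each direction $\dir$, and---crucially---pushes $\tdeltastack(a,\stacks,\dir)$ on the stack when moving in direction $\dir$. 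The stack operation is well-defined because $\tdeltastack$ does not depend on the state, which is exactly the content of the invariant. The new colouring on Safra trees is defined in the standard way so as to recover a parity condition equivalent to that induced by $\couleur$.

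Correctness follows the usual pattern: accepting runs of $\NPTA$ on $(\ltree,\noeud)$ are in bijection with winning strategies for Eve in $\tgame{\SPTA}{\ltree}{\noeud,\stackc_\init}$, with the bijection preserving the initial stack $\stackc_\init$; this gives the desired equality $\lang(\NPTA,\stackc)=\lang(\SPTA,\stackc)$ for every $\stackc\in\stacka^*\cdot\stackb$. The main subtlety is, as in the finite-state case, verifying that the Safra/Muller--Schupp parity encoding faithfully translates the acceptance condition; the pushdown aspect adds no new difficulty once the stack invariant is in place, and indeed the detailed construction is carried out in~\cite{aminof2013pushdown}, which we invoke directly.
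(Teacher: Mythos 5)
Your proof is correct and rests on exactly the same key observation as the construction in~\cite{aminof2013pushdown} that the paper summarises in the proof of Proposition~\ref{prop-stable-simulation}: in any play of the acceptance game, all copies of the \SPTA sitting at the same input node share the same stack content, so the stack update depends only on the node label, the top symbol and the direction (your $\tdeltastack$), not on the state. Where you differ is in how the construction is packaged. The paper factors the stack out entirely: it decorates each node of the input tree with the top-of-stack symbol (well defined precisely by the invariant), builds a classic stackless \ATA over decorated trees, invokes the standard simulation theorem as a black box to get an \NTA, and only then re-attaches the stack by inserting $\tdeltastack(a,\stacks,\dir)$ into every atom. You instead run the Safra/Muller--Schupp construction directly on the pushdown automaton, carrying the stack along. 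The two routes produce essentially the same automaton, but the paper's factoring buys something concrete: correctness of the parity bookkeeping is inherited verbatim from the finite-state theorem and never has to be re-examined in the presence of a stack, whereas your direct construction obliges you to re-verify that the Safra encoding still faithfully translates the acceptance condition --- the very step you flag as ``the main subtlety.'' Since you ultimately defer that verification to~\cite{aminof2013pushdown} anyway, nothing is missing, but the decorated-tree reduction is the cleaner way to discharge it.
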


We observe that the construction in~\cite{aminof2013pushdown}
for the simulation of \SPTAs preserves $\Dirtree$-guidedness, and thus
we can refine the above result as follows:

\begin{prop}
  \label{prop-stable-simulation}
Given an $\Dirtree$-guided \SPTA $\SPTA$, one can build an
$\Dirtree$-guided \NPTA $\NPTA$ % of exponential size
 % and linear index
such that for every initial stack content $\stackc\in\stacka^*\cdot\stackb$, $\lang(\NPTA,\stackc)=\lang(\SPTA,\stackc)$.   
\end{prop}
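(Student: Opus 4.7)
The plan is to revisit the simulation construction of Theorem~\ref{theo-simulation} from~\cite{aminof2013pushdown} and observe that it preserves $\Dirtree$-guidedness as a structural by-product, so that no new construction is required. Recall that the starting point of that construction is precisely the decomposition of an \SPTA's transition function into a \emph{state transition function} $\tdeltastate$ and a \emph{stack update function} $\tdeltastack : 2^{\APf} \times \stacka \times (\Dirtree \times \Dirtreea) \to \stacka^*$, as already recalled in Section~\ref{subsec-direction-guided}. The nondeterministic simulation then performs a Miyano--Hayashi-style subset expansion on the state component (suitably refined to accommodate the parity condition), while \emph{leaving the stack update function untouched}: whenever the output \NPTA sends a copy in direction $(\dir,\dira)$ reading letter $a$ with top symbol $\stacks$, it pushes exactly $\tdeltastack(a,\stacks,(\dir,\dira))$ on the stack. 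This works precisely because semi-alternation ensures that all alternating copies that the original \SPTA would send in that direction agreed on the stack update, so a single shared stack suffices in the nondeterministic version.

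First I would unfold the construction of~\cite{aminof2013pushdown} just enough to make explicit that the stack update function of the resulting \NPTA coincides with that of the input $\SPTA$. Then, under the assumption that $\SPTA$ is $\Dirtree$-guided, there is by Definition~\ref{def-guided} a function $g : 2^{\APf} \times \stacka \times \Dirtree \to \stacka^*$ such that $\tdeltastack(a,\stacks,(\dir,\dira)) = g(a,\stacks,\dir)$ for every $(a,\stacks,\dir,\dira)$. The very same $g$ then witnesses $\Dirtree$-guidedness of the produced \NPTA: every atom appearing in one of its transitions has the form $[(\dir,\dira),\tq',g(a,\stacks,\dir)]$. The language equality $\lang(\NPTA,\stackc) = \lang(\SPTA,\stackc)$ for every $\stackc \in \stacka^* \cdot \stackb$ is given directly by Theorem~\ref{theo-simulation} and requires no further work.

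The only real obstacle is expository: it is to pinpoint, within the construction of~\cite{aminof2013pushdown}, the statement that the stack update survives nondeterminisation verbatim. Since the whole point of semi-alternation was to enable a nondeterminisation procedure that maintains a single stack whose operations depend only on the direction, the top symbol and the input letter, this invariant is in fact baked into the construction; the verification amounts to consulting the transition rules of the simulating \NPTA and observing that their stack-update component is simply inherited from that of $\SPTA$, while only the state component is restructured.
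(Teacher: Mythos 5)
Your proposal is correct and follows essentially the same route as the paper: both arguments inspect the nondeterminisation of~\cite{aminof2013pushdown}, note that only the state component is restructured while the stack update function $\tdeltastack$ is reattached verbatim to the transitions of the resulting \NPTA, and conclude that the witness of $\Dirtree$-guidedness of $\SPTA$ also witnesses it for $\NPTA$. The paper phrases the intermediate step via a stack-decorated input tree and a classic \ATA-to-\NTA simulation rather than your subset-construction gloss, but this is an incidental difference of presentation, not of substance.
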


%\oscomment{Moving to higher-order leads the same proof}
\begin{proof}
% We first recall the construction
% from~\cite{aminof2013pushdown}.
Let $\SPTA=(\stacka,\tQ,\tdelta,\tq_{\init},\couleur)$ be an
 \SPTA over $\Dirtree$-trees, and
  let $\tdeltastack: 2^{\APf}\times\stacka\times\Dirtree\to \stacka^*$
  be its stack update function.  The construction
from~\cite{aminof2013pushdown} goes as follows. First, they observe
  that an \SPTA $\SPTA$ induces, for every input tree $\ltree$, a
  decorated version $\ltree'$ of $\ltree$ where the label of each node
  is enriched with the top symbol of the  automaton'stack when
  it visits that node. This is well-defined because the automaton is
  semi-alternating. One can then build a classic \ATA $\tilde \ATA$
  (without pushdown stack) such that $\SPTA$ accepts $\ltree$ if and
  only if $\tilde \ATA$ accepts $\ltree'$. With a classic simulation
  procedure, one then obtains an \NTA
  $\tilde \NTA=(\tQ,\tdelta',\tq'_{\init},\couleur')$ equivalent to
  $\tilde \ATA$. It remains to define the \NPTA
  $\NPTA=(\stacka,\tQ',\tdelta'',\tq'_{\init},\couleur')$ where
  $\tdelta''(\tq,a,\stacks)$ is obtained from
   $\tdelta'(\tq,(a,\stacks))$ by replacing every $[\dir,\tq']$ with
   $[\dir,\tq',\tdeltastack(a,\stacks,\dir)]$.

Now, if the initial automaton $\SPTA$ works on
$\Dirtree\times\Dirtreea$-trees and is $\Dirtree$-guided, by
definition its stack update function $\tdeltastack$ does not
depend on the $\Dirtreea$-components. By the above construction, it is
also the case of the final \NPTA:   $\tdelta''(\tq,a,\stacks)$ is now obtained from
   $\tdelta'(\tq,(a,\stacks))$ by replacing every $[(\dir,\dira),\tq']$ with
   $[(\dir,\dira),\tq',\tdeltastack(a,\stacks,\dir)]$.

Finally, one can see that all the above arguments generalise
easily to the case of automata starting in a given node $\noeud$ with
a given initial stack content $\stackc$.
\end{proof}

\subsection{Narrowing}
\label{sec-narrowing}

{For the last operation, we first recall the widening operation
on trees, defined in~\cite{kupferman1999church}: given two sets of
directions $\Dirtree$ and $\Dirtreea$, for every $\Dirtree$-tree
$\tree$ with root $\dir\in\Dirtree$ and every
$\dira\in\Dirtreea$ we define the \defin{$\Dirtreea$-widening
  of $\tree$ rooted in $(\dir,\dira)$} as the
$\Dirtree\times\Dirtreea$-tree
\[\liftI[\Dirtree\times\Dirtreea]{\dira}{\tree}\egdef\{\noeud
\in (\dir,\dira)\cdot (\Dirtree\times\Dirtreea)^*
\mid \projI[\Dirtree]{\noeud}\in\tree\}.\]

Also, for an
$(\APf,\Dirtree)$-tree $\ltree=(\tree,\lab)$ % with root $\dir$
and an element $\dira\in\Dirtreea$, we let
\[\liftI[\Dirtree\times\Dirtreea]{\dira}{\ltree}\egdef
(\liftI[\Dirtree\times\Dirtreea]{\dira}{\tree},\lab'), \mbox{
  where }\lab'(\noeud)\egdef \lab(\projI[\Dirtree]{\noeud}).\]

We may write simply $\liftI[\Dirtree\times\Dirtreea]{}{\tree}$ and $\liftI[\Dirtree\times\Dirtreea]{}{\ltree}$ when the
choice of $\dira$ does not matter or is understood.
In particular, when referring to \emph{pointed} widenings of trees
  such as $(\liftI[\Dirtree\times\Dirtreea]{}{\ltree},\noeud)$, the
choice of the root is determined by $\noeud$: more precisely,
$\dira$ is taken to be the $\Dirtreea$-component
of the first direction in $\noeud$.}

 We now generalise the narrowing
 operation~\cite{DBLP:conf/lics/BerthonMMRV17} to
 the case of \SPTAs. 
The  idea behind this narrowing operation  is
that,  if one just observes
 $\Dirtree$,  uniform  $p$-labellings on 
$\Dirtree\times\Dirtreea$-trees  can be obtained by choosing the
labellings on $\Dirtree$-trees, and then lifting them to $\Dirtree\times\Dirtreea$-trees.

The construction and proof of correctness are straightforwardly
adapted from those
in~\cite{kupferman1999church} for \ATAs. % We recall the construction as
% it shows why Proposition~\ref{prop-narrow-stable} below holds, and the
% proof of correctness can be found in Appendix~\ref{app-narrow}.
% finite-state alternating tree
% automata, 

% \newcounter{theo-narrow}
% \setcounter{theo-narrow}{\value{theorem}}
% \begin{theorem}
%   \label{theo-narrow}
% For $J\subseteq I \subseteq [n]$, given an \APTA $\SPTA$ on $\Dirtreei$-trees one can build 
% an  \APTA ${\narrow[J]{\APTA}}$ on $\Dirtreei[J]$-trees such
%   that for every pointed $(\APf,\Dirtreei[J])$-tree $(\ltree,\noeud)$,
%   every $\noeud'\in\Dirtreei^+$ such that $\projI[J]{\noeud'}=\noeud$,
%   and every initial stack content $\stackc_\init\in\stacka^*\cdot\stackb$, % rooted in $\dir_J$ and
%   % $\dir_I\in\SI$ such that $\projI[J]{\dir_I}=\dir_J$,
%   % ${\narrow[\Dirtree]{\ATA}}$ has the same size as $\ATA$, and
% %  for all $\dira\in\Dirtreea$,
% \[(\ltree,\noeud)\in\lang(\narrow[J]{\APTA},\stackc_\init) \mbox{ iff
%  }(\liftI{}{\ltree},\noeud')\in\lang(\APTA,\stackc_\init).\]
% \end{theorem}

% \newcounter{theo-narrow}
% \setcounter{theo-narrow}{\value{theorem}}
\begin{thm}[Narrowing]
  \label{theo-narrow}
Given an \APTA $\SPTA$ on $\Dirtree\times\Dirtreea$-trees, one can build 
an  \APTA ${\narrow[\Dirtree]{\APTA}}$ on $\Dirtree$-trees such
  that for every pointed $(\APf,\Dirtree)$-tree $(\ltree,\noeud)$,
  every $\noeud'\in(\Dirtree\times\Dirtreea)^+$ such that $\projI[\Dirtree]{\noeud'}=\noeud$,
  and every initial stack content $\stackc_\init\in\stacka^*\cdot\stackb$, % rooted in $\dir_J$ and
  % $\dir_I\in\SI$ such that $\projI[J]{\dir_I}=\dir_J$,
  % ${\narrow[\Dirtree]{\ATA}}$ has the same size as $\ATA$, and
%  for all $\dira\in\Dirtreea$,
\[(\ltree,\noeud)\in\lang(\narrow[\Dirtree]{\APTA},\stackc_\init) \mbox{ iff
 }(\liftI[\Dirtree\times\Dirtreea]{}{\ltree},\noeud')\in\lang(\APTA,\stackc_\init).\]
\end{thm}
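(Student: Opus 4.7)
The plan is to define $\narrow[\Dirtree]{\APTA}$ by a purely syntactic transformation of the transition function, and then to prove correctness by exhibiting a bisimulation between the two acceptance games. First I would set $\narrow[\Dirtree]{\APTA} = (\stacka, \tQ, \tdelta', \tq_\init, \couleur)$, keeping every component of $\APTA$ except the transition function, and defining $\tdelta'(\tq, a, \stacks)$ as the formula obtained from $\tdelta(\tq, a, \stacks)$ by replacing each atom $[(\dir,\dira),\tq',\stackc']$ with $[\dir,\tq',\stackc']$. This preserves the Boolean-tree structure, so every subformula $\pform$ appearing along an acceptance game of $\APTA$ has a canonical image $\pform'$ as a subformula of the corresponding narrowed formula, with the same connective shape.

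The key property used for correctness is the defining equation of widening: the label of any node $\noeuda'$ of $\liftI[\Dirtree\times\Dirtreea]{}{\ltree}$ equals $\lab(\projI[\Dirtree]{\noeuda'})$. I would introduce the relation $R$ between positions of $\tgame{\APTA}{\liftI[\Dirtree\times\Dirtreea]{}{\ltree}}{\noeud',\stackc_\init}$ and $\tgame{\narrow[\Dirtree]{\APTA}}{\ltree}{\noeud,\stackc_\init}$ defined by $R((\noeuda',\tq,\stackc,\pform),(\noeuda,\tq,\stackc,\pform'))$ iff $\projI[\Dirtree]{\noeuda'}=\noeuda$ and $\pform'$ is the canonical narrowing of $\pform$. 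Then I would check that $R$ pairs the initial positions (using the label identity above), preserves ownership (the top connective is unchanged by narrowing) and colours (which depend only on $\tq$), and is a two-sided simulation. The delicate case is the transition step: a widened move from $(\noeuda',\tq,\stacks\cdot\stackc,[(\dir,\dira),\tq',\stackc'])$ leads to $(\noeuda'\cdot(\dir,\dira),\tq',\stackc'\cdot\stackc,\tdelta(\tq',\lab(\noeuda'\cdot(\dir,\dira)),\stacks''))$, and the matching narrowed move from $(\noeuda,\tq,\stacks\cdot\stackc,[\dir,\tq',\stackc'])$ leads to $(\noeuda\cdot\dir,\tq',\stackc'\cdot\stackc,\tdelta'(\tq',\lab(\noeuda\cdot\dir),\stacks''))$; the two targets are $R$-related because labels agree across widening and $\tdelta'$ is the syntactic narrowing of $\tdelta$. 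The missing $\Dirtreea$-component needed to match a narrowed move on the widened side can be chosen arbitrarily, since the widening contains all such branches.

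Since parity games are memoryless-determined and colours coincide on $R$-related positions, the bisimulation yields that Eve wins one game iff she wins the other, establishing the claimed language equivalence. The only subtle point I expect is that $R$ is many-to-one: several widened nodes, possibly reached with distinct stacks along different $\Dirtreea$-histories, project to the same narrowed node. Unlike for semi-alternating automata, this is not a genuine obstacle here, because general alternating pushdown tree automata already allow multiple copies to sit at the same input node carrying distinct stacks, so the narrowed game faithfully reflects the widened one. It is precisely this flexibility that fails when one tries to keep narrowing within the semi-alternating class, which motivated the direction-guided refinement introduced earlier in this section.
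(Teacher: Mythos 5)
Your proposal is correct and follows essentially the same route as the paper: the automaton is obtained by the same syntactic replacement of atoms $[(\dir,\dira),\tq',\stackc']$ with $[\dir,\tq',\stackc']$, and your game bisimulation along the projection $\projI[\Dirtree]{\cdot}$ is just a repackaging of the paper's argument, which explicitly projects Eve's winning strategy in one direction and lifts it (choosing the $\Dirtreea$-components arbitrarily) in the other, using the same two facts you isolate, namely that widened nodes carry the labels of their projections and that states, stacks and colours are unchanged. Your closing remark on why the many-to-one correspondence is harmless for general \APTAs but breaks semi-alternation is also the correct diagnosis.
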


\begin{proof}
For a formula $\pform\in\boolp((\Dirtree\times\Dirtreea)\times \tQ\times\stacka^*)$, we let
$\projI[\Dirtree]{\pform}\,\in\boolp(\Dirtree\times \tQ\times\stacka^*)$ be the
formula obtained from $\pform$ by replacing each atom of the
form $[(\dir,\dira),\tq,\stackc]$ with atom $[\dir,\tq,\stackc]$.
We define the automaton
$\narrow[\Dirtree]{\APTA}=(\stacka,\tQ,\tdelta',\tq_{\init},\couleur)$ where for every $\tq\in\tQ$,
 $a\in 2^\APf$ and $\stacks\in\stacka$, $\tdelta'(\tq,a,\stacks)\egdef
 \projI[\Dirtree]{\tdelta(\tq,a,\stacks)}$. We now prove that this
 construction is correct.

Let $(\ltree,\noeud)$ be a pointed $(\APf,\Dirtree)$-tree, 
let  $\noeud'\in(\Dirtree\times\Dirtreea)^+$ be such that $\projI[\Dirtree]{\noeud'}=\noeud$,
and let $\stackc_\init\in\stacka^*\cdot\stackb$.
First, assume that $(\lift{}{\ltree},\noeud')\in\lang(\APTA,\stackc_\init)$. Let $\strat$ be a winning
 strategy for Eve in the acceptance game
$\tgame{\APTA}{\lift{}{\ltree}}{\noeud',\stackc_\init}$. % Since it is a
By projecting on $\Dirtree$ nodes and formulas in positions of a
play $\fplay$ in this
game, \ie by replacing each position $(\noeuda,\tq,\stackc,\pform)$ with
$(\projI[\Dirtree]{\noeuda},\tq,\stackc,\projI[\Dirtree]{\pform})$, we obtain a
play $\projI[\Dirtree]{\fplay}$ in $\tgame{\narrow[\Dirtree]{\APTA}}{\ltree}{\noeud,\stackc_\init}$.
Applying this projection to the set of outcomes of $\strat$, we obtain
a set of plays $\text{Out}$ in
$\tgame{\narrow[\Dirtree]{\APTA}}{\ltree}{\noeud,\stackc_\init}$ that is the
set of all outcomes of some strategy $\strat'$ for Eve (there are actually
infinitely many such $\strat'$, which differ only on partial plays
that are not prefixes of plays in $\text{Out}$). And because
the sequences of states, and thus of colours, are the same in the projected and
original plays, $\strat'$ is winning for Eve in $\tgame{\narrow[\Dirtree]{\APTA}}{\ltree}{\noeud,\stackc_\init}$.

Now assume that $(\ltree,\noeud)\in\lang(\narrow[\Dirtree]{\APTA},\stackc_\init)$ and we
show that $(\lift{}{\ltree},\noeud')\in\lang(\APTA,\stackc_\init)$. % Let
% $\dir_J\in \Dirtreei[\Dirtree]$ be the root of $\ltree=(\tree,\lab)$, and let
% $\dir_I\in\Dirtreei[I]$ such that $\projI[\Dirtree]{\dir_I}=\dir_J$.
There
exists a winning strategy $\strat$ for Eve in
$\game=\tgame{\narrow[\Dirtree]{\APTA}}{\ltree}{\noeud,\stackc_\init}$,
from which we define a winning strategy $\strat'$ for Eve in 
$\game'=\tgame{\APTA}{\lift{}{\ltree}}{\noeud',\stackc_\init}$.
Let $\fplay'$ be a partial play in $\game'$ in which it is Eve's turn
to play, \ie $\fplay'$ is of the form $\fplay''\cdot
(\noeuda',\tq',\stackc',\pform'_1\ou\pform'_2)$. Its projection on
$\Dirtree$ is thus of the form
$\projI[\Dirtree]{\fplay'}=\projI[\Dirtree]{\fplay''}\cdot
(\noeuda,\tq,\stackc,\pform_1\ou\pform_2)$,
where $\noeuda=\projI[\Dirtree]{\noeuda'}$, $\tq'=\tq$, $\stackc'=\stackc$ and $\pform_i=\projI[\Dirtree]{\pform'_i}$.
We let $\strat'(\fplay')\egdef(\noeuda',\tq',\stackc',\pform'_i)$, where $i$
is such that
$\strat(\projI[\Dirtree]{\fplay'})=(\noeuda,\tq,\stackc,\pform_i)$. 
Using the fact that a node $\noeuda'$ in $\lift{}{\ltree}$ is
labelled as $\projI[\Dirtree]{\noeuda'}$ in $\ltree$, one can check
that $\sigma'$ generates the same sequences of states of the automaton
as $\sigma$, and is thus winning for Eve.
\end{proof}

% \begin{proof}[Proof sketch]
% For a formula $\pform\in\boolp((\Dirtree\times\Dirtreea)\times \tQ\times\stacka^*)$, we let
% $\projI[\Dirtree]{\pform}\,\in\boolp(\Dirtree\times \tQ\times\stacka^*)$ be the
% formula obtained from $\pform$ by replacing each atom of the form  $[(\dir,\dira),\tq,\stackc]$ with  $[\dir,\tq,\stackc]$.
% We define the automaton
% $\narrow[\Dirtree]{\APTA}=(\stacka,\tQ,\tdelta',\tq_{\init},\couleur)$, where for every $\tq\in\tQ$,
%  $a\in 2^\APf$ and $\stacks\in\stacka$, $\tdelta'(\tq,a,\stacks)\egdef \projI[\Dirtree]{\tdelta(\tq,a,\stacks)}$. 
% \end{proof}

It then follows directly  that:

  \begin{prop}
    \label{prop-narrow-stable}
    If an \APTA $\APTA$ over $\Dirtree\times \Dirtreea \times \Dirtreeb$-trees is $\Dirtree$-guided, then so
    is ${\narrow[\Dirtree\times\Dirtreea]{\APTA}}$.
  \end{prop}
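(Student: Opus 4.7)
The plan is a direct inspection of the narrowing construction from Theorem~\ref{theo-narrow} combined with the definition of $\Dirtree$-guidedness. The narrowing operation acts on the transition function only by projecting directions: each atom $[(\dir,\dira,\dirb),\tq',\stackc]$ appearing in $\tdelta(\tq,a,\stacks)$ is replaced in the narrowed automaton by $[(\dir,\dira),\tq',\stackc]$, while the stack-update component $\stackc$ is copied verbatim. This is the key observation that makes the result essentially trivial once stated carefully.

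More precisely, let $\APTA=(\stacka,\tQ,\tdelta,\tq_{\init},\couleur)$ be the given $\Dirtree$-guided \APTA over $\Dirtree\times\Dirtreea\times\Dirtreeb$-trees, and let $\tdeltastack:2^{\APf}\times\stacka\times\Dirtree\to\stacka^*$ be its stack-update function, so that every atom in $\tdelta(\tq,a,\stacks)$ is of the form $[(\dir,\dira,\dirb),\tq',\tdeltastack(a,\stacks,\dir)]$. By the narrowing construction, the transition function $\tdelta'$ of $\narrow[\Dirtree\times\Dirtreea]{\APTA}$ is obtained by applying $\projI[\Dirtree\times\Dirtreea]{\cdot}$ to each atom, which rewrites $[(\dir,\dira,\dirb),\tq',\tdeltastack(a,\stacks,\dir)]$ as $[(\dir,\dira),\tq',\tdeltastack(a,\stacks,\dir)]$.

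Thus, reading the narrowed automaton as an \APTA over $(\Dirtree)\times\Dirtreea$-trees (in the sense of Definition~\ref{def-guided} with the first factor being $\Dirtree$ and the second $\Dirtreea$), every atom appearing in $\tdelta'(\tq,a,\stacks)$ is of the form $[(\dir,\dira),\tq',\tdeltastack(a,\stacks,\dir)]$. Hence the very same function $\tdeltastack$ witnesses that $\narrow[\Dirtree\times\Dirtreea]{\APTA}$ is $\Dirtree$-guided, which is the claim. There is no real obstacle here: the crucial point is just that the component of the direction on which $\tdeltastack$ depends, namely $\Dirtree$, is not the component being erased by the narrowing operation; had we narrowed to $\Dirtreea$ or $\Dirtreeb$ alone, this argument would fail, which is exactly why in the model-checking construction one is careful to keep the stack-guiding component (the stack operations coded in the directions of the unfolding) visible to every quantifier.
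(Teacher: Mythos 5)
Your proof is correct and is exactly the paper's argument, which simply observes that the stack update function $\tdeltastack$ of $\APTA$ is also that of $\narrow[\Dirtree\times\Dirtreea]{\APTA}$ since narrowing only projects the direction component of each atom while copying the pushed stack content verbatim. Your additional remark about why the argument would fail if the $\Dirtree$ component were erased matches the discussion in the paper's introduction.
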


  Indeed the stack update function $\tdeltastack:
  2^{\APf}\times\stacka\times\Dirtree\to \stacka^*$ of $\APTA$ is also
  that of $\narrow[\Dirtree\times\Dirtreea]{\APTA}$.

%\oscomment{The same argument works for higher-order}

  \section{Model checking hierarchical \QCTLsi}
\label{sec-proof}

{Before presenting our automata construction we introduce 
succinct unfoldings, which allow us to work with trees  over a
\emph{finite} set of directions.}

\subsection{Succinct unfoldings}
\label{sec-succinct-unfolding}

The semantics of
\QCTLsi on a finite \PCKS
$\PCKS=(\stacka,\setstates,\relation,\lab,\sstate_\init)$ is defined
via the  unfolding  of the infinite \CKS
$\CKS_\PCKS$, which is a tree
over the infinite set of directions $\setstates\times\stacka^*\cdot\stackb$.
In this tree each node contains the entire content of the
stack. It is however enough to record only the operations made on the
stack in each node: then, starting from the root and the initial stack
 $\stackb$, one can reconstruct the stack content at each node by
following the unique path from the root to this node, and applying the
successive operations on the stack. By doing so we obtain a tree over the finite set
of directions $\setstates\times\Dirstack$, where
\[\Dirstack=\{\stackc\mid
(\sstate,\stacks,\sstate',\stackc)\in\relation \text{ for some
}\sstate,\stacks \text{ and }\sstate'\}\union \{\epsilon\}\] (we require that  $\Dirstack$ always contain the
empty word).

%\oscomment{When moving to higher-order $\Dirstack$ is the set of all possible higher-order stack operations}

First, for a partial path
$\spath=\conf{\sstate_\init}{\stackb}\conf{\sstate_1}{\stackc_{1}}\ldots\conf{\sstate_n}{\stackc_{n}}$
in $\PCKS$, we define its \defin{succinct representation}
\[\succpath(\spath)=\sconf{\sstate_\init}{\stackb}\sconf{\sstate_1}{\stackc'_{1}}\ldots\sconf{\sstate_n}{\stackc'_{n}}\in
(\setstates\times \Dirstack)^*\] where  for
$i\geq 0$, $\stackc'_{i+1}$ is such that
$\stackc_{i+1}=\stackc'_{i+1}\cdot \stackc''_i$, with
$\stackc_i=\stacks\cdot \stackc''_i$; that is, $\stackc'_{i+1}$ is
what has been pushed on the stack at step $i+1$. % $\stackc'_0$ is
% meaningless and is left undefined: formally, the succinct
% representation and also the succinct unfolding below are parameterised
% by the choice of a $w_0$, but we leave it unspecified to not overload
% notations. 

If
$\succpath=\sconf{\sstate_\init}{\stackb}\sconf{\sstate_1}{\stackc'_1}\ldots\sconf{\sstate_n}{\stackc'_n}\in
(\setstates\times \Dirstack)^*$ is a succinct representation, 
we can reconstruct the unique partial path
$\spath(\succpath)=\conf{\sstate_\init}{\stackb}\conf{\sstate_1}{\stackc_1}\ldots\conf{\sstate_n}{\stackc_n}$
 such that
$\succpath(\spath(\succpath))=\succpath$. % Note that the
% definition of $\spath(\succpath,\stackc_\init)$ is independent from
% the choice of $\stackc_0$ in the definition of succinct
% representations.
We also let
$\stackcofsucc\egdef\stackc_n$ denote the stack
content after  $\succpath$. 

%We now define succinct unfoldings:

\begin{defi}
  \label{dec-succ-unfold}
%We define the \defin{succinct unfolding} of a \PCKS
Let $\PCKS=(\stacka,\setstates,\relation,\lab,\sstate_\init)$ % from
% configuration $\conf{\sstate}{\stackc}$
be a \PCKS. 
Its \defin{succinct unfolding}  is the $(\APf,\setstates\times\Dirstack)$-tree
$\unfold[\PCKS]{\sstate,\stackc_0}\egdef (\tree,\lab')$ where
$\tree=\{\succpath(\spath)\mid \spath\in\PPaths(\PCKS)\}$ and for each
$\succpath\in\tree$ ending in $\sconf{\sstate_n}{\stackc'_n}$,
$\lab'(\succpath)=\lab(\sstate_n,\stackcofsucc)$.
\end{defi}
%\todo{illustrate with an example}

  The
following is a direct consequence of the semantics of \QCTLsi (recall
that $\phiprime$ is obtained from $\phi$ by replacing each concrete
observation $\cobs\subseteq [n]$ with
$\cobs'=\cobs\union\{n+1\}$):

\begin{lem}
  \label{lem-equivalence}
  For every \PCKS $\PCKS$ over
 $\{\setlstates_i\}_{i\in [n]}$ %with state set $\setstates\subseteq
  % $\prod_{i\in [n]}\setlstates_i$
  and every \QCTLsi formula $\phi$, 
  $\PCKS\models\phi$ \,iff\, $\unfold[\PCKS]{\sstate_\init,\stackb}\models\phiprime$.
\end{lem}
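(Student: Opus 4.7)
The plan is to establish the equivalence via the bijection $f\colon \spath\mapsto\succpath(\spath)$ between partial paths of $\PCKS$ and succinct representations, which identifies the nodes of the standard and of the succinct unfolding.

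First I would unravel the definitions: $\PCKS\models\phi$ means $\CKS_\PCKS\models\phiprime$, which in turn means $\unfold[\CKS_\PCKS]{\sstate'_\init},\sstate'_\init\models\phiprime$. Hence it is enough to prove
\[
\unfold[\CKS_\PCKS]{\sstate'_\init},\sstate'_\init\models\phiprime \,\iff\, \unfold[\PCKS]{\sstate_\init,\stackb},(\sstate_\init,\stackb)\models\phiprime.
\]
Both sides are $\APf$-labelled trees whose node sets are in bijection via $f$; moreover $f$ commutes with the successor relation and hence restricts to a bijection between the paths of the two trees.

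Second I would verify two preservation facts for $f$. \textbf{(i) Labels.} If $\spath$ ends in $\conf{\sstate_n}{\stackc_n}$, then its label in the standard unfolding is $\lab(\sstate_n,\stackc_n)$, while the label of $\succpath(\spath)$ in the succinct unfolding is $\lab(\sstate_n,\stackcofsucc)$; these coincide because $\stackcofsucc=\stackc_n$. \textbf{(ii) Indistinguishability.} For any quantifier $\existsp[p]{\cobs'}$ occurring in $\phiprime$, one has $\cobs'=\cobs\cup\{n+1\}$ by definition of $\phiprime$. Two partial paths are $\cobs'$-indistinguishable in the standard unfolding iff their first $n$ state components agree pointwise on $\cobs$ and their full stack contents agree pointwise; via $f$, $\cobs'$-indistinguishability in the succinct unfolding amounts to the same agreement on state components together with pointwise agreement of the per-step pushed fragments. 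The two conditions are equivalent: both content sequences start at $\stackb$, and the correspondence $(\stackc_i)_i \leftrightarrow (\stackc'_i)_i$ set up in Section~\ref{sec-succinct-unfolding} is a bijection, since $\stackc'_{i+1}$ is uniquely determined by $\stackc_i$ and $\stackc_{i+1}$ and conversely $\stackc_{i+1}$ is uniquely determined by $\stackc_i$ and $\stackc'_{i+1}$. Consequently $f$ transports $\cobs'$-uniform $p$-labellings on one tree to $\cobs'$-uniform $p$-labellings on the other, and this transport commutes with the composition operation $\prodlab$.

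Finally I would conclude by a routine structural induction on the subformulas of $\phiprime$, strengthened to range over all pairs $(\spath,\succpath(\spath))$ of corresponding nodes: atomic propositions and Boolean cases follow immediately from \textbf{(i)}; the $\E$ and temporal cases follow because $f$ bijects paths with paths while preserving labels; and the case $\existsp[p]{\cobs'}\phi'$ is dispatched using \textbf{(ii)} to match a $\cobs'$-uniform witness $p$-labelling on one side with one on the other, after which the inductive hypothesis applies to the updated labellings. The main obstacle is \textbf{(ii)}: showing that observing the whole stack content in $\CKS_\PCKS$ is epistemically equivalent, along tree nodes, to observing only the per-step stack update in the succinct unfolding. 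Once this is in place, the rest is a direct transcription of the corresponding argument for finite compound Kripke structures.
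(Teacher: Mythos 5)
Your proof is correct and follows exactly the route the paper intends: the paper states Lemma~\ref{lem-equivalence} as ``a direct consequence of the semantics'' without writing out the argument, and your bijection $\spath\mapsto\succpath(\spath)$ together with the two preservation facts (labels via $\stackcofsucc=\stackc_n$, and the equivalence of observing full stack contents versus per-step pushes for $\cobs'$-indistinguishability, since $n+1\in\cobs'$ and both content sequences are reconstructible from $\stackb$ and the pushed fragments) is precisely the content being elided. The structural induction, strengthened to corresponding node pairs and to labellings transported along the bijection, is the standard way to finish and is sound.
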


{Note that succinct unfoldings were implicitly used in~\cite{aminof2013pushdown}.}

%\oscomment{Works in the same way for higher-order}

% Relying on this we now generalise the automata construction
% from~\cite{DBLP:conf/lics/BerthonMMRV17}, using semi-alternating
% pushdown tree automata instead of alternating tree automata.

% Finally, observe that the current stack content of a direction-guided \APTA is
% completely determined by the node it visits. \bmcomment{if we always
%   start with $\stackb$}
% Given a direction-guided \APTA $\APTA$ over $\Dirtree$-trees and a
% node $\noeud\in\Dirtree^+$, we let $\stacknode{\noeud}$ be the
% stack content of $\APTA$ when it visits node $\noeud$.
  
\subsection{Automata construction}
\label{sec-automata-construction}

We  generalise the automata construction
from~\cite{DBLP:conf/lics/BerthonMMRV17} to the case of pushdown
compound Kripke structures.  {The main novelties are, first, that we use
direction-guided pushdown tree automata instead of classic alternating
automata, relying on the fact that they are stable under the necessary
operations as proved in Section~\ref{sec-direction-guided}, and second, that  we  have to deal with regular
labellings for atomic propositions. %The rest is rather straightforwardly adapted from~\cite{DBLP:conf/lics/BerthonMMRV17}.}
 
 For the rest of this section we fix a \PCKS $\PCKS$ and a formula
 $\Phi\in \QCTLsih$. States in $\PCKS$ are elements of $\prodsetlstates$
 and concrete observations in $\Phi$ are subsets of $[n]$. But
 according to Lemma~\ref{lem-equivalence}, we will in fact
 consider the succinct unfolding of $\PCKS$ which is a tree over
 directions $\setstates\times\Dirstack$, where $\Dirstack=\setlstates_{n+1}$ captures
 stack operations,
 and with formula $\Phi_{n+1}$ in which each concrete observation
 $\cobs$ has been replaced with $\cobs\union\{n+1\}$, as the stack is visible.
More precisely, for each subformula $\phi$ of $\Phi_{n+1}$ we will build an automaton that works
on $\dirphi$-trees, where $\dirphi$ is defined as follows:
 
\begin{defi}
  \label{def-Iphi}
For every $\phi$, % subformula $\phi$ of $\Phi_{n+1}$, we 
let $\Iphi\egdef
\biginter_{\cobs\in\setobs}\cobs$, where $\setobs$ is the set of
concrete observations
  that occur in $\phi$, with the intersection over the empty set 
  defined as $[n+1]$. We then let $\dirphi[\phi]\egdef
  \SI[\phi]\times\Dirstack$, where $\SI[\phi]=\{\projI[{\Iphi}]{\sstate}\mid\sstate\in\setstates\}$. 
\end{defi}

% Note that for every subformula $\phi$ we have $n+1\in \Iphi$, and thus
% $\SI[\phi]=\prod_{i\in\Iphi\cap [n]}\setlstates_i \times \Dirstack$.
We assumed free atoms to be disjoint from quantified
ones in \QCTLsi formulas, \ie $\APq(\Phi)\inter\APfree(\Phi)=\emptyset$. We can thus assume that the \PCKS
$\PCKS$ is labelled  over $\APfree(\Phi)$, while the input trees of our
automata will be labelled over $\APq(\Phi)$.
To merge
the labels for quantified propositions carried by the (complete) input tree,
with those for free propositions carried by \PCKS $\CKS$, we use the
 merge operation from~\cite{DBLP:conf/lics/BerthonMMRV17}.

\begin{defi}%[Merge]
  \label{def-merge}
Let
 $\ltree=(\tree,\lab)$ be a complete
$(\APf,\Dirtree)$-tree and  $\ltree'=(\tree',\lab')$ an
$(\APf\,',\Dirtree)$-tree with same root as $\ltree$, where $\APf\inter\APf\,'=\emptyset$. 
The \defin{merge} of $\ltree$ and $\ltree'$
 is the $(\APf\union\APf\,',\Dirtree)$-tree\; \[\ltree\merge\ltree'\egdef
(\tree\cap\tree'=\tree',\lab''), \text{ where }\lab''(\noeud)=\lab(\noeud) \union \lab'(\noeud).\]
\end{defi}

We now describe our automata construction to inductively build automata
for subformulas of $\Phi_{n+1}$ and the fixed \PCKS
$\PCKS=(\stacka,\setstates,\relation,\labS,\sstate_\init)$.
{The construction is very similar to that
in~\cite{DBLP:conf/lics/BerthonMMRV17} for \QCTLsih on finite systems:
it builds on that for \CTLs~\cite{DBLP:journals/jacm/KupfermanVW00}, 
and in addition it uses narrowing, nondeterminisation and projection to
guess uniform labellings for quantified propositions. Also, in order not to
lose information on the  model while hiding components with the
narrowing operation, the model is encoded in the automata instead of
being given as input, and the input tree is only used
to  carry labellings for quantified propositions. 
 We only give here the cases
that contain significant differences from~\cite{DBLP:conf/lics/BerthonMMRV17}: {atomic proposition, and
  second-order quantification.}

{For atomic propositions  we
have to deal with regular labellings. To evaluate an atomic
proposition $p$ in a node $\noeud$ of the input tree, in state
$\sstate$ of $\PCKS$ and with stack content $\stackc$, automaton $\bigauto{p}$
will simply read the $p$-\labeling of $\noeud$ if $p$ is quantified;
otherwise it will simulate $\wautop{\sstate}$  (which represents the
regular labelling for $p$ in state $\sstate$) on the stack content by popping symbol after
symbol, feeding them to $\wautop{\sstate}$ while following an
arbitrary direction in the input tree, and it will accept if
$\wautop{\sstate}$ is in an accepting state when the stack is empty.}

{For $\phi=\existsp{\cobs}\phi'$, we will first use the
  induction hypothesis to build a $\Dirstack$-guided automaton for $\phi'$. Then we
  will use the results established in
  Section~\ref{sec-direction-guided} to, first, narrow it to make it
  observe only $\cobs$, then nondeterminise it, which is possible
  because $\Dirstack$-guided automata are semi-alternating, and finally project it
  over $p$, while remaining $\Dirstack$-guided.}

% In the following, merge operations will always be of the form
% $\liftI[{[n+1]}]{(\sstate,\stackc')}{\ltree}\merge\;\unfold{\sstate,\stackc}$, 
% where  $\ltree$ is a complete
% $\dirphi$-tree rooted in
% $(\projI[{\Iphi\inter[n]}]{\sstate},\stackc')$ for some
% formula $\phi$, state $\sstate\in \setstates$ and stack operation
% $\stackc'\in\Dirstack$, while the second tree is the succinct
% unfolding of $\PCKS$ from some configuration $\conf{\sstate}{\stackc}$. To make the merge possible,
% we will implicitely assume that the succinct unfolding uses
% $\stackc'$ for the  stack operation at the root (which we recall is
% meaningless, see Section~\ref{sec-succinct-unfolding}).
% We will also omit the parameter $(\sstate,\stackc')$ in the lift operation
% $\liftI[{[n+1]}]{(\sstate,\stackc')}{\ltree}$, since
%  the state $\sstate$ is determined by that in the root of $\unfold{\sstate,\stackc}$,
% and the stack operation $\stackc'$ by that at the root of $\ltree$; we
% may thus write $\liftI[{[n+1]}]{}{\ltree}\merge\;\unfold{\sstate,\stackc}$.

% \newcounter{lem-final}
% \setcounter{lem-final}{\value{theorem}}

\begin{lem}
    \label{lem-final}
    For every subformula
    $\phi$ of $\Phi_{n+1}$ and state $\sstate\in\setstates$,  one can build an
    \APTA $\bigauto[\sstate]{\phi}$ on
    $(\APq(\Phi),\dirphi)$-trees with $\Dirstack$-guided stack
and    such that, for every % $\Dirtreei[\phi]$-
    $(\APq(\Phi),\dirphi)$-tree $\ltree$ rooted in
    $(\projI[{\Iphi}]{\sstate_\init},\stackb)$ and every 
    % succinct representation $\succpath$ of a
    {partial path $\spath\in\PPaths(\PCKS)$ ending in
    $\conf{\sstate}{\stackc}$,}  
    % be the succinct representation of $\spath$,
    it holds that
    %\begin{equation*}
%      \label{eq:1}
    \[
      (\ltree,\projI[{\Iphi}]{{\succpath(\spath)}})\in\lang(\bigauto[\sstate]{\phi},{\stackc})
      \mbox{\;\;\;iff\;\;\;}
      \liftI[{\setstates\times\Dirstack}]{}{\ltree}\merge\;\unfold{\sstate,\stackc},{\succpath(\spath)}
      \modelst \phi. \]
%      \mbox{\;\;\; where }\dira=\projI[{[n]\setminus I_{\phi}}]{\sstate}.
%      \projI[\Iphi]{\unfold{\sstate_\init}}\in\lang(\bigauto{\phi}) \mbox{ \;\;\;if, and only if,\;\;\; }\CKS,\sstate_\init\modelst\phi. 
  
 %   \end{equation*}
  \end{lem}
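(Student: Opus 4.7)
My plan is to proceed by structural induction on $\phi$, following the template of the automata construction for \CTLs from \cite{DBLP:journals/jacm/KupfermanVW00} (with the pushdown refinement of \cite{kupferman2002pushdown}) and the \QCTLsi construction for finite systems of \cite{DBLP:conf/lics/BerthonMMRV17}. For Boolean connectives, the path quantifier $\E\psi$ and the temporal operators, I would reuse the standard constructions essentially verbatim, noting that they do not introduce new stack operations and therefore preserve $\Dirstack$-guidedness trivially. Only the atomic and quantifier cases need fresh ideas.

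For the atomic proposition case $\phi = p$, I would split on whether $p$ is quantified or free. If $p \in \APq(\Phi)$, the automaton simply inspects the label of its initial input node for $p$ and performs no stack action. If $p \in \APfree(\Phi)$, the automaton must decide whether its initial stack content $\stackc$ lies in $\lang(\wautop{\sstate})$; I plan to embed the finite-word automaton $\wautop{\sstate}$ into the state space of $\bigauto[\sstate]{p}$ and, at each step, pop one symbol, advance $\wautop{\sstate}$ on that symbol, and send a single copy of itself in an arbitrary direction of the input tree until $\stackb$ is revealed, accepting iff $\wautop{\sstate}$ is then in a final state. Since the stack update is always a pop depending only on the popped symbol, the function $\tdeltastack$ can be chosen independent of the direction, so the automaton is $\Dirstack$-guided for free.

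The core case is $\phi = \existsp{\cobs}\phi'$. The induction hypothesis yields a $\Dirstack$-guided \APTA $\bigauto[\sstate]{\phi'}$ on $\dirphi[\phi']$-trees. Because $\Phi_{n+1}$ is hierarchical and $\cobs$ (augmented with $\{n+1\}$) is the outermost observation of $\phi$, we have $\Iphi \subseteq \Iphi[\phi']$, and I would factor $\dirphi[\phi'] = \dirphi \times \Dirtreea$, where $\Dirtreea$ encodes the additional local-state components observed in $\phi'$ but not in $\phi$. I would then chain three operations from Section \ref{sec-direction-guided}: first narrow to $\dirphi$-trees (Theorem \ref{theo-narrow}), preserving $\Dirstack$-guidedness by Proposition \ref{prop-narrow-stable}; then nondeterminise via Proposition \ref{prop-stable-simulation} to obtain an $\Dirstack$-guided \NPTA; and finally project over $p$ via Proposition \ref{theo-projection}, which maintains $\Dirstack$-guidedness because projection only introduces top-level disjunctions and does not touch the stack update function. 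The biconditional follows by composing the semantic characterisations of narrowing, simulation and projection with the induction hypothesis, together with the standard fact that a $\cobs$-uniform $p$-labelling on the narrowed tree canonically lifts to a $\cobs$-uniform $p$-labelling on its $\Dirtreea$-widening.

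The main obstacle I anticipate is the preservation of $\Dirstack$-guidedness throughout the induction, and most delicately across the narrowing step in the quantifier case; this is precisely what motivated introducing direction-guided automata in Section \ref{sec-direction-guided}. It goes through because the stack-operation component $\Dirstack = \setlstates_{n+1}$, by the visible-stack convention embodied in the passage from $\phi$ to $\Phi_{n+1}$, belongs to every concrete observation in $\Phi_{n+1}$ and is therefore never erased by any narrowing performed along the induction. A secondary subtlety lies in the atomic case, where encoding a finite-word automaton reading the stack as a pushdown tree automaton popping its own stack hinges on the same visibility assumption, so that the popping strategy can safely ignore the input-tree directions.
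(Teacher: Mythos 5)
Your overall architecture matches the paper's: induction on $\phi$, with the $\existsp{\cobs}\phi'$ case handled exactly as in the paper (narrow to $\cobs$, nondeterminise, project, with $n+1\in\cobs$ guaranteeing that the $\Dirstack$ component survives narrowing). There is, however, a concrete flaw in your atomic case that would break the induction downstream. You argue that since the popping automaton's stack update depends only on the popped symbol, ``$\tdeltastack$ can be chosen independent of the direction, so the automaton is $\Dirstack$-guided for free.'' Taken in isolation this does satisfy Definition~\ref{def-guided}, but it is incompatible with the rest of the construction: the automata for composite formulas (most visibly $\E\psi$) combine, inside a single transition formula, atoms coming from the path-guessing component --- which must push exactly the word $\stackc$ carried by the direction $(\cdot,\stackc)$ it follows, since that is how the transitions of $\PCKS$ are simulated --- with atoms coming from the sub-automata for the maximal state subformulas. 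If your atomic automaton pops while following an arbitrary direction $(\sstate,\stackc)$ with $\stackc\neq\epsilon$, the combined automaton sends two copies in the same direction performing different stack operations; it is then not semi-alternating, and the nondeterminisation you invoke in the quantifier case (Proposition~\ref{prop-stable-simulation}) is no longer available. The paper avoids this by fixing one global stack update function $(a,\stacks,\stackc)\mapsto\stackc$ for \emph{all} automata in the construction, requiring $\epsilon\in\Dirstack$, and having the atomic automaton descend specifically in a direction of the form $(\sstate,\epsilon)$ while it pops.

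Relatedly, your claim that the Boolean, temporal and $\E\psi$ cases ``do not introduce new stack operations'' is backwards: since the input tree carries only the labellings of quantified propositions while the model is encoded inside the automaton, the $\E\psi$ case is precisely where the stack operations of $\PCKS$ are performed (the automaton chooses a transition $(\sstate',\stacks,\sstate'',\stackc)\in\relation$, moves in direction $(\projI[\Iphi]{\sstate''},\stackc)$, and pushes $\stackc$). Guidedness holds there not because nothing happens to the stack, but because what happens is read off the $\Dirstack$ component of the direction taken. The disjunction and $\E\psi$ cases also require narrowing the sub-automata to the common observation set $\Iphi$ (and, for $\E\psi$, dualising them to check negative guesses) before combining them, which you omit; this is routine given Theorem~\ref{theo-narrow} and Proposition~\ref{prop-narrow-stable}, but it is part of why the statement must be proved for the specific direction alphabets $\dirphi$.
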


  \begin{proof}
        The proof is by induction on $\phi$.
The automata we build will be $\Dirstack$-guided, and more precisely
the stack update
    function $\tdeltastack: 2^{\APf}\times\stacka\times\Dirstack\to \stacka^*$
 will be the mapping $(a,\stacks,\stackc)\mapsto \stackc$.
%\oscomment{Moving to higher order, recall that $\Dirstack$ is the set of stack operations and that the update function is taking its values in this same set. Again, one gets $(a,\stacks,op)\mapsto op$}

    $\bm{\phi=p:}$ Recall that
    automaton $\wautop{\sstate}$  accepts the language
    $\{\stackc\in\stacka^*\cdot\stackb\mid p\in\val(\cstate,\stackc)\}$.
From $\wautop{\sstate}$ we can easily build an \APTA $\APTA'$
over $(\setstates\times\Dirstack)$-trees (by
Definition~\ref{def-Iphi}, $\dirphi=\setstates\times\Dirstack$) that,
when started in a node $\noeud$ with a stack content $\stackc$,
simulates $\wautop{\sstate}$ on the stack by popping it until reaching
$\stackb$,  going down in the input tree in direction
$(\sstate,\epsilon)$ for some arbitrary state
$\sstate\in\setstates$. Automaton $\APTA'$ accepts if  $\wautop{\sstate}$ is
in an accepting state when the bottom of the stack is reached.

Writing $\APTA'=(\stacka,\tQ,\tdelta,\tq_\init,\couleur)$, we 
define \[\bigauto{p}=(\stacka,\tQ\union\{\tq'_\init\},\tdelta',\tq'_\init,\couleur'),\]
where $\tq'_\init$ is a fresh initial state, $\couleur'$ extends
$\couleur$ by assigning some insignificant colour to $\tq'_\init$, and
$\tdelta'$ extends $\tdelta$ by letting, for each $\stacks\in\stacka$
and $a\in 2^\APq$,
\[\tdelta'(\tq'_\init,\stacks,a)=
  \begin{cases}
    \top & \mbox{if }p\in \APq \mbox{ and }p\in a\\
    \perp & \mbox{if }p\in \APq \mbox{ and }p\notin a\\
    \tdelta(\tq_\init,\stacks,a) & \mbox{otherwise, \ie if }p\in \APfree
  \end{cases}
\]

%\oscomment{This may be a place where we need to make substantial changes when moving to higher order. If we only consider stacks without links a similar idea is working (even if it is a bit more involved). Now, if we move to links the situation is different. The solution here is to systematically carry the computation on the stack, \ie appealing to a closure result under regular test of the model of higher-order pushdown tree automata (I think this is the simplest way to do it even if it is kind of a black-box argument)}
    
$\bm{\phi=\neg\phi':}$
We obtain $\bigauto{\phi}$  by
   complementing $\bigauto{\phi'}$.

$\bm{\phi=\phi_{1}\ou\phi_{2}:}$ We first build
  $\bigauto{\phi_1}$ and $\bigauto{\phi_2}$. Each $\bigauto{\phi_{i}}$
  works on $\dirphi[\phi_{i}]$-trees, and
  $\Iphi=\Iphi[\phi_{1}]\cap\Iphi[\phi_{2}]$, so that by definition
  $\dirphi=\SI[{\Iphi[\phi_1]\inter\Iphi[\phi_2]\inter[n]}]\times\Dirstack$. Thus
  we first narrow down each $\bigauto{\phi_{i}}$ so that they both
  work on $\dirphi$-trees: % , but we frame them
  % first so that we do not lose the structure of $\unfold{\sstate_\init}$ in
  % the process: 
  for $i\in \{1,2\}$, we let $\APTA_{i}\egdef
  {\narrow[\Iphi]{\bigauto{\phi_{i}}}}=(\stacka,\tQ^{i},\tdelta^{i},\tq_\init^{i},\couleur^{i})$.
    % \narrow[\Iphi]{(\framing[\domunfold{\sstate_\init}]{\bigauto{\phi_{i}}})}$.
  %
  Letting
  $\tq_\init$ be a fresh initial state we define
  $\bigauto{\phi}\egdef(\stacka,\{\tq_\init\}\union\tQ^1\union\tQ^2,\tdelta,\tq_\init,\couleur)$, where
  $\tdelta$ and $\couleur$ agree with $\tdelta^i$ and $\couleur^i$,
  respectively, on states from $\tQ^i$,  and
  $\tdelta(\tq_\init,\stacks,a)=\tdelta^1(\tq_\init^1,\stacks,a)\ou
  \tdelta^2(\tq_\init^2,\stacks,a)$. The colour of $\tq_\init$ does not matter.

$\bm{\phi=\E\psi:}$
  Let $\max(\psi)=\{\phi_1,\ldots,\phi_k\}$ be the
  set of maximal state sub-formulas of $\psi$.
  In a first step we
  see these maximal state sub-formulas as atomic propositions, we see the formula
  $\psi$ as an \LTL formula, and we  build 
  a nondeterministic
parity word automaton
  $\autopsi=(\Qpsi,\Deltapsi,\qpsi_\init,\couleurpsi)$ over alphabet $2^{\max(\psi)}$
% (with two colours)
  that accepts exactly the models of $\psi$~\cite{vardi1994reasoning}.\footnote{Note
    that, as usual for nondeterministic word automata, we take the
    transition function of type
    $\Deltapsi:\Qpsi\times 2^{\max(\psi)}\to
    2^{\Qpsi}$. {Note also that these automata use two
      colours, as actually 
    B\"uchi automata are enough.}}  We define
  the \APTA $\tauto$ that, given as input a
  $(\max(\psi),\dirphi)$-tree $\ltree$, nondeterministically
  guesses a path $\tpath$ in
  $\unfold{\sstate,\stackc}$ and simulates
  $\autopsi$ on it, assuming that the labels it reads while following
  $\projI[\Iphi]{\tpath}$ in its input correctly represent the truth
  value of formulas in $\max(\psi)$ along
  $\tpath$. % , and uses $\autopower$ to make sure that it corresponds to some
% path in $\domunfold{\sstate_\init}$.
We define
 $\tauto\egdef(\stacka,\tQ,\tdelta,\tq_{\init},\couleur)$, where
\begin{itemize}
\item $\tQ=\Qpsi\times\setstates$, 
\item $\tq_{\init}=(\qpsi_{\init},\sstate)$,
\item for each $(\qpsi,\sstate')\in\tQ$, $\couleur(\qpsi,\sstate')=\couleurpsi(\qpsi)$, and
\item  for each $(\qpsi,\sstate')\in\tQ$
  and $a\in 2^{\max(\psi)}$, 
  \[\tdelta((\qpsi,\sstate'),\stacks,a)=\bigvee_{\tq'\in\Deltapsi(\qpsi,a)}\quad\bigvee_{
    (\sstate'',\stackc) \text{ s.t. } (\sstate',\stacks,\sstate'',\stackc)\in\relation}[(\projI[\Iphi]{\sstate''},\stackc),\left(\tq',\sstate''\right),\stackc].\]
\end{itemize}
Intuitively, $\tauto$ reads the current label $a$ in its
input and the top symbol of
the stack $\stacks$. It then chooses nondeterministically
which transition to take in $\autopsi$, and it chooses also a possible
transition $(\sstate',\stacks,\sstate'',\stackc)$ in $\PCKS$. Then it moves
in the input tree in direction $(\projI[\Iphi]{\sstate''},\stackc)$, sending
there a state that records the new current state in $\autopsi$ and the
new current state in $\PCKS$, and it pushes $\stackc$ on the stack of
the automaton.

In general it is not possible to define a $\max(\psi)$-labelling
  of $\ltree$ that faithfully represents the truth values of formulas
  in $\max(\psi)$, because a node in
$\ltree$ may correspond to  different nodes in
$\unfold{}$ that have same projection on $\dirphi$ but satisfy different formulas of $\max(\psi)$. However this is
not a problem because different copies of the final automaton (defined
below) that
visit the same node can guess different
labellings, depending on the actual state of $\PCKS$.

From $\tauto$ we build automaton $\bigauto{\phi}$ over
$\dirphi$-trees labelled with  atomic propositions in
$\APq$.
In each node it visits, $\bigauto{\phi}$ guesses which formulas of
$\max(\psi)$ hold in this node  with the current state of $\PCKS$ and current stack content, it simulates $\tauto$
accordingly, and checks
that the guess it made is correct.
If the path being guessed in $\unfold{\sstate,\stackc}$
is currently in node $\succpath$ ending with state $\sstate'$ and stack
content $\stackcofsucc$, and
$\bigauto{\phi}$ guesses that $\phi_{i}$ holds,
it launches a
copy of automaton $\bigauto[\sstate']{\phi_{i}}$ from node
 $\projI[\Iphi]{\succpath}$ in its input $\ltree$, with the
 current stack content $\stackcofsucc$.

 For each $\sstate'\in\setstates$ state of $\PCKS$, and each
 $\phi_{i}\in\max(\psi)$, we first build $\bigauto[\sstate']{\phi_i}$
 which works on $\dirphi[\phi_i]$-trees.  We narrow down these
 automata to work on $\Iphi[\phi]=\inter_{i=1}^k \Iphi[\phi_i]$: let
 $\APTA^i_{\sstate'}\egdef\narrow[{\Iphi[\phi]}]{\bigauto[\sstate']{\phi_i}}
 =(\stacka,\tQ^{i}_{\sstate'},\tdelta^{i}_{\sstate'},\tq^{i}_{\sstate'},\couleur^{i}_{\sstate'})$.
 We also let
 $\compl{\APTA^{i}_{\sstate'}}=(\stacka,\compl{\tQ^{i}_{\sstate'}},\compl{\delta^{i}_{\sstate'}},\compl{\tq^{i}_{\sstate'}},\compl{\couleur^{i}_{\sstate'}})$
 be the dualisation of $\APTA^i_{\sstate'}$, and we assume that all the state sets are pairwise disjoint.
We define the \APTA
\[\bigauto{\phi}=(\stacka,\tQ\cup
\bigcup_{i,\sstate'} \tQ^{i}_{\sstate'} \cup
\compl{\tQ^{i}_{\sstate'}},\tdelta',\tq_{\init},\couleur'),\] where the
colours of states remain unchanged,
and $\tdelta'$ is defined as follows. For states in $\tQ^{i}_{\sstate'}$
(resp. $\compl{\tQ^{i}_{\sstate'}}$), $\tdelta'$ agrees with $\tdelta^{i}_{\sstate'}$
(resp. $\compl{\delta^{i}_{\sstate'}}$), and for $(\qpsi,\sstate')\in
\tQ$, $\stacks\in\stacka$ and $a\in
2^{\APq}$ we let 
\begin{align*}
\tdelta'((\qpsi,\sstate'),\stacks,a)=  \bigou_{a'\in
     2^{\max(\psi)}}& 
    \Bigg ( \tdelta\left((\qpsi,\sstate'),\stacks,a'\right)
    \et 
      \biget_{\phi_i\in a'}\tdelta^{i}_{\sstate'}(\tq^{i}_{\sstate'},\stacks,a) 
    \;\et
     \biget_{\phi_i\notin
      a'}\compl{\delta^{i}_{\sstate'}}(\compl{\tq^{i}_{\sstate'}},\stacks,a)
    \Bigg ).
\end{align*}

$\bm{\phi=\exists}^{\bm{\cobs}}\bm{p.\,\phi':}$
First, we build automaton $\bigauto{\phi'}$ that works on $\dirphi[\phi']$-trees;
since $\phi$ is hierarchical, we have that $\Iphi=\cobs\subseteq
\Iphi[\phi']$ 
and we can narrow down $\bigauto{\phi'}$ to work on $\dirphi$-trees:
we let
$\APTA_{1}\egdef{\narrow[{\cobs}]{\bigauto{\phi'}}}$. 
By induction hypothesis, $\bigauto{\phi'}$ is $\Dirstack$-guided.
By definition of $\Phi_{n+1}$ we have
$n+1\in\cobs$, and thus   $\APTA_1$ is also $\Dirstack$-guided, by
Proposition~\ref{prop-narrow-stable}. 
Now, by
Theorem~\ref{theo-simulation} we can
nondeterminise $\APTA_1$, getting $\APTA_{2}$, which by
Theorem~\ref{theo-projection} we can project with respect to
$p$, obtaining $\bigauto{\phi}\egdef \proj{\APTA_{2}}$.

%\vskip1em
\subsubsection*{Correctness}
In the following, for  $J\subseteq I \subseteq [n+1]$, for every $(\APf,\Dirtreei[J])$-tree $\ltree$ with root
 $\dir\in\Dirtreei[J]$, and every  $\dira\in\Dirtreei[I\setminus J]$, we note
$\liftI{\dira}{\ltree}$ for $\liftI[{\Dirtreei[{I}]}]{\dira}{\ltree}$
(recall that $\Dirtreei =\prod_{i\in
I}\setlstates_{i}$, and that $\setlstates_{n+1}=\Dirstack$).

Let $\ltree=(\tree,\lab)$ be a complete
$(\APq(\Phi),\dirphi)$-tree rooted in
$(\projI[{\Iphi}]{\sstate},\stackb)$, let
$\spath\in\PPaths(\PCKS)$ be some partial path ending in
    $\conf{\sstate}{\stackc}$, and let $\succpath=\succpath(\spath)$.

% $\spath\in\PPaths(\PCKS)$  a partial path ending in
%     $\conf{\sstate}{\stackc}$, and  $\succpath=\succpath(\spath)$
% the succinct representation of $\spath$.

$\bm{\phi=p:}$  First, note that $I_{p}=[n+1]$, so that
 $\projI[{\Iphi}]{\sstate}=\sstate$,  $\ltree$ is
 rooted in $(\sstate,\stackb)$,
 $\projI[{\Iphi[p]}]{\succpath}=\succpath$ and $\liftI[{[n+1]}]{}{\ltree}=\ltree$.
Let us consider first
the case where $p\in\APfree$: by definition of $\bigauto{p}$, we have
$(\ltree,\succpath)\in\lang(\bigauto{p},\stackc)$ iff $\stackc$ is accepted by
$\wauto^p_\sstate$, \ie iff $p\in \labS(\conf{\sstate}{\stackc})$;
also, by definition of the merge, 
 % and because
 % $\liftI[{[n]}]{}{\ltree}=\ltree$,
we have that
      $\ltree\merge\;\unfold{\sstate,\stackcofsucc},\succpath\models  p$
 iff
 $p\in\labS(\conf{\sstate}{\stackc})$, which concludes this case.
Now if
 $p\in\APq$, by definition of $\bigauto{p}$, we have that
 $(\ltree,\succpath)\in\lang(\bigauto{p},\stackc)$
 iff node $\succpath$ is
 labelled with $p$ in $\ltree$. On the other
 hand, by definition  of the merge, we have
$\ltree\merge\;\unfold{\sstate,\stackc},\succpath\models  p$ iff
$\succpath$ is labelled with $p$ in $\ltree$, and we are done. 
 
 $\bm{\phi=\neg\phi':}$ 
 This case is trivial.
We only remark that the dualisation of a
   $\Dirstack$-guided \APTA
 is also $\Dirstack$-guided.
 
 $\bm{\phi_{1}\ou \phi_{2}:}$
For $i\in\{1,2\}$  we have $\APTA_{i} =
\narrow[{\Iphi}]{\bigauto{\phi_{i}}}$, so by Theorem~\ref{theo-narrow}
%for all $\sstate_i\in\dirphi[\phi_i]$ such that $\projI[\Iphi]{\sstate_i}=\projI[\Iphi]{\sstate}$,
 we have
 that \[(\ltree,\projI[{\Iphi}]{\succpath})\in\lang(\APTA_{i},\stackc)
   \; \text{ iff } \;
(\liftI[{\Iphi[\phi_i]}]{}{\ltree},\projI[{\Iphi[\phi_i]}]{\succpath})\in\lang(\bigauto{\phi_{i}},\stackc).\] By
induction hypothesis the latter holds iff
\[\liftI[{[n+1]}]{}{\liftI[{\Iphi[\phi_{i}]}]{}{\ltree}\,}\merge\;
\unfold{\sstate,\stackcofsucc},\succpath \modelst \phi_{i},\] and thus
 \[(\ltree,\projI[{\Iphi}]{\succpath})\in\lang(\APTA_{i},\stackc)  \;\text{iff}\;
\liftI[{[n+1]}]{}{\ltree}\merge\;\unfold{\sstate,\stackcofsucc},\succpath\modelst\phi_{i}.\]
We conclude by noting that
$\lang(\bigauto{\phi},\stackc)=\lang(\APTA_{1},\stackc)\union\lang(\APTA_{2},\stackc)$.

$\bm{\phi=\E\psi :}$ Suppose that
$\liftI[{[n+1]}]{}{\ltree}\merge\;\unfold{\sstate,\stackcofsucc},\succpath\modelst\E\psi$. By
definition, there
exists an infinite path $\tpath'$ that starts at node $\succpath$ of
$\liftI[{[n+1]}]{}{\ltree}\merge\;\unfold{\sstate,\stackcofsucc}$ such
that
$\liftI[{[n+1]}]{}{\ltree}\merge\;\unfold{\sstate,\stackcofsucc},\tpath'\models\psi$,
{and by definition of  succinct unfoldings and  merge operation, $\tpath'$
corresponds to a unique infinite path $\spathinf\in\Paths(\PCKS)$ that extends
$\spath$.}  Again, let
$\max(\psi)$ be the set of maximal state subformulas of
$\phi$, and let
$w(\tpath')$ be the infinite word over
$2^{\max(\psi)}$ that agrees with
$\tpath'$ on the state formulas in
$\max(\psi)$, \ie for each node $\tpath'_k$ of
$\tpath'$ and formula
$\phi_i\in\max(\psi)$, it holds that
\[\phi_i\in w(\tpath')_k \; \text{ iff }\;
  \liftI[{[n+1]}]{}{\ltree}\merge\;\unfold{\sstate,\stackcofsucc},\tpath'_k
  \modelst \phi_i.\] By definition,
$\autopsi$ has an accepting execution on
$w(\tpath')$. To show that
$(\ltree,\projI[\Iphi]{\succpath})\in\lang(\bigauto[\sstate]{\phi},\stackc)$
we show that Eve can win the acceptance game
$\tgame{\bigauto[\sstate]{\phi}}{\ltree}{\projI[\Iphi]{\succpath},\stackc}$.
In this game, Eve can guess {the continuation $\spathinf$ of
  $\spath$, or equivalently the path $\tpath'$ in $\liftI[{[n+1]}]{}{\ltree}\merge\;\unfold{\sstate,\stackcofsucc}$,}
 while the automaton follows
$\tpath=\projI[{\Iphi[\phi]}]{\tpath'}$ in its input
$\ltree$, and she can also guess the corresponding word
$w(\tpath')$ on
$2^{\max(\psi)}$ and an accepting execution of
$\autopsi$ on
$w(\tpath')$. Let
$\succpath'\in\liftI[{[n+1]}]{}{\ltree}\merge\;\unfold{\sstate,\stackc}$
be a node along $\tpath'$, let
$(\sstate',\stackc')$ be its last direction and let
$\succpath''=\projI[{\Iphi[\phi]}]{\succpath'}\in\ltree$. Assume that
in node
$\succpath''$ of the input tree, in a state $(\qpsi,\sstate')\in
\tQ$, Adam challenges Eve on some
$\phi_i\in\max(\psi)$ that she assumes to be true in
$\succpath'$, \ie Adam chooses the conjunct
$\tdelta^{i}_{\sstate'}(\tq^{i}_{\sstate'},\stacks,a)$, where
$\stacks$ is the top of the current stack content and
$a$ is the label of
$\succpath''$. Note that in the evaluation game this means that Adam
moves to position
$(\succpath'',(\qpsi,\sstate'),\stackcofsucc[\succpath'],\tdelta^{i}_{\sstate'}(\tq^{i}_{\sstate'},\stacks,a))$.
We want to show that Eve wins from this position.  To do so we first
show that
$(\ltree,\succpath'')\in\lang(\APTA^{i}_{\sstate'},\stackcofsucc[\succpath'])$.

First, recall
    that     $\APTA^{i}_{\sstate'}=\narrow[I_{\phi}]{\bigauto[\sstate']{\phi_{i}}}$.
    % Observe that $\ltree_{\noeud}$ is rooted in the last direction of
    % $\noeud=\projI[{\Iphi[\phi]}]{\noeud'}$, and since the last
    % direction of $\noeud'$ is $(\sstate',\stackc')$ we have that
    % $\ltree_{\noeud}$ is rooted in
    % $(\projI[{\Iphi[\phi]\inter [n]}]{\sstate'},\stackc')$.
    By Theorem~\ref{theo-narrow}, it holds that
    $(\ltree,\succpath'')\in\lang(\APTA^{i}_{\sstate'},\stackcofsucc[\succpath'])$
    iff
    $(\liftI[{\Iphi[\phi_i]}]{}{\ltree},\projI[{\Iphi[\phi_i]}]{\succpath'})\in\lang(\bigauto[\sstate']{\phi_{i}},\stackcofsucc[\succpath'])$.
    Next, by applying the induction hypothesis we get that
    \[(\liftI[{\Iphi[\phi_i]}]{}{\ltree},\projI[{\Iphi[\phi_i]}]{\succpath'})\in\lang(\bigauto[\sstate']{\phi_{i}},\stackcofsucc[\succpath'])\;
    \text{ iff }\;
    \liftI[{[n+1]}]{}{\liftI[{\Iphi[\phi_i]}]{}{\ltree}}\merge\;
    \unfold{},\succpath'\modelst \phi_{i},\] \ie iff
    $\liftI[{[n+1]}]{}{\ltree}\merge\; \unfold{},\succpath'\modelst
    \phi_{i}$, which holds because we have assumed that Eve guesses
    $w$ correctly.

Eve thus has a winning strategy from the initial position of
$\tgame{\APTA^{i}_{\sstate'}}{\ltree}{\succpath'',\stackcofsucc[\succpath']}$,
the acceptance game of $\APTA^{i}_{\sstate'}$ on $(\ltree,\succpath'')$
with initial stack content $\stackcofsucc[\succpath']$.  This initial
position is
\[(\succpath'',\tq^i_{\sstate'},\stackcofsucc[\succpath'],\tdelta^i_{\sstate'}(\tq^i_{\sstate'},\stacks,a)).\]
Since
this position %  $(\succpath'',\tq^i_{\sstate'},\tdelta^i_{\sstate'}(\tq^i_{\sstate'},\stacks,a))$
% in $\tgame{\APTA^{i}_{\sstate'}}{\ltree}{\succpath'',\stackcofsucc[\succpath']}$
and position
\[(\succpath'',(\qpsi,\sstate'),\stackcofsucc[\succpath'],\tdelta^i_{\sstate'}(\tq^i_{\sstate'},\stacks,a))\]
in $\tgame{\bigauto[\sstate]{\phi}}{\ltree}{\projI[\Iphi]{\succpath},\stackcofsucc}$
contain the same node $\succpath'$, stack content $\stackcofsucc[\succpath']$ and transition formula
$\tdelta^i_{\sstate'}(\tq^i_{\sstate'},\stacks,a)$, a winning strategy in one
of these positions\footnote{Recall that positional strategies are
  sufficient in parity games \cite{DBLP:journals/tcs/Zielonka98}.} is
also a winning strategy in the other, and therefore Eve wins Adam's
challenge.  With a similar argument, we get that also when Adam
challenges Eve on some $\phi_i$ assumed not to be true in node
$\succpath'$, Eve wins the challenge, which concludes this direction. 

    For the other direction, assume that
    $(\ltree,\projI[\Iphi]{\succpath})\in\lang(\bigauto{\phi},\stackc)$, \ie Eve wins the evaluation game
      $\tgame{\bigauto[\sstate]{\phi}}{\ltree}{\projI[\Iphi]{\succpath},\stackc}$.  A winning strategy for Eve
    describes a path $\tpath$ 
    in $\unfold{\sstate}$ starting in node $\succpath$, which is also
 a path in $\liftI[{[n+1]}]{}{\ltree}\merge\;\unfold{\sstate,\stackcofsucc}$. This
    winning strategy also defines an infinite word $w(\tpath)$
    over $2^{\max(\psi)}$ such that $w(\tpath)$ agrees with $\tpath$ on the
    formulas in $\max(\psi)$, and it also describes an accepting run
    of $\autopsi$ on $w$. Hence $\liftI[{[n+1]}]{}{\ltree}\merge\;\unfold{\sstate,\stackcofsucc},\tpath\modelst\psi$, and
    $\liftI[{[n+1]}]{}{\ltree}\merge\;\unfold{},\succpath\modelst
    \phi$.

$\bm{\phi=\exists}^{\bm{\cobs}}\bm{p.\,\phi':}$
    First, by definition we have $\Iphi=\cobs\inter\Iphi[\phi']$. Because
    $\phi$ is hierarchical we have that $\cobs\subseteq\cobs'$ for
    every $\cobs'$ that occurs in $\phi'$, and thus $\cobs\subseteq\Iphi[\phi']$.
    It follows that $\Iphi=\cobs$.

    Next, since   $\bigauto{\phi}= \proj{\APTA_{2}}$, by Theorem~\ref{theo-projection} we have that
    \begin{align*}
      \label{eq:5}
      (\ltree,\projI[\Iphi]{\succpath})\in\lang(\bigauto{\phi},\stackcofsucc)
      \mbox{\bigiff} 
      \exists\,\plab \mbox{ a $p$-\labeling for
    $\ltree$ such that 
  } (\ltree\prodlab\plab,\projI[\Iphi]{\succpath})\in\lang(\ATA_{2},\stackc).        
    \end{align*}
By  Theorem~\ref{theo-simulation} for simulation,
    $\lang(\ATA_{2},\stackc)=\lang(\ATA_{1},\stackc)$, and since $\ATA_{1}
    =\narrow[\cobs]{\bigauto{\phi'}}=\narrow[{\Iphi[\phi]}]{\bigauto{\phi'}}$
    % and     $\ltree\prodlab\plab$, like $\ltree$, is rooted in
    % $\projI[{\Iphi[\phi]}]{\sstate}$,
    we get by Theorem~\ref{theo-narrow} that
    \begin{equation*}
      \label{eq:6}
      (\ltree\prodlab\plab,\projI[\Iphi]{\succpath})\in\lang(\ATA_{2}) \mbox{\bigiff}
      (\liftI[{\Iphi[\phi']}]{}{(\ltree\prodlab\plab)},\projI[{\Iphi[\phi']}]{\succpath})\in\lang(\bigauto{\phi'}).   
    \end{equation*}
By induction hypothesis, 
 % , we get that the latter holds iff
\begin{align*}
%  \label{eq:7}
(\liftI[{\Iphi[\phi']}]{}{(\ltree\prodlab\plab)},\projI[{\Iphi[\phi']}]{\succpath})\in\lang(\bigauto{\phi'})
 \mbox{\bigiff} \liftI[{[n+1]}]{}{\liftI[{\Iphi[\phi']}]{}{(\ltree\prodlab\plab)}}\merge\;\unfold{\sstate},\succpath\modelst
\phi'.  
\end{align*}
The three equivalences above % \eqref{eq:5}, \eqref{eq:6} and \eqref{eq:7}
plus the fact that
 $\liftI[{[n+1]}]{}{\liftI[{\Iphi[\phi']}]{}{(\ltree\prodlab\plab)}}=   \liftI[{[n+1]}]{}{(\ltree\prodlab\plab)}$
imply that
\[\begin{array}{c}
   (\ltree,\projI[\Iphi]{\succpath})\in\lang(\bigauto{\phi},\stackc)
  \\ \mbox{\bigiff}\\ 
        \exists\,\plab \mbox{ a $p$-\labeling for
    $\ltree$ s.t.
  }
  \liftI[{[n+1]}]{}{(\ltree\prodlab\plab)}\merge\;\unfold{\sstate},\succpath\modelst\phi'.
\end{array}\]

We now prove the following which, together with the latter equivalence,
concludes the proof:

\begin{equation}
  \label{eq:9}
  \begin{array}{c}
      \exists\,\plab \mbox{ a $p$-\labeling for
    $\ltree$ s.t.
  }
  \liftI[{[n+1]}]{}{(\ltree\prodlab\plab)}\merge\;\unfold{\sstate},\succpath\modelst\phi'
  \\
    \mbox{\bigiff}\\
  \liftI[{[n+1]}]{}{\ltree}\merge\;\unfold{\sstate},\succpath\modelst\existsp{\cobs}\phi'
  \end{array}
\end{equation}

Assume that there exists a $p$-\labeling $\plab$ for $\ltree$ such
that
$\liftI[{[n+1]}]{}{(\ltree\prodlab\plab)}\merge\;\unfold{\sstate},\succpath\modelst\phi'$.
Let $\plab'$ be the $p$-\labeling of
$\liftI[{[n+1]}]{}{(\ltree\prodlab\plab)}\merge\;\unfold{\sstate}$.
By definition of the merge, $\plab'$ is equal to the $p$-labelling of
$\liftI[{[n+1]}]{}{(\ltree\prodlab\plab)}$; therefore
\[\liftI[{[n+1]}]{}{(\ltree\prodlab\plab)}\merge\;\unfold{\sstate}=(\liftI[{[n+1]}]{}{\ltree}\merge\;\unfold{\sstate})\prodlab\plab',\]
and $\plab'$ is $\Iphi$-uniform, \ie  $\cobs$-uniform (by definition of the
widening).
This concludes this direction.

Now assume that
$\liftI[{[n+1]}]{}{\ltree}\merge\;\unfold{\sstate},\succpath\modelst\existsp{\cobs}\phi'$:
there exists a $\cobs$-uniform  $p$-\labeling $\plab'$ for $\liftI[{[n+1]}]{}{\ltree}\merge\;\unfold{\sstate}$ 
 such that $(\liftI[{[n+1]}]{}{\ltree}\merge\;\unfold{\sstate})\prodlab\plab',\noeud\modelst\phi'$.  We define
 a $p$-\labeling $\plab$ for $\ltree$ such that
 $\liftI[{[n+1]}]{}{(\ltree\prodlab\plab)}\merge\;\unfold{\sstate},\succpath\modelst\phi'$.
 First, let us write $\ltree'=\liftI[{[n+1]}]{}{\ltree}\merge\;\unfold{\sstate}=(\tree',\lab')$.
 For each node $\noeud$ of $\ltree$, 
 let
 \[
\plab(\noeud)=
\begin{cases}
\plab'(\noeud') &   \mbox{if there exists }\noeud'\in\tree' \mbox{
  such that }\projI[\cobs]{\noeud'}=\noeud,\\
0 & \mbox{otherwise.}
\end{cases}
   \]
   This is well defined because $\plab'$ is $\cobs$-uniform in $p$, so
   that if two nodes $\noeud',\noeuda'$ project on $\noeud$, \ie
   $\noeud'\oequivt\noeuda'$, we have
   $\plab'(\noeud')=\plab'(\noeuda')$.  In case there is no
   $\noeud'\in\tree'$ such that
   $\projI[{\Iphi[\phi]}]{\noeud'}=\noeud$, the value of
   $\plab(\noeud)$ has no impact on
   $\liftI[{[n+1]}]{}{(\ltree\prodlab\plab)}\merge\;\unfold{\sstate}$.
Finally,
   $\liftI[{[n+1]}]{}{(\ltree\prodlab\plab)}\merge\;\unfold{\sstate}=
   (\liftI[{[n+1]}]{}{\ltree}\merge\;\unfold{\sstate})\prodlab\plab'$,
hence the result.
 
\end{proof}

\subsection{Proof of Theorem~\ref{theo-decidable-QCTLi}}
\label{sec-proof-theo-decidable}

We  now prove Theorem~\ref{theo-decidable-QCTLi}. Let
$\PCKS$ be a \PCKS with initial state $\sstate_\init$ and $\Phi\in\QCTLsih$.
For readability let us also write $\Phi'=\Phi_{n+1}$. Applying Lemma~\ref{lem-final} to $\Phi'$ and state
$\sstate_\init$, we can construct an \APTA
$\bigauto[\sstate_\init]{\Phi'}$  with $\Dirstack$-guided stack
    such that for every 
    $(\APq(\Phi),\dirphi[\Phi'])$-tree $\ltree$ rooted in
    $(\projI[{\Iphi[\Phi']}]{\sstate_\init},\stackb)$, every 
 partial path $\spath$  in $\PPaths(\PCKS)$ ending in
    $\conf{\sstate_\init}{\stackc}$, % letting $\succpath=\succpath(\spath)$
    % be the succinct representation of $\spath$,
    it holds that
    % \begin{equation}
    %   \label{eq:f}
      \[(\ltree,\projI[{\Iphi[\Phi']}]{\succpath(\spath)})\in\lang(\bigauto[\sstate_\init]{\Phi'},\stackc)
      \mbox{ iff }
      \liftI[{\setstates\times\Dirstack}]{}{\ltree}\merge\;\unfold{},\succpath(\spath) \modelst
      \Phi'.\] 
%    \end{equation}

      Let $\tree$ be the full $\dirphi[\Phi']$-tree rooted in
      $(\projI[{\Iphi[\Phi']}]{\sstate_\init},\stackb)$, and let
      $\ltree=(\tree,\lab_{\emptyset})$, where $\lab_{\emptyset}$ is
       the empty labelling.
      {Clearly, we have that}
$\liftI[{\setstates\times\Dirstack}]{}{\ltree}\merge
\;\unfold{}=\unfold{}$, and because $\ltree$ is
rooted in $(\projI[{\Iphi[\Phi']}]{\sstate_\init},\stackb)$, applying
% \eqref{eq:f}
the above equivalence to $\ltree$ and  $\spath=\conf{\sstate_\init}{\stackb}$, we get that 
      $(\ltree,(\projI[{\Iphi[\Phi']}]{\sstate_\init},\stackb))\in\lang(\bigauto[\sstate_\init]{\Phi'},\stackb) \mbox{ iff }
        \unfold{\sstate_\init}\modelst \Phi'$.

        Since, by Lemma~\ref{lem-equivalence},
        $\unfold{\sstate_\init}\modelst \Phi'$ holds iff
        $\PCKS\modelst\Phi$, it only remains to check whether tree
        $\ltree$, which is regular\footnote{{A tree is regular if
          it has only finitely many distinct infinite subtrees;
          equivalently if it can be obtained by unfolding a finite
          labelled Kripke structure.}}, is accepted by
        $\bigauto[\sstate_\init]{\Phi'}$. This can be done by taking the product of
        $\bigauto[\sstate_\init]{\Phi'}$ with a finite Kripke
        structure representing $\ltree$ and checking for emptiness,
        which is decidable for semi-alternating pushdown tree
        automata~\cite{aminof2013pushdown}.

\section{\SL with imperfect information}
\label{sec-sl}

We recall the syntax and semantics of Strategy Logic with imperfect
information (\SLi). % , an extension of \SL~\cite{DBLP:journals/tocl/MogaveroMPV14} 
% to the imperfect-information setting with synchronous perfect-recall
% introduced in~\cite{DBLP:conf/lics/BerthonMMRV17}.
The semantics is
defined as in~\cite{DBLP:conf/lics/BerthonMMRV17} on concurrent game arenas with imperfect
information, except that we allow for infinite ones. We then define
the subclass of infinite arenas generated by pushdown arenas with
imperfect information on control states, on which we study the
model-checking problem for \SLi.

% Given a set $X$ and an equivalence relation $\sim$ over $X$, $[x]_{\sim}$ stands for the equivalence class of $x$ for $\sim$ and $X/_{\sim}=\{[x]_{\sim}\mid x\in X\}$ denotes the set of equivalence classes of $\sim$.

\subsection{Syntax}
\label{sec-sl-syntax}
 
For the rest of the section we fix  a finite  set of
 \defin{agents} or
\defin{players} $\Agf$, a finite  set of \defin{observation symbols}
or simply \defin{observations}
$\Obsf$ and
a finite set of \defin{variables} $\Varf$. Observations
represent observational powers for the players. % , and will be
% interpreted as equivalence relations on positions of the game arenas.

\begin{defi}%[\SLi Syntax]
  \label{def-SLi}
    The syntax of \SLi is defined by the following grammar:
    \begin{align*}
 \phi\egdef &\; p 
  \mid \neg \phi 
  \mid \phi\ou\phi 
  \mid \Estrato{\obs}\phi 
               \mid \bind{\var}\phi
%               \mid \unbind\phi
  \mid \Eout\psi & \text{State formulas }               
      \\
      \psi\egdef &\; \phi
                   \mid \neg \psi
                   \mid \psi\ou \psi
                   \mid \X \psi
                   \mid  \psi \until \psi & \text{Path formulas } 
    \end{align*}
     where % $p$ is an atomic proposition, $\var$ is a
 % variable, $\obs$ is an observation and $\ag$ is a player.
  $p\in\APf$, $\var\in\Varf$, $\obs\in\Obsf$ and $a\in\Agf$.
\end{defi}

Boolean and temporal operators have their usual meaning. Strategy quantification $\Estrato{\obs}\phi$ reads as ``there exists a strategy $\var$
that takes decisions based on observational power $\obs$  such that $\phi$
holds''.
Binding $\bind{\var}\phi$ reads as ``when agent $\ag$ plays strategy $\var$,
 $\phi$ holds'', % , while
% $\bind{\unb}\phi$ reads as ``when
% agent $\ag$ is not assigned any strategy, $\phi$ holds''. 
and finally, $\Eout\psi$ reads as ``$\psi$
holds in some
outcome of the strategies currently used by the players''.
 \SLi consists of all state formulas.
% We use  abbreviations $\top\egdef p\ou\neg p$, $\perp\egdef\neg\top$ and $\phi\impl\phi'\egdef \neg \phi \ou \phi'$
% for
% boolean connectives,
%  $\F\phi \egdef \top \until \phi$ (``eventually $\phi$''),  $\always\phi \egdef \neg \F
% \neg \phi$ (``globally $\phi$'') for temporal operators,
% $\Astrato{\obs}\phi\egdef\neg\Estrato{\obs}\neg\phi$ (``for all
% strategies $\var$ based on observation $\obs$, $\phi$ holds'') and
% $\Aout\psi\egdef\neg\Eout\neg\psi$ (``all outcomes of
% the current strategies satisfy $\psi$'').

For $\phi\in\SLi$, we let  $\free(\phi)$ be the set of variables that appear
free in $\phi$, \ie that
appear out of the scope of a strategy quantifier. A formula $\phi$ is a \defin{sentence} if $\free(\phi)$ is empty.
% Finally, we let the \emph{size} $|\phi|$ of a formula $\phi$ be the
% number of symbols in $\phi$.

\subsection{Semantics}
\label{sec-SLmodels}

 \SLi formulas are evaluated on (possibly infinite) concurrent game arenas
 with  interpretations for observation symbols. % in $\Obsf$. 

\begin{defi}%[\CGSi]
  \label{def-CGSi}
  A \defin{concurrent game arena} (or
  \CGSi) is a tuple
  $\CGSi=(\Act,\vertices,\trans,\val,\pos_\init,\obsint)$ where
   \begin{itemize}
    % \item $\APf$ is a finite non-empty set of \defin{atomic propositions},
    % \item  $\Agf$ is a finite non-empty set of \defin{players},
    \item $\Act$ is a finite  set of actions,
    \item $\vertices$ is a   set of positions,
   \item $\trans:\vertices\times \Mov^{\Agf}\to \vertices$ is a transition function, 
  \item $\val:\vertices\to 2^{\APf}$ is a labelling function, 
  \item $\pos_\init \in \vertices$ is an initial position, and
  \item  $\obsint:\Obsf\to 2^{\vertices\times\vertices}$ is an 
  observation interpretation.
  \end{itemize}
\end{defi}

For $\obs\in\Obsf$,  $\obsint(\obs)$
  is an equivalence relation on positions, that we may write
  $\obseq$.  It represents what a player
using a strategy with  observation $\obs$ can see: $\obseq$-equivalent positions
are indistinguishable to a player using a strategy associated with
observation 
$\obs$. % Also, $\val(\pos)$ is the set of
% atomic propositions that hold in position $\pos$.

In a position $\pos\in\setpos$, each player $\ag$ chooses an action $\mova\in\Mov$, 
and the game proceeds to position
$\trans(\pos, \jmov)$, where $\jmov\in \Mov^{\Agf}$ stands for the \defin{joint action}
$(\mova)_{\ag\in\Agf}$. Given a joint action
$\jmov=(\mova)_{\ag\in\Agf}$ and $\ag\in\Agf$, we let
$\jmov(\ag)$ denote $\mova$.
A \defin{play} is an infinite word
$\iplay=\pos_{0} \jmov_0 \pos_{1}\jmov_1 \pos_2\ldots$ such that $\pos_0=\pos_\init$ and
for every $i\geq 0$,
$\trans(\pos_{i}, \jmov_{i})=\pos_{i+1}$. A finite prefix of a play
ending in a position is a \defin{partial play}, {and we let $\FPlay$ be
the set of partial plays.}
% For each observation $\obs$ we define the equivalence relation
% $\obseq$ on partial plays as follows:  $\fplay \obseq \fplay'$ if $|\fplay|=|\fplay'|$,
% and $\fplay_{i}\obseq\fplay'_{i}$ for every $i\in\{0,\ldots,
% |\fplay|-1\}$.
For each observation $\obs$ we define the equivalence relation
$\obseq$ on partial plays as follows:  $\pos_{0} \jmov_0 \pos_{1}\jmov_1 \pos_2\ldots\pos_k \obseq \pos'_{0} \jmov'_0 \pos'_{1}\jmov'_1 \pos'_2\ldots\pos'_{k'}$ if $k=k'$,
and $\pos_{i}\obseq\pos'_{i}$ for every $i\in\{0,\ldots,
k\}$.
% \oscomment{technically this is not correct as plays also includes joint actions. I propose to change to the following sentence: 
% }

A  \defin{strategy} is a function {$\strat:\FPlay
\to \Mov$ that maps each partial play to an
action.}  For $\obs\in\Obsf$, an \defin{$\obs$-strategy} is a strategy
$\strat$ such that
 $\strat(\fplay)=\strat(\fplay')$ whenever $\fplay \obseq
\fplay'$. % \oscomment{Same problem here. I propose to explicitly say
%   that we make an abuse of notation with strategies by defining them
%   on partial play but forgetting on the joint action}
% \bmcomment{Or we could define them on partial plays with actions; I
%   think the reduction to QCTL also works in this case, by pushing the
%   actions in states.}\oscomment{I am fine with both options. I think we should just take the simplest to implement}
We let $\setstrato$ be the set of
all $\obs$-strategies.
An \defin{assignment} is
a partial function $\assign:\Agf\union\Varf \partialto \setstrat$, assigning to
each  player and variable in its domain a strategy.
For an assignment
$\assign$, a player $a$ and a strategy $\strat$,
$\assign[a\mapsto\strat]$ is the assignment of domain
$\dom(\assign)\union\{a\}$ that maps $a$ to $\strat$ and is equal to
$\assign$ on the rest of its domain, and 
$\assign[\var\mapsto \strat]$ is defined similarly, where $\var$ is a
variable. % ; also, $\assign[a\mapsto\unb]$ is
 % the assignment of domain $\dom(\assign)\setminus\{a\}$, on which it is
 % equal to $\assign$.
In addition, given a formula $\phi\in\SLi$, an assignment is
\emph{variable-complete for $\phi$} if
its domain contains all free variables of $\phi$.

For an assignment $\assign$ and a partial play $\fplay$, we let
$\out(\assign,\fplay)$ be the set of plays that extend
$\fplay$ by letting each player $a$ follow
strategy $\assign(a)$. Formally,
if $\fplay=\pos_0\jmov_0\pos_1\ldots\jmov_{k-1}\pos_k$, then $\out(\assign,\fplay)$ is the set of plays of the form $\fplay \cdot
\jmov_k \pos_{k+1}\jmov_{k+1}\pos_{k+2}\ldots$ such that for all
$i\geq 0$ and all $\ag\in\dom(\assign)\inter\Agf$,
 ${\jmov_{k+i}}(\ag)=\assign(\ag)(\fplay\cdot\jmov_k\pos_{k+1}\ldots\jmov_{k+i-1}\pos_{k+i})$ \mbox{ and }
 $\pos_{k+i+1}=\trans(\pos_{k+i},\jmov_{k+i})$.

\begin{defi}%[\SLi semantics]
\label{def-SLi-semantics}
The semantics of a state (resp. path) formula is defined on a \CGSi $\CGSi$, an
assignment  $\assign$ that is variable-complete for $\phi$, and a
partial play $\fplay$ (resp. an infinite play $\iplay'$ and an index $i\in\setn$). The
inductive definition is as follows:
\[
\begin{array}{lcl}
 \CGSi,\assign,\fplay\modelsSL p & \text{ if } & p\in\val(\last(\fplay))\\[1pt]
 \CGSi,\assign,\fplay\modelsSL \neg\phi & \text{ if } &
  \CGSi,\assign,\fplay\not\modelsSL\phi\\[1pt]
 \CGSi,\assign,\fplay\modelsSL \phi\ou\phi' & \text{ if } &
  \CGSi,\assign,\fplay\modelsSL\phi \;\text{ or }\;
  \CGSi,\assign,\fplay\modelsSL\phi' \\[1pt]
 \CGSi,\assign,\fplay\modelsSL\Estrato{\obs}\phi  & \text{ if } & 
\exists \strat\in\setstrato \;\text{s.t.} \;
    \CGSi,\assign[\var\mapsto\strat],\fplay\modelsSL \phi\\[1pt]
 \CGSi,\assign,\fplay\modelsSL \bind{\var}\phi & \text{ if } &
 \CGSi,\assign[\ag\mapsto\assign(\var)],\fplay\modelsSL \phi\\[1pt]  
 % \CGSi,\assign,\fplay\modelsSL \bind{\unb}\phi & \text{ if } &
 %          \CGSi,\assign[\ag\mapsto\unb],\fplay\modelsSL \phi\\[1pt]
  \CGSi,\assign,\fplay\modelsSL \Eout\psi & \text{ if } & \exists\iplay' \in
                                                          \out(\assign,\fplay)
                                                          \text{ such
                                                          that }  \CGSi,\assign,\iplay',|\fplay|-1\modelsSL   \psi\\
% \end{array}
% \]
% \[
% \begin{array}{lcl}                                                  
  \CGSi,\assign,\iplay',i\modelsSL \psi & \text{ if } &
                                                       \CGSi,\assign,\iplay'_{\leq i}\modelsSL\psi\\[1pt]
  \CGSi,\assign,\iplay',i\modelsSL \neg\psi & \text{ if } &
                                                           \CGSi,\assign,\iplay',i\not\modelsSL\psi\\[1pt]
  \CGSi,\assign,\iplay',i\modelsSL \psi\ou\psi' & \text{ if } &
                                                               \CGSi,\assign,\iplay',i\modelsSL\psi \;\text{ or }\;
                                                               \CGSi,\assign,\iplay',i\modelsSL\psi' \\[1pt]
  \CGSi,\assign,\iplay',i\modelsSL\X\psi & \text{ if } &
                                                        \CGSi,\assign,\iplay',i+1\modelsSL\psi\\[1pt]
  \CGSi,\assign,\iplay',i\modelsSL\psi\until\psi' & \text{ if } &
                                                                 \exists j\geq i
                                                                 \mbox{
                                                                 s.t. }\CGSi,\assign,\iplay',j\modelsSL
                                                                 \psi'
                                                                  \text{
                                                                  and } \forall\, k \text{ s.t. } i\leq k <j, \CGSi,\assign,\iplay',k\modelsSL \psi
\end{array}
\]
\end{defi}

A sentence $\phi$ can be evaluated in the empty assignment $\emptyassign$.
Given a sentence $\phi$ and a \CGSi $\CGSi$ with initial position
$\pos_\init$, we write $\CGSi\models\phi$ if $\CGSi,\emptyassign,\pos_\init\models\phi$. 

\subsection{Pushdown game arenas}
\label{sec-pushdown-game-structures}

% We consider infinite concurrent game
% arenas generated by pushdown concurrent game arenas, \ie
% concurrent game arenas equipped with a stack.
We introduce
 Pushdown Game Arenas with Visible Stack,
a variant of Epistemic Pushdown Game Structures (\EPGS) defined
in~\cite{chen2017model}, themselves an imperfect-information
generalisation  of the Pushdown Game Structures
from~\cite{murano2015pushdown}. While in \EPGS players have imperfect
information both on the control states and the stack, in
Pushdown Game Arenas with Visible Stack, the stack is perfectly
observed by all players. % , who only have
% imperfect information about the control state.
Another minor
difference is that while in \EPGS, observational equivalence relations
are associated to players, in our models they are associated to
observation symbols. % $\obs\in\Obsf$.

\begin{defi}
  \label{def-pgs}
A \defin{Pushdown Game Arena with Visible Stack}, or \PGA, is a
tuple $\PGA=(\Act,\stacka,\cstates,\ptrans,\val,\cstate_\init,\obsint)$ where
   \begin{itemize}
    \item $\Act$ is a finite set of actions,
    \item $\stacka$ is a finite stack alphabet together with a  bottom
      symbol $\stackb\notin\stacka$ and we let $\stacka_\stackb=\stacka\cup\{\stackb\}$,
    \item $\cstates$ is a finite set of control states,
    \item $\ptrans:\cstates\times\stacka_\stackb\times \Mov^{\Agf}\to \cstates\times{\stacka_\stackb}^*$ is a transition function, 
    \item $\val:\cstates\times{\stacka}^*\cdot\stackb\to 2^{\APf}$ is a regular labelling function, 
    \item $\cstate_\init \in \cstates$ is an initial control state, and
    \item  $\obsint:\Obsf\to 2^{\cstates\times\cstates}$ is an observation interpretation.
  \end{itemize}  
\end{defi}
As in Definition~\ref{def-pcks}, we  require that the bottom symbol
never be removed
or pushed: for any $\cstate\in\cstates$
 and $\jmov\in\Mov^{\Agf}$, one has
$\ptrans(\cstate,\stackb,\jmov)\in
\cstates\times{\stacka}^*\cdot\stackb$ (the bottom symbol is never
removed), and for every $\stacks\in\stacka$,
$\ptrans(\cstate,\stacks,\jmov)\in \cstates\times{\stacka}^*$ (the
bottom symbol is never pushed).

For $\obs\in\Obsf$, $\obsint(\obs)$ is an equivalence relation on
control states, that we may write $\obseq$.  Also, by \emph{regular}
labelling function, we mean that for each $p\in\APf$ and
$\cstate\in\cstates$,
the set $\{\stackc\in\stacka^*\cdot\stackb\mid
p\in\val(\cstate,\stackc)\}$ forms a regular
language~\cite{DBLP:journals/iandc/EsparzaKS03}.

A \defin{configuration} is a pair
$\conf{\cstate}{\stackc}\in
\cstates\times({\stacka}^*\cdot\stackb)$ where $\cstate$
represents the current control state and $\stackc$ the current content
of the stack.  When the players choose a joint move
$\jmov\in \Mov^{\Agf}$ in a configuration
$\conf{\cstate}{\stacks\cdot\stackc}$ the system moves to
configuration $\conf{\cstate'}{\stackc'\cdot\stackc}$, where
$\conf{\cstate'}{\stackc'}=\ptrans(\cstate,\stacks,\jmov)$; we denote
this by
$\conf{\cstate}{\stacks\cdot\stackc}\conftrans{\jmov}\conf{\cstate'}{\stackc'\cdot\stackc}$.

A \PGA
$\PGA=(\Act,\stacka,\cstates,\ptrans,\val,\cstate_\init,\obsint)$
induces an infinite \CGS
$\CGS_\PGA=(\Act,\vertices',\trans,\val',\pos_\init,\obsint')$ where
\begin{itemize}
\item $\vertices'=\cstates\times ({\stacka}^*\cdot \stackb)$,
\item $\trans(\conf{\cstate}{\stacks\cdot\stackc},\jmov)=
\conf{\cstate'}{\stackc'\cdot\stackc}$ if
$\conf{\cstate}{\stacks\cdot\stackc}\conftrans{\jmov}\conf{\cstate'}{\stackc'\cdot\stackc}$,
\item $\val'=\val$,
\item $\pos'_\init=\conf{\cstate_\init}{\stackb}$,
\item
  $(\conf{\cstate}{\stackc},\conf{\cstate'}{\stackc'})\in\obsint'(\obs)$
  if $\stackc =
\stackc' \mbox{ and }(\cstate,\cstate')\in\obsint(\obs)$.
\end{itemize}

% with set
% of positions $\vertices'=\cstates\times ({\stacka}^*\cdot \stackb)$,
% initial position $\pos'_\init=\conf{\cstate_\init}{\stackb}$,
% transition function
% $\trans:(\conf{\cstate}{\stacks\cdot\stackc},\jmov)\mapsto
% \conf{\cstate'}{\stackc'\cdot\stackc}$ such that
% $\conf{\cstate}{\stacks\cdot\stackc}\conftrans{\jmov}\conf{\cstate'}{\stackc'\cdot\stackc}$, 
% and observation interpretation $\obsint':\obs \mapsto
% \{(\conf{\cstate}{\stackc},\conf{\cstate'}{\stackc'})\mid \stackc =
% \stackc' \mbox{ and }(\cstate,\cstate')\in\obsint(\obs)\}$.
{Plays and partial plays of $\PGA$ are those of $\CGS_\PGA$.}
For an \SLi sentence $\phi$, we write $\PGA\models\phi$ if $\CGS_\PGA\models\phi$.

%\todo{motivating example}

%\oscomment{Update to higher-order is simple}

\subsection{Model checking hierarchical instances}
\label{sec-def-mcproblem}

We study the model-checking problem for \SLi evaluated on pushdown
game arenas with visible stack. This problem is clearly undecidable
as it captures multiplayer games with imperfect information (see for
instance~\cite{DBLP:conf/focs/PetersonR79,PR90}). We generalise a
result from~\cite{DBLP:conf/lics/BerthonMMRV17}, which shows that
model-checking \SLi on finite \CGSs is decidable for
so-called \emph{hierarchical instances}, \ie when each strategy
quantifier in a formula is associated to an observation finer than
those associated to strategy quantifiers higher up in the syntactic tree.

Given an \defin{instance}
$(\PGA,\Phi)$, where $\PGA$ is a \PGA and $\Phi$ is an \SLi sentence, the model-checking problem
 consists in deciding whether $\PGA\models\Phi$.

\begin{defi}
  \label{def-hierarchical-formula}
An instance $(\PGA,\Phi)$ is \defin{hierarchical} if
for every
  subformula $\phi_{1}=\Estrato[y]{\obs_{1}}\phi'_{1}$ of $\Phi$ and subformula
  $\phi_{2}=\Estrato{\obs_{2}}\phi'_{2}$ of $\phi'_1$,
 %  $\phi_{1}=\Estrato[y]{\obs_{1}}\phi'_{1}$, with  % $\phi_{1}$ is a subformula of
 %  % $\phi$ and
 %  $\phi_{1}$  
 % a subformula of $\phi'_{2}$,
 it holds that 
  $\obsint(\obs_{2})\subseteq \obsint(\obs_{1})$.
%   $\obs_{2}\obsless\obs_{1}$.
\end{defi}

%\bmchanged{Again, this mea}

%\todo{Example}

%Our main result is the following.
The rest of this section is dedicated to the proof of the following result:

\begin{thm}
\label{theo-SLi}
Model checking \SLi on pushdown game arenas with
visible stack is decidable for hierarchical instances.
\end{thm}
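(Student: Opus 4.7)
The plan is to generalise the hierarchy-preserving reduction from \SLi to \QCTLsih of~\cite{DBLP:conf/lics/BerthonMMRV17} from finite \CGSi to pushdown game arenas with visible stack, and then invoke Theorem~\ref{theo-decidable-QCTLi}. Fix a hierarchical instance $(\PGA,\Phi)$ with $\PGA=(\Act,\stacka,\cstates,\ptrans,\val,\cstate_\init,\obsint)$. Let $\Obs(\Phi)=\{\obs_1,\ldots,\obs_m\}$ be the observation symbols occurring in $\Phi$. For each $\obs_i$ I will introduce a local-state component $\setlstates_i$ whose elements are the $\obseq[\obs_i]$-equivalence classes of $\cstates$, together with one extra component $\setlstates_0$ containing the control states themselves (to make the map $\cstate\mapsto ([\cstate]_{\obs_1},\ldots,[\cstate]_{\obs_m},\cstate)$ injective). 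This yields a \PCKS $\PCKS$ over $\{\setlstates_i\}_{i\in[m+1]}$ whose set of states is the image of this map, whose stack alphabet, transitions and regular labelling are inherited from $\PGA$ (the stack corresponds, as per Definition~\ref{def-generated-CKS}, to the extra visible component automatically added when passing to $\CKS_\PCKS$). The map $\obs_i\mapsto\cobs_{\obs_i}\egdef\{i\}$ is designed precisely so that $\pos\obseq[\obs_i]\pos'$ in $\CGS_\PGA$ (recall that $\obs_i$-equivalence of configurations forces equal stacks) corresponds exactly to $\cobs_{\obs_i}\union\{m+2\}$-indistinguishability of the associated \PCKS states.

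Next, I translate $\Phi$ inductively into a \QCTLsi formula $\trans(\Phi)$ on $\PCKS$, in the spirit of the finite case. For each strategy variable $x$ with quantifier $\Estrato{\obs}$, I introduce a family of fresh quantified atoms $\{p^x_\mova\}_{\mova\in\Act}$, all bound by $\existsp[p^x_\mova]{\cobs_\obs}$, together with a formula $\uniq^x$ asserting (using CTL*) that in every reachable node exactly one $p^x_\mova$ holds: this encodes an action choice in each node, and the $\cobs_\obs$-uniformity of the quantifier exactly captures $\obs$-observationality of the resulting strategy. Bindings $\bind{\var}$ are translated by remembering the currently bound variables in $\trans$, and the outcome quantifier $\Eout\psi$ becomes a path quantifier $\E$ guarded by a CTL formula that forces the path to follow, at each step, the action prescribed by the currently bound strategies (these can be read off the $p^x_\mova$ labelling). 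Atomic propositions and Boolean/temporal operators are left untouched. A straightforward induction on $\Phi$ shows that $\PGA\models\Phi$ iff $\PCKS\models\trans(\Phi)$, by exhibiting the standard bijection between $\obs$-strategies and $\cobs_\obs$-uniform $p^x_\mova$-labellings.

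I then verify that $\trans(\Phi)$ is hierarchical. The only propositional quantifiers in $\trans(\Phi)$ are the $\existsp[p^x_\mova]{\cobs_\obs}$, and these appear in the same nesting order as the corresponding strategy quantifiers of $\Phi$. Hierarchy of $(\PGA,\Phi)$ means $\obsint(\obs_2)\subseteq\obsint(\obs_1)$ for nested quantifiers, i.e.\ $\obs_2$ induces a finer equivalence than $\obs_1$; by the definition of the local-state decomposition, this implies $\cobs_{\obs_1}\subseteq\cobs_{\obs_2}$, which is exactly the hierarchy condition of Definition~\ref{def-hierarchical}. Applying Theorem~\ref{theo-decidable-QCTLi} to $\PCKS$ and $\trans(\Phi)$ concludes.

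The main obstacle I anticipate is not the \QCTLsi part, which is Theorem~\ref{theo-decidable-QCTLi}, but getting the correspondence between $\obs$-strategies in $\PGA$ and $\cobs_\obs$-uniform labellings right in the presence of the stack. Specifically, one must check that two partial plays of $\PGA$ are $\obs$-equivalent (equal stacks, and $\obseq$-related control states at every step) if and only if the corresponding nodes of the succinct unfolding of $\PCKS$ are $(\cobs_\obs\union\{m+2\})$-indistinguishable: this uses that the stack is visible (so its component is always present in every concrete observation, by the convention $\PCKS\models\phi\iff\CKS_\PCKS\models\phi'$ from Section~\ref{sec-PCKS}) and that the succinct representation faithfully reconstructs stack contents along a node. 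Once this bridge is in place, the rest of the translation is essentially the finite-state construction of~\cite{DBLP:conf/lics/BerthonMMRV17}, unchanged by the presence of the stack.
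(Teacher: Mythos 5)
Your overall strategy---reduce a hierarchical \SLi instance on a \PGA to a hierarchical \QCTLsih instance on a \PCKS and invoke Theorem~\ref{theo-decidable-QCTLi}---is the paper's, and most of the translation (equivalence classes as local-state components, the exact control state kept as an extra component, atoms $p^{\var}_{\mov}$ coding strategies, a uniqueness formula, an outcome guard) matches the paper's construction. But your choice of concrete observations breaks the argument at its crux. You set $\cobs_{\obs_i}=\{i\}$ and then claim that $\obsint(\obs_2)\subseteq\obsint(\obs_1)$ implies $\cobs_{\obs_1}\subseteq\cobs_{\obs_2}$; with singletons this inclusion reads $\{1\}\subseteq\{2\}$ and fails whenever $\obs_1\neq\obs_2$, so the translated formula is \emph{not} hierarchical in the syntactic sense of Definition~\ref{def-hierarchical} and Theorem~\ref{theo-decidable-QCTLi} does not apply. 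Semantic refinement (component $2$ determining component $1$) is not enough: the automata construction narrows the input tree down to the \emph{index set} of the inner quantifier, after which an outer quantifier on $\{1\}$ would have to be uniform with respect to a component that has been projected away. The paper's fix is to let the quantifier associated with $\obs$ observe the components of \emph{all coarser} observations, $\trobs{\obs}=\{j\mid\obsint(\obs)\subseteq\obsint(\obs_j)\}$; then $\obsint(\obs_2)\subseteq\obsint(\obs_1)$ does give $\trobs{\obs_1}\subseteq\trobs{\obs_2}$, while $\trobs{\obs}$-uniformity still coincides with $\obs$-observationality of the coded strategy.

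A second, smaller omission concerns the outcome quantifier. To express ``the path follows the bound strategies'' one must be able to tell, at each step, which joint action was \emph{actually played} to reach the next configuration; the transition function of a \PGA need not be injective in the action argument, so this cannot be inferred from the successor state alone, and your states carry no such information. The paper records the last joint action $\jmov$ inside each \PCKS state and labels it with a fresh atom $p_{\jmov}$, so that $\psiout[f]$ can demand, at every step, some $\jmov$ with $\X\, p_{\jmov}$ whose components agree with the prescribed actions $p^{f(\ag)}_{\jmov(\ag)}$. Both repairs are local; with them your induction goes through essentially as in Proposition~\ref{prop-redux}.
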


%\begin{proof}
We adapt the reduction
from~\cite{DBLP:conf/lics/BerthonMMRV17} to 
transform hierarchical
instances of \SLi on \PGA into 
 hierarchical instances of \QCTLsi on
\PCKS.
Let $(\PGA,\Phi)$ be a hierarchical instance of the model-checking
problem for \SLi, {and assume without loss of generality that each strategy variable
is quantified at most once in $\Phi$.}

%\vspace{1ex}
\subsection*{Model transformation} We  first  define the \PCKS
$\PCKS_{\PGA}$. %  and a bijection 
%  between the set
% of partial plays in $\PGA$  and the set of
%  partial paths in $\PCKS_{\PGA}$.
Let $\Obsf=\{\obs_{1},\ldots,\obs_{n}\}$, and let
 $\PGA=(\Act,\stacka,\cstates,\ptrans,\val,\cstate_\init,\obsint)$.
 For $i \in [n]$, define
 the local states $\setlstates_{i}\egdef\{\eqc{\obs_{i}}\mid\cstate\in\cstates\}$, where
 $\eqc{\obs}$ is the equivalence class of $\cstate$ for relation
 $\obsint(\obs)$. % Since we need to know the actual control state of the \PGA to
 % define the dynamics, we also let $\setlstates_{n+1}\egdef\cstates$.
 For each control state $\cstate\in\cstates$ and joint move
 $\jmov\in\Act^\Agf$, we define
 $\sstate_{\cstate,\jmov}\egdef(\eqc{\obs_1},\ldots,\eqc{\obs_n},\cstate,\jmov)$.
Each tuple $\sstate_{\cstate,\jmov}\in
 \prod_{i\in [n]}\setlstates_i\times\cstates\times\Act^\Agf$ contains the equivalence class of $\cstate$
 for each observation $\obs_i\in\Obsf$; we 
 include the exact control state $\cstate$ of $\PGA$ because it is needed to
 define the dynamics, and we also include the last joint action played to % capture the
 % fact that strategies depend on sequences of configurations \emph{and}
 % joint actions, and also to
 make it possible to check that players
 follow their strategies.

 % $\APstack=\{p_\stacks\mid \stacks\in\stacka\}$,
 % $\APstates=\{p_{\cstate}\mid\cstate\in\cstates\}$ and
Let $\APact=\{p_{\jmov}\mid\jmov\in\Act^\Agf\}$ be a set of
 fresh atomic propositions.
Define the \PCKS
$\PCKS_{\PGA}=(\stacka,\setstates,\relation,\lab',\sstate_{\init})$
over  $\APf\union\APact$, where
\begin{itemize}
\item $\setstates=\{\sstate_{\cstate,\jmov}\mid \cstate\in\cstates
  \mbox{ and }\jmov\in\Act^\Agf\}$,
\item $\relation=\{(\sstate_{\cstate,\jmov},\stacks,\sstate_{\cstate',\jmov'},\stackc')\mid
\ptrans(\cstate,\stacks,\jmov')=(\cstate',\stackc')\}$,
\item
  $\lab'(\conf{\sstate_{\cstate,\jmov}}{\stackc})=\val(\conf{\cstate}{\stackc})\cup
  \{p_{\jmov}\}$, and 
\item $\sstate_{\init}=\sstate_{\cstate_\init,\jmov_\init}$ for some arbitrary
  $\jmov_\init\in\Act^\Agf$.
\end{itemize}

The labelling $\lab'$ is regular
 because $\lab$ is
regular for atoms in $\APf$, and the truth value of atoms
in $\APact$ is determined by the control state only. % , and third, that of
% atoms in $\APstack$ is determined by reading the first symbol of the
% stack. 
%\oscomment{Same argument works at higher-order}

For every partial play
$\fplay=\conf{\cstate_\init}{\stackb}\jmov_0\conf{\cstate_1}{\stackc_1}\ldots\conf{\cstate_k}{\stackc_k}$
in $\PGA$, define the partial path
$\spath'=\conf{\sstate_{\init}}{\stackb}\conf{\sstate_1}{\stackc_1}\ldots\conf{\sstate_k}{\stackc_k}$ in $\PCKS_\PGA$
  where $\sstate_i=\sstate_{\cstate_i,\jmov_{i-1}}$, for each $i\in [k]$.
The mapping $\fplay\mapsto\spath'$ puts in bijection 
 partial plays of $\CGS_\PGA$ with partial paths
of $\PCKS_\PGA$.

%\vspace{1ex}
\subsection*{Formula transformation}
 We now describe how to transform an \SLi formula $\phi$ and a partial
function $f:\Agf \partialto  \Varf$ into a \QCTLsi
formula $\tr[f]{\phi}$ (that will also depend on $\PGA$).
Suppose that $\Mov=\{\mov_{1},\ldots,\mov_{\maxmov}\}$, and define
$\tr[f]{\phi}$ and $\trp[f]{\psi}$ by mutual induction on state and path formulas. 
%We omit the simple base, boolean and temporal cases (see Appendix~\ref{app-reduction}).

Base, boolean and temporal cases are as follows:
\begin{align*}
\tr[f]{p} 		 & \egdef p & \trp[f]{\phi} & \egdef
\tr[f]{\phi}\\
\tr[f]{\neg \phi} 	 & \egdef \neg \tr[f]{\phi} & \trp[f]{\neg
  \psi} & \egdef \neg \trp[f]{\psi}\\
\tr[f]{\phi_1\ou\phi_2}  &\egdef \tr[f]{\phi_1}\ou\tr[f]{\phi_2} &
\trp[f]{\psi_1\ou\psi_2} &\egdef \trp[f]{\psi_1}\ou\trp[f]{\psi_2} &\\
\trp[f]{\X\psi}  & \egdef \X\trp[f]{\psi} &
                                            \trp[f]{\psi_1\until\psi_2}  & \egdef \trp[f]{\psi_1}\until\trp[f]{\psi_2}.
\end{align*}

For the strategy quantifier we let
\[\tr[f]{\Estrato{\obs}\phi} \egdef  \exists^{\trobs{\obs}} p_{\mov_{1}}^{\var}\ldots \exists^{\trobs{\obs}} p_{\mov_{\maxmov}}^{\var}. \phistrat \et \tr[f]{\phi},\]
where $\trobs{\obs_i} \egdef \{j\mid \obsint(\obs_{i})\subseteq\obsint(\obs_{j})\}$ and
$\phistrat$, which checks that atoms $p_{\mov}^\var$ indeed code for a
strategy, is defined as
\[\phistrat \egdef
\A\always\bigou_{\mov\in\Mov}(p_{\mov}^{\var}\et\biget_{\mov'\neq\mov}\neg
p_{\mov'}^{\var}).\]
%  \phistrat \egdef \A\always\bigou_{\mov\in\Mov}p_{\mov}^{\var}.

Let $\tr[f]{\bind{\var}\phi} \egdef \tr[{f[\ag\mapsto \var]}]{\phi}$,
and
$\tr[f]{\Eout\psi}\egdef \E\,(\psiout[f] \wedge
 \trp[f]{\psi})$, 
where
\[\psiout[f]\egdef \always
 \bigou_{\jmov\in\Mov^{\Agf}} 
  \biget_{\ag\in\dom(f)}p_{\jmov(\ag)}^{f(\ag)}\et \X\,
  p_{\jmov}\]

Formula $\psiout[f]$ holds  on a path
if and only if each player $\ag$ in $\dom(f)$
follows the  strategy coded by atoms $p^{f(\ag)}_\act$.

The correctness of the translation is stated by  the following lemma:
% (the  proof is in Appendix~\ref{app-correctness-reduction}):
 \begin{lem}
   \label{lem-redux}
     \[\PGA \models \Phi \mbox{ if, and only if, }
\PCKS_\PGA \models
 \tr[\emptyset]{\Phi}.\]
 \end{lem}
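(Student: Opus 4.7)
The plan is to prove Lemma~\ref{lem-redux} by establishing a stronger inductive claim on subformulas of $\Phi$. Concretely, I would show that for every subformula $\phi$ of $\Phi$, every partial function $f:\Agf\partialto\Varf$ whose domain corresponds to the agents bound above $\phi$, every assignment $\assign$ that is variable-complete for $\phi$ and \emph{compatible} with $f$ (\ie $\assign(\ag)=\assign(f(\ag))$ for every $\ag\in\dom(f)$), and every partial play $\fplay$ of $\CGS_\PGA$ whose corresponding succinct path in $\PCKS_\PGA$ is $\spath'$,
\[\CGS_\PGA,\assign,\fplay\modelsSL\phi\;\iff\;\ltree_\assign,\spath'\modelst\tr[f]{\phi},\]
where $\ltree_\assign$ denotes the succinct unfolding of $\PCKS_\PGA$ composed with the $p^\var_\mov$-labellings that encode each strategy in $\assign$: at the node corresponding to a partial play $\fplay'$, the atom $p^\var_\mov$ holds iff $\assign(\var)(\fplay')=\mov$. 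Lemma~\ref{lem-redux} then follows by instantiating the claim at $\phi=\Phi$, $f=\emptyset$, $\assign=\emptyassign$, and $\fplay=\conf{\cstate_\init}{\stackb}$.

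The central technical point is the strategy-quantifier case $\phi=\Estrato{\obs}\phi'$. The translation prefixes $\tr[f]{\phi'}$ with $\maxmov$ propositional existential quantifiers over atoms $p^\var_{\mov_i}$ all parameterised by the concrete observation $\trobs{\obs}$, and conjoins the formula $\phistrat$. Since $\phistrat$ forces exactly one $p^\var_\mov$ to hold at each node, the $\trobs{\obs}$-uniform labellings satisfying $\phistrat$ are in bijection with functions from partial paths of the unfolding to actions. What remains is to show that, via the partial-play/partial-path bijection, these functions are exactly the $\obs$-strategies in $\PGA$. For this I would prove that two partial plays $\fplay\obseq\fplay'$ in $\CGS_\PGA$ if and only if the corresponding succinct paths are $\trobs{\obs}$-indistinguishable in $\PCKS_\PGA$ with the stack component automatically added (by the convention that each $\cobs$ is replaced by $\cobs\cup\{n+1\}$ in $\PCKS\models\phi$). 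The ``only if'' direction uses $i\in\trobs{\obs_i}$, since the $i$-th local state of $\PCKS_\PGA$ records the $\obs_i$-equivalence class of the control state; the ``if'' direction uses that any $j\in\trobs{\obs_i}$ satisfies $\obsint(\obs_i)\subseteq\obsint(\obs_j)$, so that $\obs_i$-equivalence of control states already implies equality of all relevant $\obs_j$-classes. Applying the induction hypothesis to $\phi'$ with the extended assignment $\assign[\var\mapsto\strat]$ closes the case.

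The remaining cases are routine once this correspondence is set up. Atoms follow from $\val'=\val$ in the definition of $\PCKS_\PGA$; Booleans and the path/temporal connectives are straightforward via the bijection between infinite plays of $\CGS_\PGA$ and infinite paths of the succinct unfolding. Binding $\bind{\var}\phi$ updates $f$ to $f[\ag\mapsto\var]$ without modifying the labelling, while the semantic update $\assign[\ag\mapsto\assign(\var)]$ exactly realigns the compatibility condition with the new $f$. For $\Eout\psi$, the translation $\E(\psiout[f]\wedge\trp[f]{\psi})$ exploits the atoms $p_\jmov$ attached to states $\sstate_{\cstate,\jmov}$ of $\PCKS_\PGA$, which record the last joint action played, to restrict the path quantifier to paths corresponding to outcomes in $\out(\assign,\fplay)$: the subformula $\psiout[f]$ asserts that at every step, each agent $\ag\in\dom(f)$ plays the action currently prescribed by its strategy labels $p^{f(\ag)}_\bullet$. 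The main obstacle is the strategy-quantifier case; beyond this, one also needs to check that the translation preserves hierarchicality, namely that $\trobs{\obs_1}\subseteq\trobs{\obs_2}$ whenever $\obsint(\obs_2)\subseteq\obsint(\obs_1)$, which is immediate from the definition of $\trobs{\cdot}$ but is needed for the subsequent application of Theorem~\ref{theo-decidable-QCTLi} in the proof of Theorem~\ref{theo-SLi}.
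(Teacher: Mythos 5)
Your proposal is correct and follows essentially the same route as the paper: the paper also proves the lemma by strengthening it to an inductive claim (its Proposition~\ref{prop-redux}) over subformulas, assignments, and compatible partial functions $f$, with the strategies of the assignment encoded as $p^\var_\mov$-labellings of the succinct unfolding, and the same three cases (strategy quantification, binding, outcome quantification) carrying the substance, including the correspondence between $\obseq$ on partial plays and $\trobs{\obs}$-indistinguishability of nodes. The only quibble is cosmetic: in your indistinguishability equivalence you have swapped which direction uses $i\in\trobs{\obs_i}$ and which uses $\obsint(\obs_i)\subseteq\obsint(\obs_j)$ for $j\in\trobs{\obs_i}$, but both ingredients are correctly identified, so the argument stands.
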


 To establish this lemma we need a few additional definitions. Given a strategy $\strat$ and
a strategy variable $\var$ we
let  $\stratlab{\var}\egdef\{\plab[{p_\act^\var}]\mid
\act\in\Act\}$ be the family of $p_\act^\var$-\labelings for tree
$\unfold[\PCKS_{\PGA}]{}$ defined as follows: for each
finite play $\fplay$ in $\PGA$ and $\act\in\Act$,
we let $\plab[{p_\act^\var}](\noeud_\fplay)\egdef 1$ if $\act=\strat(\fplay)$, 0 otherwise.
For a \labeled tree $\ltree$ with same domain as
$\unfold[\PCKS_{\PGA}]{}$ we write $\ltree\prodlab \stratlab{\var}$ for
$\ltree\prodlab \plab[{p_{\act_1}^\var}]\prodlab\ldots\prodlab \plab[{p_{\act_\maxmov}^\var}]$.

{Given a partial play $\fplay$ in $\PGA$, we define the node
$\noeud_\fplay=\succpath(\fplay')
\in\unfold[\PCKS_{\PGA}]{\sstate_{\pos_\init}}$: it is the
 succinct representation of $\fplay'$, the finite path of $\PCKS_\PGA$
 that corresponds to $\fplay$. Also,}
given an infinite play $\iplay$ and a point
$i\in\setn$, we let
$\tpath_{\iplay,i}$ be the infinite path in
$\unfold[\PCKS_{\PGA}]{\sstate_{\pos_\init}}$ that starts in node
$\noeud_{\iplay_{\leq
    i}}$ and is defined as
$\tpath_{\iplay,i}\egdef\noeud_{\iplay_{\leq i}}\noeud_{\iplay_{\leq
    i+1}}\noeud_{\iplay_{\leq i+2}}\ldots$

{Finally, for an assignment $\assign$ and
a partial   function $f:\Agf\partialto\Varf$, we say that $f$ is
\emph{compatible} with $\assign$ if
 $\dom(f)=\dom(\assign)\inter \Agf$ and  for all $a \in \dom(f)$,  $\assign(a) = \assign(f(a))$.}

%We prove by induction the following:
 Lemma~\ref{lem-equivalence} is now obtained by applying the following result
 to sentence $\Phi$, $\fplay=\pos_\init$, the empty
 assignment and
 the empty function $\emptyset$:

 \begin{prop}
   \label{prop-redux}
   For every  state subformula $\phi$ and path subformula $\psi$ of
   $\Phi$, partial play $\fplay$,  play $\iplay'$, point
   $i\in\setn$, for every  assignment $\assign$ variable-complete
   for $\phi$ (resp. $\psi$) and
partial   function $f:\Agf\partialto\Varf$ compatible with $\assign$, assuming
   also that no $\var_i$ in $\dom(\assign)\inter \Varf=\{\var_1,\ldots,\var_k\}$ is
 quantified in $\phi$ or $\psi$, we have
\[  
\PGA,\assign,{\fplay}\models\phi  \quad\mbox{ iff } \quad
  \unfold[\PCKS_{\PGA}]{\sstate_{\pos_\init}}\prodlab
  \stratlab[\assign(\var_1)]{\var_1}\prodlab\ldots \prodlab
  \stratlab[\assign(\var_k)]{\var_k},\noeud_{\fplay} \modelst
       \tr[f]{\phi}
    \]
   and
\[ \PGA,\assign,{\iplay'},i\models\psi \quad\mbox{ iff } \quad
  \unfold[\PCKS_{\PGA}]{\sstate_{\pos_\init}}\prodlab
  \stratlab[\assign(\var_1)]{\var_1}\prodlab\ldots \prodlab
  \stratlab[\assign(\var_k)]{\var_k},\tpath_{\iplay',i} \modelst
  \trp[f]{\psi}     
  \]
 \end{prop}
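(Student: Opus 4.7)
The plan is a simultaneous induction on the state subformula $\phi$ and the path subformula $\psi$, following the pattern of the analogous result in~\cite{DBLP:conf/lics/BerthonMMRV17} for finite arenas, but with the mapping $\fplay\mapsto\noeud_\fplay$ replaced by the bijection between partial plays of $\PGA$ and nodes of the succinct unfolding of $\PCKS_\PGA$. The atomic case follows from the fact that $\lab'(\sconf{\sstate_{\cstate,\jmov}}{\stackc})$ retains the $\PGA$-labelling $\val(\conf{\cstate}{\stackc})$, together with the regularity of $\lab$. Boolean and temporal cases are routine and indifferent to the pushdown aspect. The binding case $\phi=\bind{\var}\phi'$ is immediate from the definition of compatibility: updating $f$ to $f[\ag\mapsto\var]$ on the translation side mirrors updating $\assign$ to $\assign[\ag\mapsto\assign(\var)]$ on the semantics side, and both sides concern the same strategy $\assign(\var)$.

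For the strategy quantifier case $\phi=\Estrato{\obs}\phi'$, the plan is to show that the $p^\var_\act$-\labelings $(\plab[{p_\act^\var}])_{\act\in\Act}$ witnessing satisfaction of $\tr[f]{\phi}$ on the tree are in bijection with $\obs$-strategies in $\PGA$. The implementation constraint $\phistrat$ ensures that each candidate labelling defines a function assigning a unique action to each node, i.e.\ a strategy $\strat$. The heart of the argument is the uniformity check: the concrete observation $\trobs{\obs}=\{j\mid\obsint(\obs)\subseteq\obsint(\obs_j)\}$ has been designed so that, in $\PCKS_\PGA$ (whose local states are equivalence classes for each $\obs_j$, together with the control state and the last joint move), two succinct-representation nodes are $\trobs{\obs}$-indistinguishable exactly when the corresponding partial plays of $\PGA$ are $\obs$-indistinguishable in the \SLi sense (using that the stack component $\setlstates_{n+1}$ is always observable, reflecting stack visibility). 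Therefore $\trobs{\obs}$-uniformity of the guessed $p^\var_\act$-labellings corresponds exactly to the observation-based constraint defining $\obs$-strategies. Applying the induction hypothesis to $\phi'$ with $\assign[\var\mapsto\strat]$ (and $f$ unchanged, so still compatible) closes the case.

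For $\phi=\Eout\psi$, the plan is to match outcomes of $\assign$ starting in $\fplay$ with paths in the unfolding starting at $\noeud_\fplay$ that satisfy $\psiout[f]$. The formula $\psiout[f]$ states that from that point onwards, at every step, each player $\ag\in\dom(f)$ plays, in the next joint move atom $p_\jmov$, precisely the action prescribed by the strategy recorded in the $p^{f(\ag)}_\act$-labelling, which by compatibility of $f$ with $\assign$ is $\assign(\ag)$. This pins down paths in the tree to exactly those induced by extensions of $\fplay$ in $\out(\assign,\fplay)$. The path-formula induction hypothesis then transfers satisfaction of $\trp[f]{\psi}$ along the path to satisfaction of $\psi$ along the corresponding play, with the alignment of indices handled by the fact that $\noeud_{\iplay_{\leq i}}$ sits at position $|\fplay|-1+(i-|\fplay|+1)$ along $\tpath_{\iplay,|\fplay|-1}$, so the semantic clause $\ltree,\tpath\models\phi \iff \ltree,\tpath_0\models\phi$ matches $\CGSi,\assign,\iplay',i\models\psi\iff\CGSi,\assign,\iplay'_{\leq i}\models\psi$.

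The main obstacle is the strategy quantifier case, and specifically the verification that $\trobs{\obs}$ faithfully encodes $\obs$-indistinguishability of partial plays of $\PGA$ as $\trobs{\obs}$-indistinguishability of nodes of the succinct unfolding. This requires a careful unpacking of how the tuple $\sstate_{\cstate,\jmov}=(\eqc[\cstate]{\obs_1},\ldots,\eqc[\cstate]{\obs_n},\cstate,\jmov)$ is projected under $\trobs{\obs}$ and why projecting onto the components indexed by $\{j\mid\obsint(\obs)\subseteq\obsint(\obs_j)\}$ retains precisely the information visible to $\obs$. Once this correspondence is established, plus the visibility of the stack component, the equivalence between $\obs$-strategies in $\PGA$ and $\trobs{\obs}$-uniform valuations of $\{p^\var_\act\}_{\act\in\Act}$ in $\unfold[\PCKS_\PGA]{}$ follows, and the induction goes through.
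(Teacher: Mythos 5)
Your proposal is correct and follows essentially the same route as the paper's proof: induction on the formula, with the binding case handled via compatibility of $f[\ag\mapsto\var]$ with $\assign[\ag\mapsto\assign(\var)]$, the strategy-quantifier case via the correspondence between $\obs$-uniform strategies and $\trobs{\obs}$-uniform $p^\var_\act$-labellings satisfying $\phistrat$, and the outcome case via $\psiout[f]$ pinning down exactly the paths induced by $\out(\assign,\fplay)$. You even make more explicit than the paper does why $\trobs{\obs}$-indistinguishability of nodes of the succinct unfolding matches $\obseq$ on partial plays; the only detail you gloss over is the bookkeeping that $\var\neq\var_i$ (no variable quantified twice) so that $\assign[\var\mapsto\strat]$ agrees with $\assign$ on $\var_1,\ldots,\var_k$, which the paper checks explicitly.
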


 \begin{proof}
   The proof is by induction on $\phi$.
We detail the cases for binding,  strategy quantification and
outcome quantification, the others follow simply by definition of
$\PCKS_{\PGA}$ for atomic propositions and induction hypothesis for
remaining cases.

{For $\phi=\bind{\var}\phi'$, we have
$\PGA,\assign,{\fplay}\models\bind{\var}\phi'$ iff 
$\PGA,\assign[\ag\mapsto\assign(\var)],{\fplay}\models\phi'$.
The result follows by using the induction hypothesis with assignment
$\assign[\ag\mapsto\var]$ and function
 $f[a\mapsto\var]$. This is possible because $f[a\mapsto\var]$ is compatible with $\assign[\ag\mapsto\var]$: indeed
 $\dom(\assign[\ag\mapsto\var])\inter\Agf$ is equal to
 $(\dom(\assign)\inter\Agf) \union \{a\}$ which, by assumption, is equal
 to $\dom(f) \union \{a\}=\dom(f[a\mapsto
 \var])$. Also
 by assumption, for all $a'\in\dom(f)$, $\assign(a')=\assign(f(a'))$, and 
by definition \[\assign[a\mapsto \assign(\var)](a)=\assign(\var)=\assign(f[a\mapsto\var](a)).\]}

 {For $\phi=\Estrato{\obs}\phi'$, assume first that
$\PGA,\assign,{\fplay}\models\Estrato{\obs}\phi'$. There exists an
$\obs$-uniform strategy $\strat$ such that
\[\PGA,\assign[\var\mapsto \strat],\fplay\models \phi'.\] Since $f$ is
compatible with $\assign$, it is also compatible with assignment
$\assign'=\assign[\var\mapsto \strat]$. By assumption, no variable in
$\{\var_1,\ldots,\var_k\}$ is quantified in $\phi$, so that $\var\neq
\var_i$ for all $i$ and thus $\assign'(\var_i)=\assign(\var_i)$ for
all $i$; and because no strategy variable is
quantified twice in a same formula,
$\var$ is not quantified in $\phi'$, so that no variable in
$\{\var_1,\ldots,\var_k,\var\}$ is quantified in $\phi'$.
By induction hypothesis 
  \[\unfold[\PCKS_{\PGA}]{\sstate_{\pos_\init}}\prodlab
  \stratlab[\assign'(\var_1)]{\var_1}\prodlab\ldots \prodlab
  \stratlab[\assign'(\var_k)]{\var_k}\prodlab   \stratlab[\assign'(\var)]{\var},\noeud_{\fplay}
  \modelst \tr[f]{\phi'}.\]

  Because $\strat$ is $\obs$-uniform, each
  $\plab[{p_\act^\var}]\in\stratlab{\var}=\stratlab[\assign(\var)]{\var}$ is $\trobs{\obs}$-uniform,
  and it follows that   \[\unfold[\PCKS_{\PGA}]{\sstate_{\pos_\init}}\prodlab
  \stratlab[\assign'(\var_1)]{\var_1}\prodlab\ldots \prodlab
  \stratlab[\assign'(\var_k)]{\var_k},\noeud_{\fplay}
  \modelst \exists^{\trobs{\obs}} p_{\mov_{1}}^{\var}\ldots
  \exists^{\trobs{\obs}} p_{\mov_{\maxmov}}^{\var}. \phistrat\et\tr[f]{\phi'}.\]

Finally, since $\assign'(\var_i)=\assign(\var_i)$ for all $i$, we
conclude that 
\[\unfold[\PCKS_{\PGA}]{\sstate_{\pos_\init}}\prodlab
  \stratlab[\assign(\var_1)]{\var_1}\prodlab\ldots \prodlab
  \stratlab[\assign(\var_k)]{\var_k},\noeud_{\fplay}
  \modelst \tr[f]{\Estrato{\obs}\phi'}.\]

For the other direction, assume
that
\[\unfold[\PCKS_{\PGA}]{\sstate_{\pos_\init}}\prodlab
  \stratlab[\assign(\var_1)]{\var_1}\prodlab\ldots \prodlab
  \stratlab[\assign(\var_k)]{\var_k},\noeud_{\fplay} \modelst
  \tr[f]{\phi},\] and recall that
$\tr[f]{\phi}=\exists^{\trobs{\obs}} p_{\mov_{1}}^{\var}\ldots
\exists^{\trobs{\obs}}
p_{\mov_{\maxmov}}^{\var}. \phistrat\et\tr[f]{\phi'}$.  Write
$\ltree=\unfold[\PCKS_{\PGA}]{\sstate_{\pos_\init}}\prodlab
\stratlab[\assign(\var_1)]{\var_1}\prodlab\ldots \prodlab
\stratlab[\assign(\var_k)]{\var_k}$. There exist
$\trobs{\obs}$-uniform $\plab[{p_\act^\var}]$-\labelings such that
\[\ltree\prodlab  \plab[{p_{\act_1}^\var}]\prodlab\ldots\prodlab \plab[{p_{\act_\maxmov}^\var}]
  \modelst \phistrat\et\tr[f]{\phi'}.\]
By $\phistrat$, these \labelings  
 code for a strategy $\strat$, and because they are
 $\trobs{\obs}$-uniform, $\strat$ is $\obs$-uniform. Let
 $\assign'=\assign[\var\mapsto \strat]$. For all $1\leq i\leq k$, by
 assumption $\var\neq \var_i$, and thus $\assign'(\var_i)=\assign(\var_i)$.
 The above can thus be rewritten
 \[\unfold[\PCKS_{\PGA}]{\sstate_{\pos_\init}}\prodlab
\stratlab[\assign'(\var_1)]{\var_1}\prodlab\ldots \prodlab
\stratlab[\assign'(\var_k)]{\var_k}\prodlab  \stratlab[\assign'(\var)]{\var}
  \modelst \phistrat\et\tr[f]{\phi'}.\]
 By induction hypothesis we have
$\PGA,\assign[\var\mapsto
\strat],\fplay\models \phi'$, hence $\PGA,\assign,\fplay\models \Estrato{\obs}\phi'$.}

For $\phi=\Eout\psi$,
assume first that $\PGA,\assign,{\fplay}\models\E\psi$. 
There exists an infinite play $\iplay'\in\out(\assign,\fplay)$ s.t.
$\PGA,\assign,\iplay',|\fplay|-1\modelsSL \psi$. By induction
hypothesis,
\[\unfold[\PCKS_{\PGA}]{\sstate_{\pos_\init}}\prodlab
  \stratlab[\assign(\var_1)]{\var_1}\prodlab\ldots \prodlab
  \stratlab[\assign(\var_k)]{\var_k},\tpath_{\iplay',|\fplay|-1} \modelst
  \trp[f]{\psi}.\]
Since $\iplay'$ is an outcome of $\assign$, each agent $a\in\dom(\assign)\inter\Agf$ 
follows strategy $\assign(a)$ in $\iplay'$.
Because  $\dom(\assign)\inter \Agf=\dom(f)$ and for all $a \in \dom(f)$,
  $\assign(a) = \assign(f(a))$, each agent $a\in\dom(f)$ follows
the  strategy $\assign(f(a))$, which is coded by atoms
$p_\mov^{f(\ag)}$ in the translation of $\Phi$. Therefore $\tpath_{\iplay',|\fplay|-1}$ also
satisfies $\psiout$, hence $\unfold[\PCKS_{\PGA}]{\sstate_{\pos_\init}}\prodlab
  \stratlab[\assign(\var_1)]{\var_1}\prodlab\ldots \prodlab
  \stratlab[\assign(\var_k)]{\var_k},\tpath_{\iplay',|\fplay|-1} \modelst
  \psiout \et   \trp[f]{\psi}$, and we are done.

  For the other direction, assume that 
  \[\unfold[\PCKS_{\PGA}]{\sstate_{\pos_\init}}\prodlab
  \stratlab[\assign(\var_1)]{\var_1}\prodlab\ldots \prodlab
  \stratlab[\assign(\var_k)]{\var_k},\noeud_\fplay \modelst
  \E(\psiout[f] \et   \trp[f]{\psi}).\]
There exists a path $\tpath$ in $\unfold[\PCKS_{\PGA}]{\sstate_{\pos_\init}}\prodlab
  \stratlab[\assign(\var_1)]{\var_1}\prodlab\ldots \prodlab
  \stratlab[\assign(\var_k)]{\var_k}$ starting in
node $\noeud_\fplay$ that satisfies both $\psiout[f]$ and $\trp[f]{\psi}$.
By construction of $\PCKS_{\PGA}$ and definition of succinct unfoldings, there exists an infinite play $\iplay'$
such that $\iplay'_{\leq |\fplay|-1}=\fplay$ and $\tpath=\tpath_{\iplay',|\fplay|-1}$.
By induction hypothesis, $\PGA,\assign,\iplay',|\fplay|-1 \modelsSL \psi$.
Because $\tpath_{\iplay',|\fplay|-1}$ satisfies $\psiout[f]$, $\dom(\assign)\inter \Agf=\dom(f)$, and for all $a \in \dom(f)$,
  $\assign(a) = \assign(f(a))$, it is also the case that
  $\iplay'\in\out(\assign,\fplay)$, 
hence  $\PGA,\assign,\fplay \modelsSL \Eout\psi$.
 \end{proof}

 To complete the proof of Theorem~\ref{theo-SLi} it remains to check that $\tr[\emptyset]{\Phi}$ is a
hierarchical \QCTLsi formula, which is the case because
 $\Phi$ is hierarchical in
$\PGA$ and for every two observations $\obs_{i}$ and $\obs_{j}$ in $\Obsf$ such that
$\obsint(\obs_{i})\subseteq\obsint(\obs_{j})$, by definition of $\trobs{\obs_{k}}$
we have that $\trobs{\obs_{i}}\subseteq \trobs{\obs_{j}}$.
%\end{proof}

\section{Higher-order extension}\label{sec-ho}

We have shown so far that the techniques
developed for \emph{finite} concurrent game arenas with imperfect
information in \cite{DBLP:conf/lics/BerthonMMRV17} can be extended and
adapted to deal with the case of infinite concurrent game arenas
defined by \emph{pushdown} {systems} when the stack is visible. In particular
we proved in Theorem~\ref{theo-SLi} that the model-checking problem
for \SLi on pushdown game arenas with visible stack is decidable for
hierarchical instances. Moving from finite structures to infinite
structures (in our case defined by pushdown {systems}) is of interest
for instance when dealing with system verification as it permits to
capture richer classes, in particular those coming from programs
making use of recursion.

A natural line of research is to go beyond pushdown
{systems}, and a natural candidate here is to move to the
  higher-order setting, \ie to consider higher-order pushdown {systems}
  or even collapsible pushdown {systems}~\cite{HagueMOS17}. These are
  very natural models in particular regarding application for programs
  using higher-order
  functions.% (which is a very popular feature in all mainstream languages).

We first briefly discuss the global road map. 
\begin{itemize}
\item The decidability proof for \SLi will again go through a reduction to model checking hierarchical \QCTLsi and this is where most technicalities are coming. 
\item We give in sections~\ref{sec:HO-stack} and
  \ref{sec:HO-stack-links-app} definition of higher-order stacks and
  stacks with links. 
\item Next in Section~\ref{sec-CKS-HO}, we adapt Definition~\ref{def-pcks} and introduce higher-order pushdown compound Kripke structures and collapsible pushdown compound Kripke structures. The main technicality here is to introduce a suitable notion of regular labelling functions.
\item In Section~\ref{sec:CPTA} we explain (in Section~\ref{sec:CPTA-def}) how to generalise to higher-order the definitions of pushdown tree automata from Section~\ref{sec-direction-guided} and we also adapt the results concerning projection, simulation and narrowing (in Section~\ref{sec:CPTA-res}).
\item In Section~\ref{sec-QCTL-HO}, we adapt the notion of succinct unfolding to handle higher-order. This, together with a closure property for alternating collapsible pushdown tree automata, permits to establish decidability of model checking hierarchical \QCTLsi on collapsible pushdown compound Kripke structures.
\item Finally, in Section~\ref{sec-SLi-HO}, we prove that the
  model-checking problem for \SLi on collapsible pushdown game arenas
  with visible stack is decidable for hierarchical instances.
\end{itemize}

\subsection{Higher-order and collapsible pushdown compound Kripke structures}

We explain how to adapt the definitions from Section~\ref{sec-PCKS} to deal with higher-order stacks (possibly with links).

% We give here a definition of stacks with links that is slightly more formal than the one given in Section~\ref{sec-ho}. We also make the distinction between higher-order stacks and (the general setting of) stacks with links.

\subsubsection{Higher-order stacks and their operations}\label{sec:HO-stack}

Fix a finite stack alphabet $\Gamma$ and a distinguished \emph{bottom
  symbol} $\bot \not\in \Gamma$.
\begin{defi}
 \label{def-stacks}
An order-1 stack is a word
$\bot a_1 \ldots a_\ell \in \bot \cdot \Gamma^*$
which is denoted $\mksk{\bot a_1 \ldots a_\ell}_1$. An \defin{order-$k$ stack}
(or a \defin{$k$-stack}), for $k>1$, is a non-empty sequence $\stackc_1,\ldots,\stackc_\ell$ of
order-$(k\!-\!1)$ stacks which is written $\mksk{\stackc_1 \ldots \stackc_\ell}_k$.  
\end{defi}
 For
convenience, we may sometimes see an element $a\in\Gamma$ as an order-$0$ stack,
denoted $\mksk{a}_0$.
{We define $\bot_h$, the
\defin{empty $h$-stack}, as: $\bot_0 = \bot$ and
$\bot_{h+1} = \mksk{\bot_h}$.}
We denote by $\Stacks_k$ the set of all order-$k$ stacks and $\Stacks=\bigcup_{k
\geq 1} \Stacks_k$ the set of all higher-order stacks. The height of the stack $\stackc$, denoted
$\len{\stackc}$, is simply the 
length of the sequence. We denote by $\order(\stackc)$ the order of the stack $\stackc$.

%A \defin{substack} of an order-$1$ stack  $\mksk{\bot a_1 \ldots a_h}_1$ is  a stack of the form 
%$\mksk{\bot a_1 \ldots a_{h'}}_1$ for some $0\leq h' \leq h$. A \defin{substack} of an order-$k$ stack $\mksk{\stackc_1 \ldots \stackc_h}_k$, for $k>1$  is either a stack of the form $\mksk{\stackc_1 \ldots  \stackc_{h'}}_k$ with $0 {<} h'\leq h$
%or  a stack of the form $\mksk{\stackc_1 \ldots  \stackc_{h'}\,\stackc'  }_k$ with $0\leq h'\leq h-1$ and 
%$\stackc'$ a substack of $\stackc_{h'+1}$. We denote by $\stackc \sqsubseteq \stackc'$ the fact that $\stackc$ is a substack of $\stackc'$.

In addition to the operations $\pushone{a}$ and $\popn{1}$ that respectively pushes and 
 pops a symbol in the topmost order-$1$ stack, one needs extra operations to deal with the higher-order stacks: the $\popn{k}$ operation removes the topmost order-$k$ stack, while the $\pushn{k}$ duplicates it.

For an order-$n$ stack $\stackc = \mksk{\stackc_1 \ldots \stackc_\ell}_n$ 
and an order-$k$ stack $\stackca$ with $0\leq k<n$, we define $\stackc \pp \stackca$ 
as the order-$n$ stack obtained by pushing $\stackca$ on top of $\stackc$:
\[
\stackc \pp \stackca = 
\left\{
 \begin{array}{lcl}
 \mksk{\stackc_1 \ldots \stackc_\ell \,\stackca}_n & & \textrm{if $k=n-1$,} \\
 \mksk{\stackc_1 \ldots (\stackc_\ell \pp \stackca)}_n & & \textrm{otherwise.} \\
 \end{array}
\right.
\]

We first define the (partial) operations $\popn{i}$ and $\topn{i}$ with $i \geq
1$: $\topn{i}(\stackc)$ returns the top $(i-1)$-stack of $\stackc$, and
$\popn{i}(\stackc)$ returns $\stackc$ with its top $(i-1)$-stack
removed. Formally, for an order-$n$ stack $\mksk{\stackc_1 \cdots \stackc_{\ell+1}}_n$ with
$\ell\geq 0$,
%$1 \leq i \leq n$:
\[\begin{array}{rll}
\topn{i}(\stackc%\mksk{\stackc_1 \cdots \stackc_{\ell+1}}
) & = &
\left\{\begin{array}{ll} \stackc_{\ell+1} & \hbox{if $i = n$}\\
 \topn{i}(\stackc_{\ell+1}) \quad & \hbox{if $i < n$}
\end{array}\right.\\[15pt]
\popn{i}(\stackc%\mksk{\stackc_1 \cdots \stackc_{\ell+1}}_n
) & = & 
\left\{\begin{array}{ll}
\mksk{\stackc_1 \cdots \stackc_\ell}_n & \hbox{if $i = {n}$ and $\ell \geq 1$}\\
\mksk{\stackc_1 \cdots \stackc_\ell \,
\popn{i}(\stackc_{\ell+1})}  & \hbox{if $i < {n}$}
\end{array}\right.\\
\end{array}\]

By abuse of notation, we let $\topn{\order(\stackc)+1}(\stackc) = \stackc$. Note that
$\popn{i}(\stackc)$ is defined if and only if the height of $\topn{i+1}(\stackc)$ is
strictly greater than $1$. For example $\popn{2}(\mksk{\mksk{\bot \, a \,
b}_1}_2)$ is
undefined.

We now introduce the operations $\pushn{i}$ with $i\geq 2$ that duplicates the
top $(i-1)$-stack of a given stack. More precisely, for an order-$n$ stack $\stackc$
and for $2\leq i\leq n$, we let $\pushn{i}(s)=\stackc\pp \topn{i}(\stackc)$.

The last operation, $\pushone{a}$ pushes the symbol $a\in\Gamma$ on top of the
top $1$-stack. 
More precisely, for an order-$n$ stack $\stackc$ and for a symbol $a\in\Gamma$, we let
$\pushone{a}(\stackc)=\stackc\pp \mksk{a}_0$.

\begin{exa}
  Let $\stackc$ be the following $3$-stack of height $2$:
  \[\mksk{
 \mksk{
 \mksk{\bot b a a c}_{1}
% \mksk{\bot b b}_{1}
 \mksk{\bot b c c}_{1}
 \mksk{\bot c b a}_{1}
 }_2
 \mksk{
 \mksk{\bot b a a}_{1}
 %\mksk{\bot b c}_{1}
 \mksk{\bot b a b}_{1}
 }_2
}_3\]
Then $\topn{3}(\stackc)$ is the $2$-stack \[\mksk{
 \mksk{\bot b a a}_{1}
% \mksk{\bot b c}_{1}
 \mksk{\bot b a b}_{1}
 }_2\] 
and 
$\popn{3}(\stackc)$ is~the stack \[\stackc'=\mksk{
 \mksk{
 \mksk{\bot b a a c}_{1}
 \mksk{\bot b b}_{1}
% \mksk{\bot b c c}_{1}
 \mksk{\bot c b a}_{1}
 }_2
}_3\] Note that $\popn{3}(\popn{3}(\stackc))$ is undefined. Then
$\pushn{2}(\stackc')$ is the stack \[\mksk{
 \mksk{
 \mksk{\bot b a a c}_{1}
 \mksk{\bot b b}_{1}
% \mksk{\bot b c c}_{1}
 \mksk{\bot c b a}_{1}
 \mksk{\bot c b a}_{1}
 }_2
}_3\]
and 
\[\pushone{c}(\stackc') = \mksk{
 \mksk{
 \mksk{\bot b a a c}_{1}
 \mksk{\bot b b}_{1}
% \mksk{\bot b c c}_{1}
 \mksk{\bot c b a c}_{1}
 }_2
}_3\]
\end{exa}

\subsubsection{Stacks with links and their operations}\label{sec:HO-stack-links-app}

We now define a richer structure of higher-order stacks where we allow links. Intuitively, a stack with links is a higher-order stack in which any symbol
may have a link that points to an internal stack below it. This link may be used
later to collapse part of the stack.

Order-$k$ stacks with links are order-$k$ stacks with a richer stack
alphabet. Indeed, each symbol in the stack can be either an element $a\in\Gamma$
(\ie it is not the source of a link) or an element
$(a,\ell,h)\in\Gamma\times\{2,\cdots,k\}\times\nat$ (\ie it is
the source of an $\ell$-link pointing to the $h$-th $(\ell-1)$-stack inside the
topmost $\ell$-stack {below the source of the link}). 
Formally, order-$k$ stacks with links over alphabet $\Gamma$ are defined as
order-$k$ stacks
\footnote{Note that we therefore slightly generalise our previous definition as
we implicitly use an infinite stack alphabet, but this does not introduce any
technical change in the definition.}
 over alphabet $\Gamma\cup\Gamma\times\{2,\cdots,k\}\times\nat$.

\begin{exa}\label{ex:HOStackWLinks}
Stack $\stackc$ below is an order-$3$ stack with links:
\[\mksk{
 \mksk{
% \mksk{\bot b a a c}_{1}
 \mksk{\bot b}_{1}
 \mksk{\bot b c (c,2,1)}_{1}
 }_2
 \mksk{
 \mksk{\bot a}_{1}
 \mksk{\bot b c}_{1}\\
 \mksk{\bot b (a,2,1) (b,3,1)}_{1}
 }_2
}_3.\]

To improve readability when displaying $n$-stacks in examples, we shall
explicitly draw the links rather than using stack symbols in
$\Gamma\times\{2,\cdots,k\}\times\nat$. For instance, we  represent
$\stackc$ as follows:

\begin{center}\includegraphics[scale=.4]{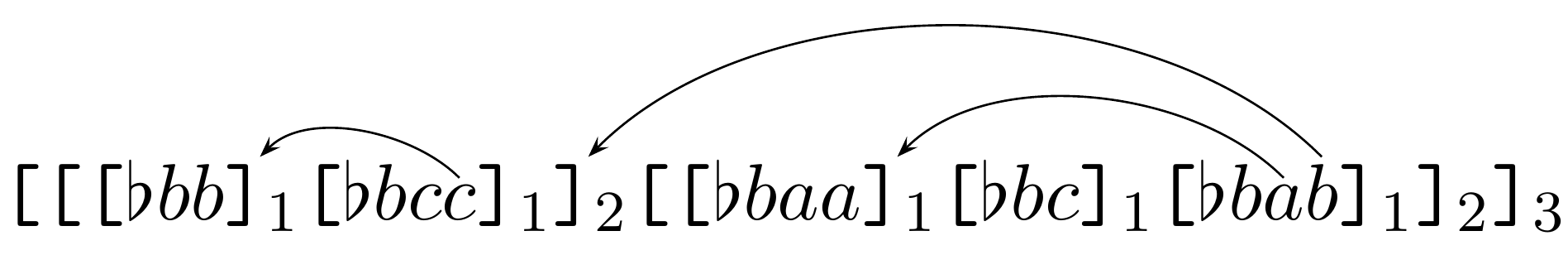}\end{center}

%%% Uncomment to retrieve code used to produce the included image
% \[\pstr[.1cm]{\ensuremath{\mksk{
%  \mklksk{s2}{
% \mklksk{s1}{\bot b b}_{1}
% \mksk{\bot b c \nd(n1-s1){c}}_{1}
% }_2
% \mksk{
% \mklksk{s3}{\bot b a a}_{1}
% \mksk{\bot b c}_{1}
% \mksk{\bot b \nd(n2-s3){a} \nd(n3-s2){b}}_{1}
% }_2
%}_3}}\]
\end{exa}

In addition to the previous operations $\popn{i}$, $\pushn{i}$ and
$\pushone{a}$, we introduce two extra operations: one to create links, and the
other to collapse the stack by following a link.
Link creation is made when pushing a new stack symbol, and the target of an
$\ell$-link is always the $(\ell-1)$-stack below the topmost one. Note that due to possible subsequent copies links can point to arbitrarily deep stacks. 
Formally, we define
$\pushlk{\ell}{a}(\stackc) = \pushone{(a,\ell,h)}$ where we let
$h=|\topn{\ell}(\stackc)|-1$ and require that $h>1$. 

The collapse operation is defined only when the topmost symbol is the source of
an $\ell$-link, and results in truncating the topmost $\ell$-stack to only keep
the component below the target of the link.
Formally, if $\topn{1}(\stackc)=(a,\ell,h)$ and $\stackc=\stackca\pp[\stackcb_1\cdots \stackcb_k]_{\ell}$ with
$k>h$ we let $\collapse(\stackc)=\stackca\pp[\stackcb_1\cdots \stackcb_h]_{\ell}$.

For any $k$, we let $\opn{k}(\Gamma)$ denote the set of all operations over order-$k$ stacks with links.

\begin{exa}\label{eg:3stack}
Let $\stackc = \mksk{\mksk{\mksk{ \, \bot \, a}_1}_2 \;
  \mksk{\mksk{ \, \bot}_1 \mksk{ \, \bot \, a}_1}_2}_3$. We have
  
  \begin{center}\includegraphics[scale=.5]{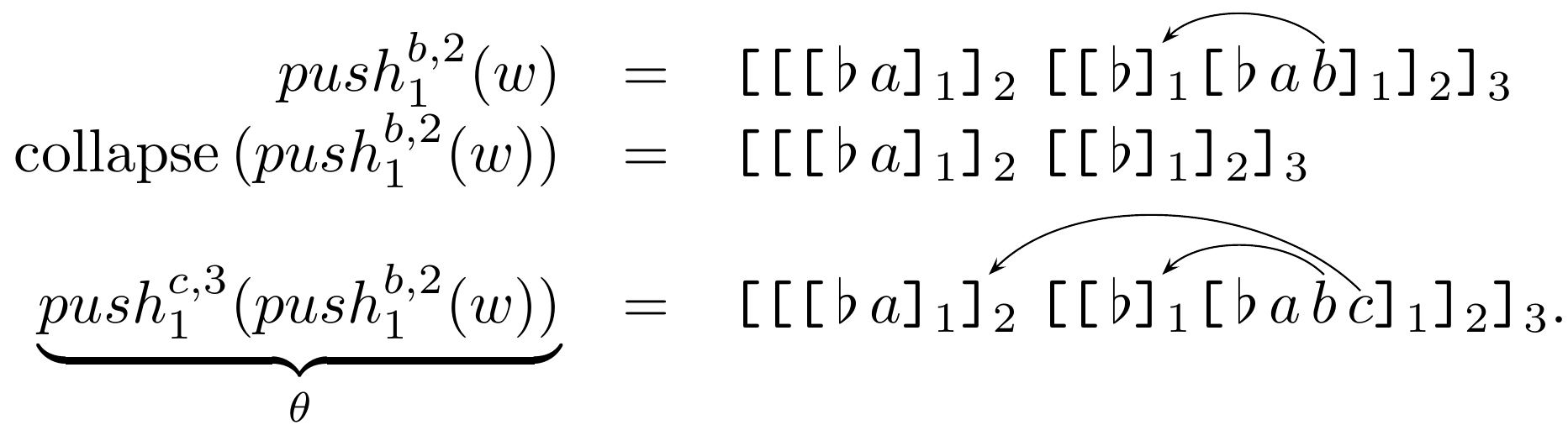}\end{center}
%  \small
% \[
%  \begin{array}{rll}
%   \pushlk{2}{b}(\stackc) & = & \pstr[.1cm]{\mksk{\mksk{\mksk{ \, \bot \,
%a}_1}_2 \;
%           \mksk{\mklksk{n1}{ \, \bot}_1
%             \mksk{ \, \bot \, a \,
%\nd(n2-n1){b}}_1}_2}_3} \\
%
%\collapse \, (\pushlk{2}{b}(\stackc)) & = & \mksk{ \mksk{\mksk{ \, \bot\, a}_1}_2 \;
%\mksk{\mksk{ \, \bot}_1}_2}_3
%\\
%   \underbrace{\pushlk{3}{c} ( \pushlk{2}{b}(\stackc))}_\theta
%        &= & \pstr[.75cm]{\mksk{\mklksk{n1}{\mksk{ \, \bot
%\, a}_1}_2 \;
%            \mksk{\mklksk{n2}{ \, \bot}_1
%             \mksk{ \, \bot \, a \,
%\nd(n3-n2){b}\, \nd(n4-n1){c}}_1}_2}_3}.
%  \end{array}
% \]
% \normalsize
 Then $\pushn{2}(\theta)$ and $\pushn{3}(\theta)$ are respectively
  \begin{center}\includegraphics[scale=.4]{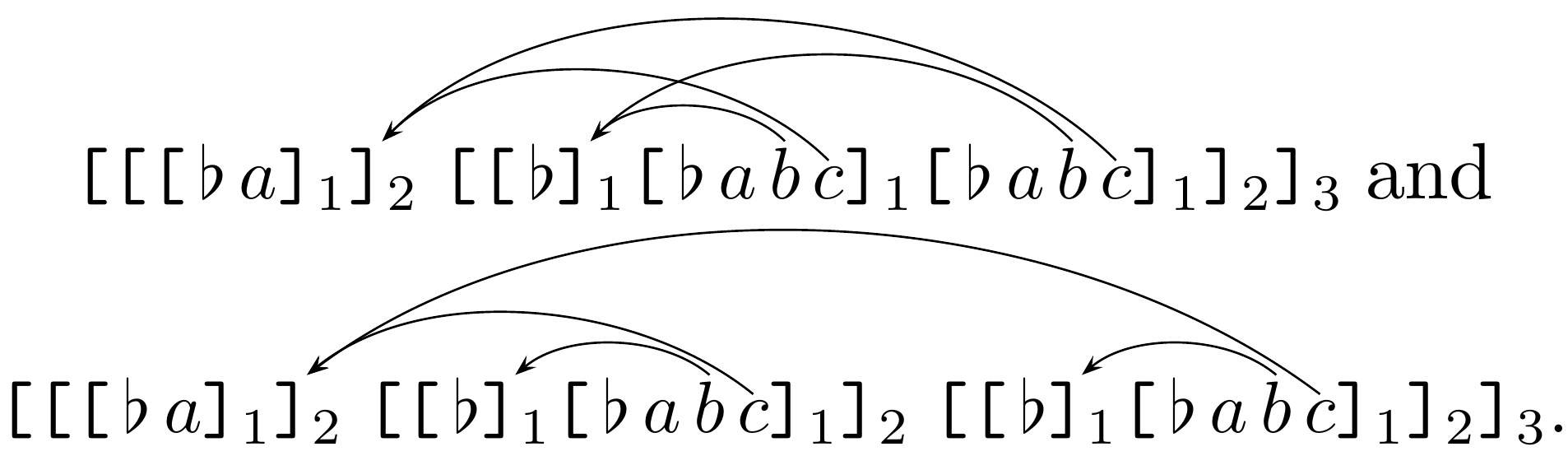}\end{center}
% \small
% \[
%  \begin{array}{c}
%   \pstr[.6cm]{\mksk{\mklksk{n1}{\mksk{ \, \bot \, a}_1}_2 \;
%       \mksk{\mklksk{n2}{ \, \bot}_1
%         \mksk{ \, \bot \, a \, \nd(n3-n2){b} \,
%\nd(n4-n1){c}}_1
%         \mksk{ \, \bot \, a \, \nd(n5-n2){b} \,
%\nd(n6-n1){c}}_1}_2}_3}
%   \;\hbox{\normalsize and} \\
%
%   \pstr[1cm]{\mksk{\mklksk{n1}{\mksk{ \, \bot \, a}_1}_2 \;
%      \mksk{\mklksk{n2}{ \, \bot}_1
%       \mksk{ \, \bot \, a \, \nd(n3-n2,40){b} \,
%\nd(n4-n1,39){c}}_1}_2 \;
%      \mksk{\mklksk{n5}{\, \bot}_1
%       \mksk{ \, \bot \, a \, \nd(n6-n5,40){b} \,
%\nd(n7-n1,37){c}}_1}_2}_3}.
%  \end{array}
% \]
%\normalsize

\noindent We have
\begin{align*}
\collapse \, (\pushn{2} (\theta)) &= \collapse \, (\pushn{3}(
\theta)) \\ & = \collapse(\theta) \\
 & = \mksk{\mksk{\mksk{ \, \bot \, a}_1}_2}_3.  
\end{align*}
 \end{exa}

\subsubsection{Higher-order and collapsible pushdown compound Kripke structures}\label{sec-CKS-HO}

We are now ready to generalise Definition~\ref{def-pcks} (pushdown
compound Kripke structures) to higher-order. Note that pushdown
compound Kripke structures will correspond to order-1 collapsible
pushdown compound Kripke structures.

\begin{defi}
  \label{def-hopcks}
An order-$k$ \emph{collapsible pushdown compound Kripke structure} or \CPCKS, over
local states $\{\setlstates_i\}_{i\in [n]}$, is a tuple 
$\CPCKS=(\stacka,\setstates,\relation,\lab,\sstate_\init)$ where
\begin{itemize}
    \item $\stacka$ is a finite stack alphabet together with a  bottom
      symbol $\stackb\notin\stacka$, and we let $\stacka_\stackb=\stacka\cup\{\stackb\}$;
\item $\setstates\subseteq \prod_{i\in [n]}\setlstates_i$  is a finite
  set of
states; %  with $\{\setlstates_{i}\}_{i\in [n]}$ a family of $n$ disjoint finite sets of
% local states, 
\item $\relation\subseteq\setstates\times\stacka_\stackb\times\setstates\times\opn{{k}}(\Gamma)$ is a
 transition
relation; 
\item $\lab:\setstates\times\Stacks_k \to 2^{\APf}$ is a regular labelling function (defined below);
\item $\sstate_\init \in \setstates$ is an initial state.
\end{itemize}
A \emph{higher-order pushdown compound Kripke structure} (\HOPCKS) is a collapsible pushdown compound Kripke structure that never uses the collapse operation.
\end{defi}

A \defin{configuration} is a pair
$\config=\conf{\sstate}{\stackc}\in\setstates\times\Stacks_k $ where
$\sstate$ is the current state and $\stackc\in\Stacks_k$ the current
stack with links; we call $\conf{\sstate_\init}{\bot_k}$ the
\defin{initial configuration}.

{From configuration
  $\conf{\sstate}{\stackc}$ with $\topone(\stackc)=\gamma$ the system
  can move to $\conf{\sstate'}{op(\stackc)}$ if
  $(\sstate,\gamma,\sstate',op)\in\relation$, which we write
  $\conf{\sstate}{\stackc}\conftrans{}\conf{\sstate'}{op(\stackc)}$. We
  assume that for every configuration $\conf{\sstate}{\stackc}$ there
  exists at least one configuration $\conf{\sstate'}{\stackc'}$ such
  that $\conf{\sstate}{\stackc}\conftrans{}\conf{\sstate'}{\stackc'}$.

  A \defin{path} in $\CPCKS$ is an infinite sequence of configurations
  $\spath=\config_{0}\config_{1}\ldots$ such that $\config_0$ is the
  initial configuration and for all $i\in\setn$,
  $\config_i\conftrans{}\config_{i+1}$.  A \defin{partial path} is a
  finite non-empty prefix of a path.  We let $\Paths(\CPCKS)$
  (resp. $\PPaths(\CPCKS)$) be the set of all paths (resp. partial
  paths) in $\CPCKS$.}

\paragraph{Regular labelling functions}\label{sec-regular-labelling-HO-app}

We need to adapt the concept of regular labelling functions to the
higher-order setting. In the case of pushdown compound Kripke
structures, recall that the criterion used was whether the stack
content belongs to a regular language. Equivalently, one could have
used an MSO-logic formula or a $\mu$-calculus formula on words (as
these frameworks are equivalent to finite-state automata when defining
sets of words). In the higher-order case, we take a similar approach,
\ie we consider a  model of automata working on higher-order stacks 
(\emph{resp.} stacks with links) that is equivalent with the $\mu$-calculus when defining sets of
higher-order stacks (\emph{resp.} stack with links). Note that it is
not equivalent with MSO-logic, which is in fact undecidable over
collapsible pushdown Kripke structures. We start by first giving the
definition for higher-order pushdown compound Kripke structures and
then move to collapsible pushdown compound Kripke structures.

In the (simpler) case of higher-order pushdown compound Kripke
structures, a \defin{regular labelling function} is given as a set of
finite word automata $\wautop{\sstate}$ over alphabet
$\stacka\cup\{[,]\}$, one for each proposition $p\in\APf$ and each
state $\sstate\in\setstates$. They define the labelling function that maps
to each state $\sstate\in\setstates$ and higher-order stack content
$\stackc\in(\stacka_{\bmchanged{\stackb}}\cup\{[,]\})^*$ the set $\lab(\sstate,\stackc)$ of
all atoms $p$ such that $\stackc \in \lang(\wautop{\sstate})$. In
other words, one reads the higher-order stack in a bottom-up fashion to
determine which atoms are satisfied in the current configuration. We
refer the reader to \cite{CHMOS08} for related work on this notion of
regular sets of higher-order stacks (without links).

In the general case of collapsible pushdown compound Kripke structures,
regular labelling functions are defined using a richer model of
automata introduced in \cite[Section~3]{BCOS10}, that we recall
here. Note that if one considers stacks without links, this model
corresponds to the previous one.

Let $\stackc$ be an order-$k$ collapsible stack. We first associate
with $\stackc=\stackc_{1},\cdots,\stackc_{\ell}$ a well-bracketed word 
of bracket-depth $k$,
$\flaten{\stackc}\in(\stacka_\stackb\cup\{\lsk,\rsk\})^*$, defined as follows:
  \[ \flaten{\stackc} \; := \; \begin{cases}
                   \lsk\flaten{\stackc_{1}}\cdots\flaten{\stackc_{\ell}}\rsk & \text{if }k\geq 1\\
                   \stackc & \text{if }k=0 \text{ (\ie $\stackc\in\stacka_\stackb$)}\\
                 \end{cases}
  \]
  In order to reflect the link structure, we define a partial function
  $\target{\stackc}:\{1,\cdots ,|\flaten{\stackc}|\}\rightharpoonup \{1,\cdots
  ,|\flaten{\stackc}|\}$ that assigns to every position in
  $\{1,\cdots ,|\flaten{\stackc}|\}$ the index of the end of the stack
  targeted by the corresponding link (if it exists; indeed this is
{only defined if the symbol at this position is in
  $\Gamma\times\{2,\cdots,k\}\times\nat$}  % undefined for $\bot,
% \lsk$ and $\rsk$
). Thus with $\stackc$ is associated
  the pair $\anglebra{\flaten{\stackc},\target{\stackc}}$; and with a set $\stackset$ of
  stacks is associated the set
  $\widetilde{\stackset}=\{\anglebra{\flaten{\stackc},\target{\stackc}}\mid \stackc\in \stackset\}$.

\begin{exa}
  Let
    \begin{center}\includegraphics[scale=.3]{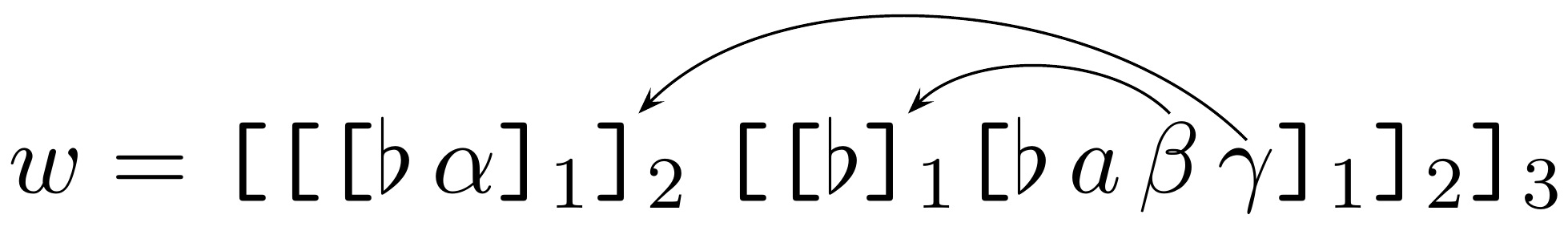}\end{center}

%  $\stackc=\pstr[.75cm]{\mksk{\mklksk{n1}{\mksk{ \bot \, \alpha}_1}_2 \;
%      \mksk{\mklksk{n2}{ \bot }_1 \mksk{ \bot \, a \,
%          \nd(n3-n2){\beta}\, \nd(n4-n1){\gamma}}_1}_2}_3}$.
  Then 
  
    \[\flaten{\stackc} =\mksk{
 \mksk{\mksk{\bot \,\alpha}}\mksk{\mksk{\bot}\mksk{\bot\,\alpha\,\beta\,\gamma}}}
\] 
%
%  \[\flaten{\stackc} = \pstr[.75cm]{\mksk{\mklksk{n1}{\mksk{\, \bot \,
%          \alpha}} \; \mksk{\mklksk{n2}{\bot} \mksk{ \,\bot\, \alpha
%          \, {\beta}\, {\gamma}}}}},\]  
    {$\target{15}=7$, $\target{16}=11$, and $\target{i}$ is undefined for
  all other $i\in \{1,\ldots,|\flaten{\stackc}|=19\}$.}
\end{exa}

We consider \emph{deterministic} finite automata working on such
representations of collapsible stacks. The automaton reads the word
$\flaten{\stackc}$ from left to right. On reading a letter that does not
have a link (\ie~$\targetf$ is undefined on its index) the automaton
updates its state according to the current state and the letter; on
reading a letter that has a link, the automaton updates its state
according to the current state, the letter and the state it was in
after processing the targeted position. A run is accepting if it ends
in a final state.

Formally, such an automaton is a tuple
$\anglebra{Q,A,q_{\init},F,\delta}$ where $Q$ is a finite set of
states, $A$ is a finite input alphabet, $q_{\init}\in Q$ is the
initial state, $F\subseteq Q$ is a set of final states and
$\delta: (Q\times A) \cup (Q\times A\times Q) \rightarrow Q$ is a
transition function. With a pair $\anglebra{u,\tau}$ where
$u=a_1\cdots a_n\in A^*$ and $\tau$ is a partial map from
$\{1,\cdots n\}\rightharpoonup \{1,\cdots n\}$, we associate a unique run
$r=r_0\cdots r_n$ as follows:
\begin{itemize}
\item $r_0=q_{\init}$;
\item for all $0\leq i< n$, \[r_{i+1}=
  \begin{cases}
    \delta(r_i,a_{i+1}) & \text{if } i+1\notin \dom(\tau),\\
    \delta(r_i,a_{i+1},r_{\tau(i+1)}) & \text{otherwise}.
  \end{cases}\]
%\item for all $0\leq i< n$, $r_{i+1}=\delta(r_i,a_{i+1},r_{\tau(i+1)})$ if $i+1\in \dom(\tau)$.
\end{itemize} 
The run is \emph{accepting} just if $r_{n}\in F$, and the pair $(u,\tau)$ is \emph{accepted} just if the associated run is accepting.

A \defin{regular labelling function} is given as a set of such
    automata $\wautop{\sstate}$ over alphabet $\stacka_{\bmchanged{\stackb}}\cup\{[,]\}$, one for each $p\in\APf$ and each
    $\sstate\in\setstates$. They define the labelling function that
    maps to each state $\sstate\in\setstates$ and stack with link $\stackc$
the set $\lab(\sstate,\stackc)$ of all atoms $p$ such that $\anglebra{\flaten{\stackc},\target{\stackc}}$ is accepted by $\wautop{\sstate}$.

\paragraph{Associated compound Kripke structure}

An order-$k$ \CPCKS $\CPCKS=(\stacka,\setstates,\relation,\lab,\sstate_\init)$
over local states $\{\setlstates_i\}_{i\in [n]}$  generates an infinite \CKS
$\CKS_\CPCKS=(\setstates',\relation',\lab',\sstate'_\init)$ over 
$\{\setlstates_i\}_{i\in [n+1]}$, where
\begin{itemize}
\item $\setlstates_{n+1}=\Stacks_k$,
\item $\setstates'=\setstates\times\Stacks_k$, 
\item $(\sstate',\stackc')\in\relation'(\sstate,\stackc)$ if 
$(\sstate,\stackc)\conftrans{}(\sstate',\stackc')$,
\item  $\lab'=\lab$ and
\item $\sstate'_\init=(\sstate_\init,\bot_k)$.
\end{itemize}

We write $\CPCKS\models\phi$ if $\CKS_\CPCKS\models\phiprime$, where $\phiprime$ is
obtained from $\phi$ by replacing each concrete observation $\cobs\subseteq
[n]$ with $\cobs'=\cobs\union\{n+1\}$, to reflect the fact that the
stack is always visible.

\subsection{Collapsible pushdown tree automata}\label{sec:CPTA}

\subsubsection{Definitions}\label{sec:CPTA-def}

We explain how to generalise to higher-order the definitions of
pushdown tree automata from Section~\ref{sec-direction-guided}. The
idea is simple: we now work with stacks with links instead of usual
stacks and the operation performed on the stack depends on the current
$\topone$ element. Formally, this leads to the following definition.

For $\APf$ a
  finite set of atomic propositions and $\Dirtree$ a finite set of
  directions, an \defin{order-$k$ collapsible alternating pushdown tree automaton
    (\CAPTA) on $(\APf,\Dirtree)$-trees} is a tuple
  $\auto=(\stacka,\tQ,\tdelta,\tq_{\init},\couleur)$ where $\stacka$
  is a finite stack alphabet with a special bottom symbol $\stackb$,
  $Q$ is a finite set of states, $\tq_{\init}\in \tQ$ is an initial
  state,
  $\tdelta : \tQ\times 2^{\APf}\times \stacka \rightarrow
  \boolp(\Dirtree\times \tQ\times\opn{k})$ is a transition function,
  and $\couleur:\tQ\to \setn$ is a colouring function. 

  Acceptance of a tree by a \CAPTA is again defined as a parity game,
  the only difference being that now the game we obtain is played on a
  richer underlying arena. While in the case of \APTA we had 
  pushdown games, we now obtain  collapsible pushdown games (see
  \cite{HMOS08} for more results on this). Note that such games are
  decidable, hence acceptance of a {regular} tree by a \CAPTA is decidable as
  well.

A \defin{collapsible nondeterministic pushdown tree
    automaton} (\CNPTA) is a collapsible alternating pushdown tree automaton
  $\nauto=(\stacka,\tQ,\tdelta,\tq_{\init},\couleur)$ such that for
  every $\tq\in \tQ$, $a\in 2^{\APf}$ and  $\stacks\in\stacka$,
  $\tdelta(\tq,a,\stacks)$ is written in disjunctive normal form and
  for every direction $\dir\in \Dirtree$, each disjunct contains
  exactly one element of $\{\dir\}\times Q\times \opn{k}$.

  The restrictions leading respectively to \emph{semi-alternating
    collapsible pushdown tree automata} and \emph{$X$-guided stack
    alternating collapsible pushdown tree automata} are essentially
  the same as in Section~\ref{subsec-direction-guided} except that now
  the requirement is that the stack operation is the same when going
  in the same direction (previously, we were requiring that the same
  content was pushed on the stack). Formally, we have the following
  definition (generalising Definition~\ref{def-guided}):

\begin{defi}
  
An order-$k$ \CAPTA $\auto=(\stacka,\tQ,\tdelta,\tq_{\init},\couleur)$ over $\Dirtree\times \Dirtreea$-trees has an
  \defin{$\Dirtree$-guided stack}, or simply is \defin{$\Dirtree$-guided}, if there exists a
  function $\tdeltastack: 2^{\APf}\times\stacka\times\Dirtree\to \opn{k}$ such that for all
  $(\tq,a,\stacks)\in\tQ\times 2^{\APf}\times \stacka$, all atoms
  appearing in
  $\tdelta(\tq,a,\stacks)$ are of the form $[(\dir,\dira),\tq',\tdeltastack(a,\stacks,x)]$.
\end{defi}

\subsubsection{Projection, simulation and narrowing}\label{sec:CPTA-res}

Following the same proof as for Proposition~\ref{theo-projection} we have the following generalisation to higher-order.

\begin{prop}%[Projection]
\label{theo-projection-HO}
  Given an \CNPTA $\NTA$  and  $p\in\APf$, one can build  an \CNPTA
  $\proj{\NTA}$
  such that for every pointed tree $(\ltree,\noeud)$ and initial {$k$-stack}
 {$\stackc_\init\in\Stacks_k$},
  \begin{align*}
    (\ltree{,\noeud})\in\,& \lang(\proj{\NTA},\stackc_\init) \mbox{\bigiff}\\
&    \exists \plab \mbox{ a $p$-\labeling for $\ltree$ s.t. }(\ltree\prodlab\plab,\noeud)\in\lang(\NTA,\stackc_\init).    
  \end{align*}
\end{prop}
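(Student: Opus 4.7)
The plan is to follow verbatim the construction used in the proof of Proposition~\ref{theo-projection}, replacing the stack alphabet operations with the higher-order operations of $\opn{k}$ and observing that the argument is oblivious to the precise nature of the stack. Concretely, given a \CNPTA $\nauto=(\stacka,\tQ,\tdelta,\tq_{\init},\couleur)$, I would define $\proj{\NTA}\egdef(\stacka,\tQ,\tdelta',\tq_{\init},\couleur)$ where for every $(\tq,a,\stacks)\in \tQ\times 2^{\APf\setminus\{p\}}\times\stacka$,
\[\tdelta'(\tq,a,\stacks)\egdef \tdelta(\tq,a\setminus\{p\},\stacks)\ou \tdelta(\tq,a\cup\{p\},\stacks).\]

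The first thing I would check is that $\proj{\NTA}$ is indeed a \CNPTA: each of the two disjuncts is already in disjunctive normal form with exactly one atom of the form $\{\dir\}\times Q\times\opn{k}$ per direction $\dir\in\Dirtree$, so the union, also in disjunctive normal form, preserves this property. Note that this step uses only the shape of the transition function, not the algebra of stack operations, so it works identically for $\opn{k}$ as for $\stacka^*$.

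For correctness, I would mirror the two directions of the original proof, interpreting acceptance through the parity game $\tgame{\NTA}{\ltree}{\noeud,\stackc_\init}$ whose positions and moves now involve the collapsible pushdown operations. For the forward direction, given a $p$-labelling $\plab$ such that Eve has a winning strategy $\strat$ in $\tgame{\NTA}{\ltree\prodlab\plab}{\noeud,\stackc_\init}$, the new game $\tgame{\proj{\NTA}}{\ltree}{\noeud,\stackc_\init}$ differs only by requiring Eve to pick, at each newly visited node with label $a$, between $\tdelta(\tq,a\setminus\{p\},\stacks)$ and $\tdelta(\tq,a\cup\{p\},\stacks)$; she picks the disjunct compatible with $\plab(\noeud)$ and plays $\strat$ otherwise. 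Conversely, from a winning strategy in $\tgame{\proj{\NTA}}{\ltree}{\noeud,\stackc_\init}$, nondeterminism guarantees that each node of $\ltree$ below $\noeud$ is visited at most once along the outcomes, so the choice of disjunct made by Eve in that unique visit defines a well-defined $p$-labelling $\plab$, and the induced strategy in $\tgame{\NTA}{\ltree\prodlab\plab}{\noeud,\stackc_\init}$ is winning.

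The only conceptual point to verify is that the ``unique visit per node'' argument still holds in the collapsible higher-order setting; this is immediate, as it is a property of the input-tree component of positions and of nondeterminism of the transition function, neither of which depends on whether the stack is first-order, higher-order, or collapsible. Hence there is no genuine obstacle: the projection construction and its correctness proof carry over unchanged from Proposition~\ref{theo-projection} to the collapsible higher-order case.
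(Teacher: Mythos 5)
Your proposal is correct and matches the paper exactly: the paper proves this proposition simply by noting that the construction and correctness argument of Proposition~\ref{theo-projection} carry over verbatim, since the disjunction $\tdelta(\tq,a\setminus\{p\},\stacks)\ou\tdelta(\tq,a\cup\{p\},\stacks)$ and the ``each node visited once by a nondeterministic automaton'' argument are independent of whether the stack component lives in $\stacka^*$ or $\opn{k}$. Your explicit check that the result remains a \CNPTA and that the unique-visit property is a feature of the input-tree component only is exactly the right observation.
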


Now, moving to simulation, one easily generalises the proof in~\cite{aminof2013pushdown} to higher-order. 

\begin{thm}%[Simulation]
\label{theo-simulation-HO}
Given a semi-alternating \CAPTA $\SPTA$, one can build % in exponential time
an \CNPTA $\NTA$ % of exponential size
 % and linear index
such that for every initial {$k$-stack $\stackc\in\Stacks_k$,} $\lang(\NPTA,\stackc)=\lang(\SPTA,\stackc)$. 
\end{thm}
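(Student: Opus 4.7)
The strategy is to replay, in the higher-order setting, the simulation procedure of~\cite{aminof2013pushdown} that was recalled in the proof of Proposition~\ref{prop-stable-simulation}. The construction rests on the fact that semi-alternation continues to guarantee a uniquely defined stack at every visited node of the input, even when that stack is a collapsible $k$-stack with links.

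First I would split the transition function of the semi-alternating \CAPTA $\SPTA = (\stacka, \tQ, \tdelta, \tq_\init, \couleur)$ into a state part $\tdeltastate : \tQ \times 2^{\APf} \times \stacka \to \boolp(\Dirtree \times \tQ)$ and a stack part $\tdeltastack : 2^{\APf} \times \stacka \times \Dirtree \to \opn{k}(\Gamma)$, as permitted by the definition of semi-alternation for \CAPTAs. Fixing an input tree $\ltree$ and an initial $k$-stack $\stackc_\init \in \Stacks_k$, an induction along the unique path from the root to any node $\noeud$ shows that all copies of $\SPTA$ visiting $\noeud$ carry the same stack $\stackc_\noeud$: the operation applied at each step is fully determined by the label, the $\topn{1}$ symbol, and the direction taken.

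Next I would build the decorated tree $\ltree'$ obtained by enriching the label of each $\noeud$ with the $\topn{1}$ base symbol of $\stackc_\noeud$, viewed as an element of the finite alphabet $\stacka$ (any link information attached to that symbol is immaterial for $\SPTA$'s transitions and can be dropped). I would then define a classical \ATA $\tilde \ATA$, \emph{without} any pushdown store, over $(2^{\APf} \times \stacka)$-labelled $\Dirtree$-trees, whose transition at state $\tq$ reading the decorated label $(a, \stacks)$ is exactly $\tdeltastate(\tq, a, \stacks)$. A direct isomorphism of the two acceptance games shows that $\SPTA$ accepts $\ltree$ from $\stackc_\init$ if and only if $\tilde \ATA$ accepts $\ltree'$. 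Applying the classical Muller--Schupp nondeterminisation of alternating tree automata then yields an \NTA $\tilde \NTA = (\tQ', \tdelta', \tq'_\init, \couleur')$ equivalent to $\tilde \ATA$.

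Finally I would plug the stack back in: define $\NTA \egdef (\stacka, \tQ', \tdelta'', \tq'_\init, \couleur')$ where $\tdelta''(\tq, a, \stacks)$ is obtained from $\tdelta'(\tq, (a, \stacks))$ by replacing every atom $[\dir, \tq'']$ with $[\dir, \tq'', \tdeltastack(a, \stacks, \dir)]$. Composing the two equivalences gives $\lang(\NTA, \stackc_\init) = \lang(\SPTA, \stackc_\init)$ for every $\stackc_\init \in \Stacks_k$, and $\NTA$ is a \CNPTA since $\tilde \NTA$ is. The only genuine obstacle in moving from order $1$ to order $k$ is to verify that the decoration step remains sound: although stacks with links range over an infinite alphabet, the transitions of $\SPTA$ only inspect the base symbol of the $\topn{1}$ element, which still lies in the finite set $\stacka$, so the decorated label alphabet $2^{\APf} \times \stacka$ remains finite and the classical nondeterminisation of alternating tree automata applies verbatim as a black box.
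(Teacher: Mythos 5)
Your proposal is correct and follows essentially the same route as the paper: the paper's proof also reduces to the observation that, since semi-alternation now fixes the stack \emph{operation} per direction, the (collapsible) stack content is still uniquely determined by the node visited, so the decorate--strip--nondeterminise--reattach pipeline from the order-$1$ case carries over unchanged. Your additional remark that transitions only inspect the base symbol in $\stacka$ (so the decorated alphabet stays finite despite links) is a worthwhile detail the paper leaves implicit.
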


\begin{proof}
The key idea in the proof in the pushdown case is to remark that it is sufficient to do a subset construction on the set of states as the stack is the same when moving down in the same direction in the tree. Here, the same approach is also working as the stack operation (hence the stack with links) is the same when moving down in the same direction in the tree.
\end{proof}

As the previous construction also preserves $\Dirtree$-guidedness,
we can refine the above result as follows:

\begin{prop}
\label{prop-stable-simulation-HO}
Given an $\Dirtree$-guided \CAPTA $\SPTA$, one can build an
$\Dirtree$-guided \CNPTA $\NPTA$ % of exponential size
 % and linear index
such that for every initial {$k$-stack $\stackc\in\Stacks_k$,} $\lang(\NPTA,\stackc)=\lang(\SPTA,\stackc)$.   
\end{prop}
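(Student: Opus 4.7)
The plan is to mirror the proof of Proposition~\ref{prop-stable-simulation} in the order-1 setting, using Theorem~\ref{theo-simulation-HO} as the black-box nondeterminization step and then checking that this step preserves $\Dirtree$-guidedness. Concretely, I would start from an $\Dirtree$-guided \CAPTA $\SPTA=(\stacka,\tQ,\tdelta,\tq_\init,\couleur)$ over $\Dirtree\times\Dirtreea$-trees, together with its stack update function $\tdeltastack:2^{\APf}\times\stacka\times\Dirtree\to \opn{k}$, so that every atom appearing in $\tdelta(\tq,a,\stacks)$ has the form $[(\dir,\dira),\tq',\tdeltastack(a,\stacks,\dir)]$. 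In particular the operation performed on the stack is already independent of the $\Dirtreea$-component of the direction.

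Next I would unpack the construction underlying Theorem~\ref{theo-simulation-HO}, which is the higher-order lift of the one in~\cite{aminof2013pushdown}. By semi-alternation, the stack with links visited by any copy of $\SPTA$ at a given input node is uniquely determined by that node, so one can canonically decorate the input tree with the top stack symbol at every node and obtain a classical (stackless) alternating tree automaton $\widetilde{\ATA}$ equivalent to $\SPTA$ on decorated trees. Applying a standard ATA-to-NTA simulation to $\widetilde{\ATA}$ produces an \NTA $\widetilde{\NTA}=(\tQ',\tdelta'',\tq'_\init,\couleur')$, and one then reinstates the stack operations to obtain a \CNPTA $\NPTA=(\stacka,\tQ',\tdelta''',\tq'_\init,\couleur')$, where $\tdelta'''(\tq,a,\stacks)$ is obtained from $\tdelta''(\tq,(a,\stacks))$ by replacing each atom $[(\dir,\dira),\tq']$ with $[(\dir,\dira),\tq',\tdeltastack(a,\stacks,\dir)]$.

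The key point, and the one I would spell out carefully, is that this last rewriting uses the very same $\tdeltastack$ as in $\SPTA$, and hence the resulting \CNPTA is $\Dirtree$-guided by the same witness function: every atom of $\tdelta'''(\tq,a,\stacks)$ has the form $[(\dir,\dira),\tq',\tdeltastack(a,\stacks,\dir)]$, with the stack operation depending only on $(a,\stacks,\dir)$. Equivalence with $\SPTA$ on every initial $k$-stack $\stackc\in\Stacks_k$ is inherited directly from Theorem~\ref{theo-simulation-HO}, and I would only need to note that the acceptance game in the higher-order case is a collapsible pushdown parity game, whose behaviour is invariant under the state re-encoding performed by the simulation.

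I do not anticipate any serious obstacle: the ingredients are all in place. The only subtlety to check is that the subset-construction step taking $\widetilde{\ATA}$ to $\widetilde{\NTA}$ groups together states that, in any given direction $(\dir,\dira)$, will all apply the same stack operation. This is automatic because $\widetilde{\ATA}$ carries no stack information at all and the operations are attached only in the very last step, via $\tdeltastack$, which depends solely on $(a,\stacks,\dir)$. Hence $\Dirtree$-guidedness is preserved verbatim and the proposition follows.
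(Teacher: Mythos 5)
Your proposal is correct and follows essentially the same route as the paper: the paper proves Proposition~\ref{prop-stable-simulation-HO} by observing that the simulation construction of Theorem~\ref{theo-simulation-HO} (decorate the input tree using semi-alternation, simulate the resulting stackless \ATA, then reattach the stack operations via the original $\tdeltastack$) manifestly preserves $\Dirtree$-guidedness, which is exactly the argument you spell out. Your final remark about the subset construction carrying no stack information is precisely the reason the paper treats this as immediate.
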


Finally, narrowing directly extends to higher-order. 

\begin{thm}
Given a \CAPTA $\SPTA$ on $\Dirtree\times\Dirtreea$-trees, one can build 
a  \CAPTA ${\narrow[\Dirtree]{\APTA}}$ on $\Dirtree$-trees such
  that for every pointed $(\APf,\Dirtree)$-tree $(\ltree,\noeud)$,
  every $\noeud'\in(\Dirtree\times\Dirtreea)^+$ such that $\projI[\Dirtree]{\noeud'}=\noeud$,
  and every initial {$k$-stack $\stackc\in\Stacks_k$,} % rooted in $\dir_J$ and
  % $\dir_I\in\SI$ such that $\projI[J]{\dir_I}=\dir_J$,
  % ${\narrow[\Dirtree]{\ATA}}$ has the same size as $\ATA$, and
%  for all $\dira\in\Dirtreea$,
\[(\ltree,\noeud)\in\lang(\narrow[\Dirtree]{\APTA},\stackc_\init) \mbox{ iff
 }(\liftI[\Dirtreea]{}{\ltree},\noeud')\in\lang(\APTA,\stackc_\init).\]
\end{thm}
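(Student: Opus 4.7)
The proof plan is essentially a verbatim transposition of the narrowing theorem for \APTAs from Section~\ref{sec-narrowing} to the higher-order collapsible setting, exploiting the fact that the narrowing operation only touches the direction component of atoms in transition formulas and is oblivious to whether the stack content is a word over $\stacka$ or a collapsible $k$-stack with links.

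First I would fix notation: for a formula $\pform\in\boolp((\Dirtree\times\Dirtreea)\times \tQ\times\opn{k})$, let $\projI[\Dirtree]{\pform}\in\boolp(\Dirtree\times \tQ\times\opn{k})$ be obtained by replacing each atom $[(\dir,\dira),\tq,op]$ with $[\dir,\tq,op]$. Then define $\narrow[\Dirtree]{\SPTA}=(\stacka,\tQ,\tdelta',\tq_\init,\couleur)$ by setting $\tdelta'(\tq,a,\stacks)\egdef\projI[\Dirtree]{\tdelta(\tq,a,\stacks)}$. Note that the stack alphabet, state set, colouring and initial data are unchanged, and that stack operations $op\in\opn{k}$ are carried over verbatim: this is sound because stack operations act on the automaton's internal $k$-stack and never refer to directions of the input tree.

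Next I would establish correctness by a two-way translation between winning strategies in the two acceptance games, exactly as in the proof of Theorem~\ref{theo-narrow}. For the forward direction, given a pointed tree $(\ltree,\noeud)$ and $\noeud'\in(\Dirtree\times\Dirtreea)^+$ with $\projI[\Dirtree]{\noeud'}=\noeud$, assume $(\liftI[\Dirtreea]{}{\ltree},\noeud')\in\lang(\SPTA,\stackc_\init)$ and let $\strat$ be Eve's winning strategy in $\tgame{\SPTA}{\liftI[\Dirtreea]{}{\ltree}}{\noeud',\stackc_\init}$. Projecting each position $(\noeuda,\tq,\stackc,\pform)$ to $(\projI[\Dirtree]{\noeuda},\tq,\stackc,\projI[\Dirtree]{\pform})$ yields a set of plays in $\tgame{\narrow[\Dirtree]{\SPTA}}{\ltree}{\noeud,\stackc_\init}$; this set is closed under Adam's choices and so defines (the outcomes of) a strategy $\strat'$ for Eve. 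Since the projection leaves the state and stack components untouched, the sequence of colours along any outcome is unchanged, so $\strat'$ is winning.

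For the converse, given a winning strategy $\strat$ in $\tgame{\narrow[\Dirtree]{\SPTA}}{\ltree}{\noeud,\stackc_\init}$, one builds a strategy $\strat'$ in $\tgame{\SPTA}{\liftI[\Dirtreea]{}{\ltree}}{\noeud',\stackc_\init}$ by mimicking $\strat$ on projected plays: whenever $\strat$ selects a disjunct $\pform_i$ in $\tgame{\narrow[\Dirtree]{\SPTA}}{\ltree}{\noeud,\stackc_\init}$, $\strat'$ selects the unique disjunct $\pform'_i$ of the corresponding formula in $\tgame{\SPTA}{\liftI[\Dirtreea]{}{\ltree}}{\noeud',\stackc_\init}$ such that $\projI[\Dirtree]{\pform'_i}=\pform_i$. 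Here one uses the fact that, by definition of the widening $\liftI[\Dirtreea]{}{\ltree}$, the label of a node $\noeuda'$ in the widened tree equals the label of $\projI[\Dirtree]{\noeuda'}$ in $\ltree$, so the transition formulas encountered by $\strat'$ and their projections seen by $\strat$ carry the same automaton states and stack operations. The sequence of states, and hence of colours, is identical along corresponding plays, so $\strat'$ is winning. No step here presents a genuine obstacle: the collapsible pushdown machinery is entirely encapsulated inside the atoms $[(\dir,\dira),\tq,op]$, and narrowing commutes trivially with the $\opn{k}$-component.
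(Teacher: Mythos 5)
Your proposal is correct and follows exactly the route the paper intends: the paper gives no separate proof for this theorem, merely noting that ``narrowing directly extends to higher-order,'' i.e.\ that the construction and strategy-translation argument of Theorem~\ref{theo-narrow} carry over verbatim once the pushed word $\stackc'\in\stacka^*$ in each atom is replaced by a stack operation $op\in\opn{k}$. Your write-up makes this explicit and correctly identifies the one point that needs checking, namely that the projection touches only the direction component and leaves states, colours and stack operations (hence the collapsible-stack evolution in the acceptance game) unchanged.
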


  \begin{prop}
  \label{prop-narrow-stable-HO}
    If a \CAPTA $\APTA$ over $\Dirtree\times \Dirtreea \times \Dirtreeb$-trees is $\Dirtree$-guided, then so
    is ${\narrow[\Dirtree\times\Dirtreea]{\APTA}}$.
  \end{prop}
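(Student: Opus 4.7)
The plan is to adapt verbatim the one-line argument used in the proof of Proposition~\ref{prop-narrow-stable}, noting that the higher-order narrowing operation acts on transition atoms in exactly the same way as in the order-$1$ case: it projects the $\Dirtreeb$-component out of directions while leaving target states and stack operations untouched. I would first unfold the hypothesis that $\APTA$ is $\Dirtree$-guided, which yields a function $\tdeltastack:2^{\APf}\times\stacka\times\Dirtree\to\opn{k}$ witnessing that every atom of $\tdelta(\tq,a,\stacks)$ has its stack-operation component determined by the $\Dirtree$-component of the direction alone.

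Next I would examine the transitions of $\narrow[\Dirtree\times\Dirtreea]{\APTA}$: by the higher-order analogue of the narrowing construction from the proof of Theorem~\ref{theo-narrow}, each atom $[(\dir,\dira,u),\tq',op]$ occurring in $\tdelta(\tq,a,\stacks)$ is replaced by $[(\dir,\dira),\tq',op]$. Since $\tq'$ and $op$ are unchanged and $\dir$ is still present in the new direction $(\dir,\dira)\in\Dirtree\times\Dirtreea$, the very same function $\tdeltastack$, now regarded as a map $2^{\APf}\times\stacka\times\Dirtree\to\opn{k}$ whose argument is extracted from the $\Dirtree$-component of the remaining direction, witnesses that $\narrow[\Dirtree\times\Dirtreea]{\APTA}$ is $\Dirtree$-guided.

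No substantive obstacle is anticipated, as the statement is essentially a syntactic observation about the narrowing construction. The only point worth emphasising is that $\Dirtree$-guidedness survives the narrowing precisely because the $\Dirtree$-component is not hidden by the operation $\narrow[\Dirtree\times\Dirtreea]{}$; this is exactly the design motivation for introducing direction-guided pushdown automata in Section~\ref{subsec-direction-guided}, since general semi-alternating automata fail to be stable under narrowing when the component controlling stack operations is erased.
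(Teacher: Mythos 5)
Your proposal is correct and is essentially the paper's own argument: the order-$1$ proof of Proposition~\ref{prop-narrow-stable} consists of exactly the observation that the stack update function $\tdeltastack$ of $\APTA$ also serves for the narrowed automaton, and the higher-order case is handled identically since narrowing only deletes the $\Dirtreeb$-component of directions while leaving the $\Dirtree$-component and the stack operations in $\opn{k}$ untouched.
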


\subsection{Model checking hierarchical \QCTLsi on collapsible pushdown compound Kripke structures}\label{sec-QCTL-HO}

We now describe how one establishes an extension to higher-order of Theorem~\ref{theo-decidable-QCTLi}. 

\begin{thm}
  \label{theo-decidable-QCTLi-HO}
Model checking \QCTLsih on collapsible pushdown compound Kripke structures is decidable.
\end{thm}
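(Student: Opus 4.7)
The plan is to follow the same road map as the proof of Theorem~\ref{theo-decidable-QCTLi}, with three main adaptations: succinct unfoldings must record higher-order stack operations, the automata construction uses \CAPTAs instead of \APTAs and relies on Propositions~\ref{theo-projection-HO}, \ref{prop-stable-simulation-HO} and~\ref{prop-narrow-stable-HO}, and the atomic-proposition case has to accommodate the richer notion of regular labelling from Section~\ref{sec-CKS-HO}.

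First, I would generalise succinct unfoldings (Definition~\ref{dec-succ-unfold}) to \CPCKSs. Fix an order-$k$ \CPCKS $\CPCKS=(\stacka,\setstates,\relation,\lab,\sstate_\init)$, let $\Dirop=\{op\mid (\sstate,\stacks,\sstate',op)\in\relation\}\cup\{\epsilon\}$, and for a partial path $\spath$ define its succinct representation as the sequence of pairs $(\sstate_i,op_i)\in\setstates\times\Dirop$ recording, at each step, the state reached and the stack operation applied. Starting from the initial configuration $\conf{\sstate_\init}{\bot_k}$ the full stack with links at any node of the unfolding can be reconstructed by applying the successive operations along the unique path from the root; hence the succinct unfolding is a tree over the finite alphabet $\setstates\times\Dirop$. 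An analogue of Lemma~\ref{lem-equivalence} then says that $\CPCKS\modelst\phi$ iff $\unfold[\CPCKS]{\sstate_\init,\bot_k}\modelst\phi_{n+1}$.

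Second, I would prove the higher-order analogue of Lemma~\ref{lem-final} by induction on subformulas of $\Phi_{n+1}$, building for each subformula $\phi$ and state $\sstate\in\setstates$ a $\Dirop$-guided \CAPTA $\bigauto[\sstate]{\phi}$ whose stack update function $\tdeltastack(a,\stacks,op)=op$ simply replays the operation coded in the direction taken. The cases for negation, disjunction, $\E\psi$ and $\existsp{\cobs}\phi'$ are essentially unchanged from Section~\ref{sec-automata-construction}, using Propositions~\ref{theo-projection-HO}, \ref{prop-stable-simulation-HO} and~\ref{prop-narrow-stable-HO} in place of their pushdown counterparts; the hierarchical assumption together with the fact that narrowing preserves $\Dirop$-guidedness (Proposition~\ref{prop-narrow-stable-HO}) ensures that all the automata built remain $\Dirop$-guided, so that Theorem~\ref{theo-simulation-HO} is applicable whenever we need to nondeterminise before projecting.

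The main obstacle is the atomic-proposition case. In the pushdown case the regular labelling $\wautop{\sstate}$ was simulated by popping the automaton's stack symbol by symbol while descending in the input tree. In the higher-order case the labelling is defined, following~\cite{BCOS10}, by a finite automaton reading the bracketed linearisation $\flaten{\stackc}$ with side queries along links. The clean way to handle this is to use the closure property that alternating collapsible pushdown automata can recognise such regular sets of stacks with links: for each $p\in\APfree$ and each control state $\sstate$ there is an \CAPTA $\CAPTA^p_\sstate$ that, started in some node with stack content $\stackc$, accepts iff $\anglebra{\flaten{\stackc},\target{\stackc}}$ is accepted by $\wautop{\sstate}$, and moves arbitrarily downwards in the input tree (since the label only depends on $\sstate$ and $\stackc$). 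Prefixing $\CAPTA^p_\sstate$ with a fresh initial state whose transition is $\top$, $\perp$, or delegates to $\CAPTA^p_\sstate$ depending on whether $p$ is quantified and, if so, on the label of the current node, yields $\bigauto[\sstate]{p}$; this automaton is trivially $\Dirop$-guided because its stack moves depend only on the stack symbol being consumed, and we may pad all its transitions to apply the operation coded in the chosen direction as the stack update $\tdeltastack$.

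Finally, applying the inductive construction to $\Phi'=\Phi_{n+1}$ and $\sstate_\init$ and specialising to the empty-labelled full $\dirphi[\Phi']$-tree $\ltree$ rooted in $(\projI[{\Iphi[\Phi']}]{\sstate_\init},\bot_k)$, we obtain that $\CPCKS\modelst\Phi$ iff $\ltree$ is accepted by $\bigauto[\sstate_\init]{\Phi'}$ from the initial stack $\bot_k$. Since $\ltree$ is regular, this reduces to checking emptiness of the product of $\bigauto[\sstate_\init]{\Phi'}$ with a finite Kripke structure generating $\ltree$, and emptiness of \CAPTAs is decidable because it reduces to solving a collapsible pushdown parity game (see e.g.~\cite{HMOS08}).
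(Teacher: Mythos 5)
Your road map matches the paper's: generalise succinct unfoldings to record higher-order stack operations, redo the inductive automata construction with \CAPTAs using the higher-order versions of projection, simulation and narrowing, and finish by checking emptiness of a \CAPTA on a regular tree via collapsible pushdown parity games. The gap is in the atomic-proposition case, which is precisely the one step the paper singles out as genuinely new. You assert that for each $p\in\APfree$ and each control state $\sstate$ there is a \CAPTA $\mathcal{A}^{p}_{\sstate}$ that, started with stack content $\stackc$, accepts iff $\anglebra{\flaten{\stackc},\target{\stackc}}$ is accepted by $\wautop{\sstate}$ --- but you give no construction, and the obvious one (pop the stack symbol by symbol while simulating $\wautop{\sstate}$, as in the order-$1$ case) fails for stacks with links: when the word automaton reads a linked symbol, its next state depends on the state reached after processing the \emph{target} of the link, which sits arbitrarily deep in the stack, and getting there destroys the very stack you are trying to read. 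This is also why the paper drops the neutral direction $\epsilon$ from the set of stack directions (you keep it), since its only purpose was to support the now-broken ``destroy the stack while reading it'' trick.

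The tool the paper uses instead is the reflection theorem of \cite{BCOS10} (Theorem~\ref{thm-closure-CAPTA}): rather than testing membership in $\lang(\wautop{\sstate})$ on demand when a leaf $\phi=p$ of the formula is reached, one enriches the \CAPTA \emph{globally} --- extended stack alphabet and control states --- so that at every configuration the control state already records, for each $\wautop{\sstate}$, whether the current stack content is accepted; this construction is applied at every step of the inductive construction, and the case $\phi=p$ then reads the answer off the control state. The local gadget whose existence you assume is essentially equivalent to this reflection property, which the paper stresses is a highly non-trivial result; it cannot be taken for granted or dismissed as a routine closure property. Apart from this, your sketch follows the paper's argument and the remaining cases are handled correctly.
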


\subsubsection{Succinct unfoldings}

The first notion that needs to be generalised is the one of succinct
unfoldings. Recall that the idea was to consider trees over a
\emph{finite} set of directions, to later use tree automata. The trick
was to choose as set of directions for these trees the set of all
possible finite words that the pushdown system to be model-checked could push on the stack. Here,
as we have to handle stacks with links, the stack can be deeply
modified by a single transition and the set of all those possible
modifications is no longer finite. However, a simple solution consists
in choosing as set of directions the set of all possible
higher-order stack operations used by the collapsible pushdown
system.

Hence, it is enough to record only the operations made on the
stack in each node: then, starting from the root and the initial stack
content $\bot_k$, one can reconstruct the stack content at each node by
following the unique path from the root to this node, and applying the
successive operations on the stack. By doing so we obtain a tree over the finite set
of directions $\setstates\times\CDirstack$, where
\[\CDirstack=\{op\mid
  (\sstate,\stacks,\sstate',op)\in\relation \text{ for some
  }\sstate,\stacks \text{ and }\sstate'\}\] Note that we no longer
keep a neutral direction (the empty word in the pushdown setting): it
was previously useful in the proof of Lemma~\ref{lem-final} (subcase
$\bm{\phi=p}$) but in the higher-order setting we will need a more
involved tool as the simple trick of destroying the stack content
through direction $\epsilon$ will no longer be sufficient.

Definitions of the \emph{succinct representation of a partial path} as well as the \emph{succinct unfolding} (Definition~\ref{dec-succ-unfold}) are adapted to \CPCKS in the straightforward way (we keep the same notations). 

The following result is then proved as Lemma~\ref{lem-equivalence}.

\begin{lem}
  \label{lem-equivalence-HO}
  For every \CPCKS $\CPCKS$ over
 $\{\setlstates_i\}_{i\in [n]}$  %with state set $\setstates\subseteq
  % \prod_{i\in [n]}\setlstates_i$
  and every \QCTLsi formula $\phi$, 
  $\CPCKS\models\phi$ \,iff\, $\unfold[\CPCKS]{\sstate_\init,\stackb}\models\phiprime$.
\end{lem}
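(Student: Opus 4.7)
The plan is to adapt the proof of Lemma~\ref{lem-equivalence} to the collapsible higher-order setting by establishing a natural correspondence between the succinct unfolding $\unfold[\CPCKS]{\sstate_\init,\stackb}$ and the full tree unfolding of the induced CKS $\CKS_\CPCKS$. By definition, $\CPCKS\modelst\phi$ iff $\CKS_\CPCKS\modelst\phiprime$, and the latter unfolds to evaluation on the (infinite-direction) tree $\unfold[\CKS_\CPCKS]{\sstate'_\init}$ over $\setstates\times\Stacks_k$. The succinct unfolding, in contrast, lives over the finite direction set $\setstates\times\CDirstack$, but via the bijection $\succpath$ from partial paths of $\CPCKS$ to succinct representations, every node of one tree has a canonical counterpart in the other.

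First, I would verify that $\succpath$ is label-preserving, which is immediate from the definition of the labelling function in the succinct unfolding (it assigns to a node the $\lab$-value of its last configuration, which coincides with the labelling of the corresponding node in the full unfolding), and that it preserves parent/child and infinite-path structure. This suffices to handle the atomic, boolean, and $\E$ cases of the induction, since those only require a label- and branching-preserving correspondence, exactly as in the pushdown case.

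Second, I would establish that $\succpath$ respects $\cobs'$-indistinguishability of nodes, where $\cobs'=\cobs\cup\{n+1\}$. Because the stack is visible, index $n+1$ is always in $\cobs'$, so two nodes of the full unfolding are $\cobs'$-equivalent iff they share, position by position, the same state projection to $\cobs$ and the same stack content. Under $\succpath$, the stack content at each step is fully and uniquely determined by the sequence of operations applied starting from $\bot_k$, so this is equivalent to requiring that the two succinct representations agree position-wise on the $\cobs$-projection of the state and on the recorded stack operation, i.e.\ that they are $\cobs'$-indistinguishable in the succinct unfolding. The higher-order, collapsible nature of the stack plays no role here since $\succpath$ merely records symbolic operations, which are finitely many.

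Third, I would conclude by structural induction on $\phi$, the only non-trivial case being $\existsp{\cobs}\phi'$. The key observation, supplied by the previous step, is that the map $\plab\mapsto\plab\circ\succpath^{-1}$ induces a bijection between $\cobs'$-uniform $p$-labellings on the full unfolding and $\cobs'$-uniform $p$-labellings on the succinct unfolding, and this bijection commutes with the composition operation $\prodlab$. Applying the induction hypothesis to $\phi'$ under the transported labelling then yields the desired equivalence. The main potential obstacle is verifying the uniformity correspondence cleanly in both directions; but the visibility of the stack, which forces $n+1\in\cobs'$ throughout $\phiprime$, makes the operations-versus-content equivalence go through uniformly and is precisely what distinguishes the decidable setting from the undecidable invisible-stack case.
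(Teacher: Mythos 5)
Your overall strategy is exactly the paper's: the paper gives essentially no proof here (it states that the lemma ``is then proved as'' the order-$1$ version, which is itself presented as a direct consequence of the semantics), and what you write is the natural fleshing-out of that argument --- a label- and structure-preserving bijection between the two unfoldings, plus a correspondence of $\cobs'$-uniform labellings for the quantifier case. So in spirit you are doing the same thing the authors intend, and in more detail than they provide.

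There is, however, one step that does not hold as you state it, and it is precisely the step you dismiss with ``the higher-order, collapsible nature of the stack plays no role here.'' You claim that position-wise equality of stack contents along two paths is \emph{equivalent} to position-wise equality of the recorded operations. Only one direction is true at higher order: the operation sequence determines the contents, but the contents do not determine the operations. For instance, if the $\topone$ symbol of $\stackc$ carries an $\ell$-link created at the current height of the topmost $\ell$-stack (as happens immediately after a $\pushlk{\ell}{a}$), then $\collapse(\stackc)=\popn{\ell}(\stackc)$, so two paths can agree on every stack content while disagreeing on the recorded direction. (At order $1$ the pushed word is recoverable from two consecutive stack contents, which is why your equivalence does hold there and why this is exactly the point where the collapsible setting bites.) Consequently the indistinguishability relation on the succinct unfolding is strictly finer than the one on the full unfolding, your map on labellings is only inclusion-preserving in one direction, and the direction of the $\existsp{\cobs}$ case going from a uniform labelling on the succinct unfolding back to one on the full unfolding needs an extra argument (e.g., symmetrising the labelling over operation sequences that yield the same content, and checking this preserves satisfaction of $\phi'$). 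To be fair, the paper glosses over this identical point; but since you explicitly assert the equivalence and then build the bijection of uniform labellings on it, you should either prove the converse direction or repair the argument where it fails.
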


\subsubsection{Proof of Theorem~\ref{theo-decidable-QCTLi-HO}}

The proof of Theorem~\ref{theo-decidable-QCTLi-HO} follows the same
lines as the one of Theorem~\ref{theo-decidable-QCTLi}, \ie it relies
deeply on an inductive construction of a tree automaton working on the
succinct representation (Lemma~\ref{lem-final} in the pushdown
setting).

In the higher-order setting this leads to the following statement (the
only difference is now that we replaced \APTA by \CAPTA and
consider a collapsible pushdown Kripke compound structure $\CPCKS=(\stacka,\setstates,\relation,\lab,\sstate_\init)$).

\begin{lem}
    \label{lem-final-HO}
    For every subformula
    $\phi$ of $\Phi_{n+1}$ and state $\sstate\in\setstates$,  one can build a
    \CAPTA $\bigauto[\sstate]{\phi}$ on
    $(\APq(\Phi),\dirphi)$-trees with $\CDirstack$-guided stack
and    such that for every % $\Dirtreei[\phi]$-
    $(\APq(\Phi),\dirphi)$-tree $\ltree$ rooted in
    $(\projI[{\Iphi}]{\sstate_\init},\bot_k)$, every 
 partial path $\spath\in\PPaths(\CPCKS)$ ending in
    $\conf{\sstate}{\stackc}$, % letting $\succpath=\succpath(\spath)$
    % be the succinct representation of $\spath$,
    it holds that
    %\begin{equation*}
%      \label{eq:1}
\[      (\ltree,\projI[{\Iphi}]{\succpath(\spath)})\in\lang(\bigauto[\sstate]{\phi},\stackc) \mbox{\;\;\;iff\;\;\;}
      \liftI[{\setstates\times\CDirstack}]{}{\ltree}\merge\;\unfold{\sstate,\stackc},\succpath(\spath) \modelst
      \phi. \]
%      \mbox{\;\;\; where }\dira=\projI[{[n]\setminus I_{\phi}}]{\sstate}.
%      \projI[\Iphi]{\unfold{\sstate_\init}}\in\lang(\bigauto{\phi}) \mbox{ \;\;\;if, and only if,\;\;\; }\CKS,\sstate_\init\modelst\phi. 
  
 %   \end{equation*}
  \end{lem}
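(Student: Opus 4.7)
The plan is to carry out the induction on $\phi$ exactly as in the proof of Lemma~\ref{lem-final}, replacing alternating pushdown tree automata with their higher-order collapsible counterparts and invoking the higher-order projection, simulation and narrowing results from Section~\ref{sec:CPTA-res} in place of their pushdown analogues. As before, the automata produced will be $\CDirstack$-guided with stack update function $(a,\stacks,op)\mapsto op$, which simply applies the higher-order stack operation already encoded in the $\CDirstack$-component of the direction.

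For the cases $\phi=\neg\phi'$, $\phi=\phi_1\ou\phi_2$, $\phi=\E\psi$ and $\phi=\existsp{\cobs}\phi'$, as well as for $\phi=p$ with $p\in\APq(\Phi)$, the constructions from the pushdown setting transfer essentially verbatim. Narrowing aligns the automata of subformulas on the common direction set $\dirphi$, and $\CDirstack$-guidedness is preserved at each such step by Proposition~\ref{prop-narrow-stable-HO}, using that hierarchicality of $\Phi$ together with the fact that $n+1$ belongs to every concrete observation of $\Phi_{n+1}$ ensures the $\CDirstack$-component is never erased by narrowing. The $\phi=\E\psi$ case builds a nondeterministic parity word automaton for $\psi$ viewed as an \LTL formula, guesses an infinite path in the succinct unfolding while simulating this word automaton, and launches copies of the inductively built \CAPTAs for the maximal state subformulas; the stack operation performed along a direction $(x,op)$ is just $op$, so $\CDirstack$-guidedness is immediate. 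The semantic correctness arguments rely only on properties of the merge and widening and on the three automata operations, and so copy over without modification.

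The main obstacle is the atomic case $\phi=p$ with $p\in\APfree$. In the pushdown proof, $\bigauto[\sstate]{p}$ simulated $\wautop{\sstate}$ by popping the stack symbol by symbol while idling in the tree along the distinguished empty-word direction. Neither ingredient is available here: $\CDirstack$ contains no neutral operation, and $\wautop{\sstate}$ now reads the linearisation $\anglebra{\flaten{\stackc},\target{\stackc}}$ of the whole stack with links in the sense of Section~\ref{sec-regular-labelling-HO-app}, which cannot be reconstructed by a top-down pop-based traversal. To address this I would invoke the closure property of alternating collapsible pushdown automata from~\cite{BCOS10}: from $\wautop{\sstate}$ one obtains an alternating collapsible pushdown automaton that, started on an initial stack $\stackc$, accepts iff $\wautop{\sstate}$ accepts $\anglebra{\flaten{\stackc},\target{\stackc}}$. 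I would then lift this automaton to a \CAPTA $\bigauto[\sstate]{p}$ that ignores the input-tree labels and runs the closure-based computation on its own stack, moving along a fixed direction in the tree at each step. Using the standard device of pushing auxiliary control-state markers onto the stack so that each transition is determined by the current top symbol, one arranges the transition function so that the stack operation performed in a given direction depends only on $(a,\stacks,op)$, thereby maintaining $\CDirstack$-guidedness. Correctness in this case is then immediate from the definition of the merge on free propositions, and the inductive construction of $\bigauto[\sstate]{\phi}$ proceeds as outlined above.
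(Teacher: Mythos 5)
You correctly isolate the only case that genuinely changes, namely $\phi=p$ with $p\in\APfree$, and you reach for the right tool, the closure property of~\cite{BCOS10} (Theorem~\ref{thm-closure-CAPTA}). But the way you deploy it does not work. That theorem does not hand you a collapsible automaton that, started on an arbitrary bare stack $\stackc$, decides on the spot whether $\wautop{\sstate}$ accepts $\anglebra{\flaten{\stackc},\target{\stackc}}$: it produces an \emph{enriched} automaton ${\mathcal A}[\mathcal{B}]$ over an extended stack alphabet and state set, whose annotations are built up incrementally along the whole run, and only for stacks constructed in this annotated form is the answer readable off the control state. Invoking it ``locally'' in the base case therefore presupposes annotations that a bare $\stackc$ does not carry, and reconstructing them on the fly would require exactly the traversal (following links, hence collapsing and destroying the stack) that the reflection theorem exists to avoid. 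The intended use, and what the paper does, is to apply the construction of Theorem~\ref{thm-closure-CAPTA} at \emph{every step} of the inductive construction, so that each automaton built maintains at all times, in its control state, the state reached by each $\wautop{\sstate}$ on the current stack content; the case $\phi=p$ then amounts to inspecting that state component, with no stack manipulation at all.

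Your fallback of ``running the closure-based computation on its own stack while moving along a fixed direction'' also breaks $\CDirstack$-guidedness. Guidedness requires the stack operation performed on reading $(a,\stacks)$ and moving in direction $(x,op)$ to be a function of $(a,\stacks,op)$ alone, uniformly over all control states and hence over all sub-automata that may sit at the same node of the input tree. An automaton that pops, collapses, or pushes markers according to the progress of an internal simulation performs state-dependent operations, so it cannot coexist at a node with the copies launched in the $\E\psi$ case, which must apply the operation $op$ encoded in the direction; pushing auxiliary markers does not escape this, since those pushes are themselves unguided. The remaining cases of your induction are fine and match the paper.
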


\begin{proof}

As for Lemma~\ref{lem-final} the proof is by induction on $\phi$. The only case that differs from the pushdown case is the base case case where $\phi=p$ as it requires to handle regular labelling functions (which are now richer than in the pushdown case).

In the pushdown case, to check for a formula $\phi=p$ we followed the
dummy direction $(s,\epsilon)$ to destroy letter by letter the current
stack content while simulating on the fly $\wautop{\sstate}$.  In the
setting without links, a similar trick would work: one would read
letter by letter the higher order stack performing $\popone$
operations, or $\popn{i+1}$ operations when getting to an empty
topmost $i$-stack which can be handled thanks to a small change of the
model where one can test whether the topmost-$i$ stack is empty (this
can be simulated by the present model); see
e.g.~\cite{CarayolPHD,FrataniPHD}.  However for stacks with links
this approach no longer works as one also needs to follow the links
and this would require to destroy the stack.

The solution in the general case of \CAPTA is to anticipate these
tests and to enrich the automaton so that it has in its control states
an extra component that, for every $\wautop{\sstate}$, gives the state
reached in $\wautop{\sstate}$ after processing the current stack
content. In the pushdown automaton it is an easy exercise how to
compute such an enriched version ({when pushing some content
  one simply simulates $\wautop{\sstate}$ on the new symbols
  added, and pushes  each symbol together with  its corresponding state of
  $\wautop{\sstate}$; popping is then for free}). In the case of
  higher-order pushdown~\cite{CHMOS08} and collapsible
  pushdown~\cite{BCOS10} it is a highly non-trivial result (we
  rephrase it here for \CAPTA but the proof ingredients are the same).

\begin{thm}\cite[Theorem~3]{BCOS10}
\label{thm-closure-CAPTA}
Given an order-$k$ \CAPTA $\mathcal A$ {with a state-set $Q$} and an
automaton $\mathcal{B}$ (that takes as input stacks with links over the same
alphabet as $\mathcal{A}$), there exist an order-$k$ \CAPTA
${\mathcal A}[\mathcal{B}]$ with  state-set $Q'$, a subset
${F} \subseteq Q'$ and a mapping {$\chi : Q' \rightarrow Q$} such
that:
\begin{itemize}
\item[(i)] ${\mathcal{A}}$ and
  ${\mathcal{A}[\mathcal{B}]}$ accept the same trees. \oschanged{}
\item[(ii)] for every configuration $\conf{q}{\stackc}$ of
  {$\mathcal{A}[\mathcal{B}]$}, the corresponding
  configuration\oschanged{\footnote{More precisely,
      ${\mathcal{A}[\mathcal{B}]}$ works on a stack alphabet and set
      of control states that extend those of $\mathcal{A}$, and
      configurations of $\mathcal{A}$ are obtained from configurations
      of ${\mathcal{A}[\mathcal{B}]}$ by forgetting the extra
      components from the control state and from the stack
      symbols. Hence, one should think of ${\mathcal{A}[\mathcal{B}]}$
      as a version of $\mathcal{A}$ with extra information stored
      both in the control states and in the stack symbols, and this
      information is precisely used to check whether the current
      stack content is accepted by $\mathcal{B}$.}} of $\mathcal{A}$
  has state $\chi(q)$ and its stack content is accepted by
  $\mathcal{B}$ if and only if $q \in {F}$.
\end{itemize}
\end{thm}

Now applying the construction from Theorem~\ref{thm-closure-CAPTA} at
every step in the inductive construction gives the base case $\phi=p$
for free.

The rest of the proof is similar to the one of Lemma~\ref{lem-final}.
\end{proof}

\subsection{Model checking hierarchical instances of \SLi on collapsible pushdown games arenas with visible stacks}\label{sec-SLi-HO}

Regarding \SLi we need to generalise the notion of pushdown game arena with visible stack to higher-order. 

\begin{defi}
\label{def-pgs-HO}
An order-$k$ \defin{Collapsible Pushdown Game Arena with Visible Stack}, or \CPGA for short, is a
tuple $\CPGA=(\Act,\stacka,\cstates,\ptrans,\val,\cstate_\init,\obsint)$ where
   \begin{itemize}
    \item $\Act$ is a finite set of actions,
    \item $\stacka$ is a finite stack alphabet together with a  bottom
      symbol $\stackb\notin\stacka$ and we let $\stacka_\stackb=\stacka\cup\{\stackb\}$,
    \item $\cstates$ is a finite set of control states,
    \item $\ptrans:\cstates\times\stacka_\stackb\times \Mov^{\Agf}\to \cstates\times{\opn{k}}$ is a transition function, 
    \item $\val:\cstates\times\Stacks_k\to 2^{\APf}$ is a regular labelling function (as defined in Section~\ref{sec-CKS-HO}), 
    \item $\cstate_\init \in \cstates$ is an initial control state, and
    \item  $\obsint:\Obsf\to 2^{\cstates\times\cstates}$ is an observation interpretation.
  \end{itemize}  
\end{defi}

A \defin{configuration} is a pair
$\conf{\cstate}{\stackc}$ where $\cstate\in\cstates$
represents the current control state and $\stackc$ is the current content
of the stack with links.  When the players choose a joint move
$\jmov\in \Mov^{\Agf}$ in a configuration
$\conf{\cstate}{\stackc}$ the system moves to
configuration $\conf{\cstate'}{op(\stackc)}$, where
$\conf{\cstate'}{op}=\ptrans(\cstate,\topone(\stackc),\jmov)$; we denote
this by
$\conf{\cstate}{\stackc}\conftrans{\jmov}\conf{\cstate'}{op(\stackc)}$.

A \CPGA
$\CPGA=(\Act,\stacka,\cstates,\ptrans,\val,\cstate_\init,\obsint)$
induces an infinite \CGS
$\CGS_\CPGA=(\Act,\vertices',\trans,\val',\pos_\init,\obsint')$ where
\begin{itemize}
\item $\vertices'=\cstates\times \Stacks_k$,
\item $\trans(\conf{\cstate}{\stacks\cdot\stackc},\jmov)=
\conf{\cstate'}{\stackc'\cdot\stackc}$ if
$\conf{\cstate}{\stacks\cdot\stackc}\conftrans{\jmov}\conf{\cstate'}{\stackc'\cdot\stackc}$,
\item $\val'=\val$,
\item $\pos'_\init=\conf{\cstate_\init}{[\bot_k]}$,
\item
  $(\conf{\cstate}{\stackc},\conf{\cstate'}{\stackc'})\in\obsint'(\obs)$
  if $\stackc =
\stackc' \mbox{ and }(\cstate,\cstate')\in\obsint(\obs)$.
\end{itemize}
{We call plays and partial plays of $\CPGA$ those of $\CGS_\CPGA$.}
For an \SLi sentence $\phi$, we write $\CPGA\models\phi$ if $\CGS_\CPGA\models\phi$.

Decidability of hierarchical instances of \SLi on collapsible pushdown
game arenas with visible stack follows the same line as in the
pushdown setting. Indeed, the reduction to \QCTLsi works the same (the
only point to check is that the labellings obtained in the reduction
are regular ones but this is immediate).

\begin{thm}
\label{theo-SLi-HO}
The model-checking problem for \SLi on collapsible pushdown game arenas with
visible stack is decidable for hierarchical instances.
\end{thm}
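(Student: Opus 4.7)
The plan is to adapt the reduction from \SLi to \QCTLsih developed in Section~\ref{sec-sl} to the collapsible setting and then invoke Theorem~\ref{theo-decidable-QCTLi-HO}. Given a hierarchical instance $(\CPGA,\Phi)$, I would build an order-$k$ \CPCKS $\CPCKS_\CPGA$ whose local states mirror the observational equivalence classes, and apply the syntactic translation $\tr[\emptyset]{\cdot}$ from Section~\ref{sec-sl} verbatim. The whole proof then reduces to checking two points: that $\CPCKS_\CPGA$ is a well-formed \CPCKS (in particular that its labelling is regular in the sense of Section~\ref{sec-CKS-HO}), and that the resulting \QCTLsi formula is hierarchical.

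More concretely, I would define $\setlstates_i=\{\eqc{\obs_i}\mid\cstate\in\cstates\}$ and form states $\sstate_{\cstate,\jmov}=(\eqc{\obs_1},\ldots,\eqc{\obs_n},\cstate,\jmov)$ exactly as in Section~\ref{sec-sl}; the transition relation lifts $\ptrans$ by reusing the same order-$k$ stack operations of the \CPGA, so that the bijection $\fplay\mapsto\spath'$ between partial plays of $\CGS_\CPGA$ and partial paths of $\CPCKS_\CPGA$ generalises directly. The labelling $\lab'(\conf{\sstate_{\cstate,\jmov}}{\stackc})=\val(\conf{\cstate}{\stackc})\cup\{p_\jmov\}$ splits into two kinds of atoms: atoms of $\APf$ inherit their acceptor $\wautop{\cstate}$ from the regular labelling of $\CPGA$, and fresh atoms $p_\jmov\in\APact$ depend only on the control-state component, so a trivial stack-automaton (accepting either all stacks or none according to whether $\jmov$ matches the state's last-action component) witnesses their regularity. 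This is where the author's remark bites, and it is genuinely painless because the $\APact$ atoms carry no dependence on the higher-order stack.

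The formula translation $\tr[f]{\cdot}$ transfers without any change: strategy quantifiers are translated into quantification on the observation $\trobs{\obs}$, bindings are recorded in $f$, and outcome quantification is rendered with $\psiout[f]$. The correctness statement, which in the pushdown case is Proposition~\ref{prop-redux}, lifts by the same induction on $\phi$, because the arguments only use (i) the bijection between partial plays and partial paths, (ii) the fact that atoms $p^\var_\mov$ code for a strategy uniform with respect to $\trobs{\obs}$ iff the coded strategy is $\obs$-uniform, and (iii) the semantics of \SLi and \QCTLsi, none of which are affected by moving from pushdown to collapsible pushdown stacks. Preservation of the hierarchical condition also carries over: if $\obsint(\obs_1)\subseteq\obsint(\obs_2)$ then $\trobs{\obs_1}\subseteq\trobs{\obs_2}$ by definition of $\trobs{\cdot}$, so hierarchical instances of \SLi map to hierarchical \QCTLsi formulas.

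The only place where something could go wrong is the regularity check for the labelling of $\CPCKS_\CPGA$, but as sketched above this is a routine verification: the $\APf$-atoms reuse $\wautop{\cstate}$, and the $\APact$-atoms are stack-independent. Once correctness of the reduction is established, applying Theorem~\ref{theo-decidable-QCTLi-HO} to $(\CPCKS_\CPGA,\tr[\emptyset]{\Phi})$ yields decidability of $\CPGA\models\Phi$, completing the proof. I expect no genuinely novel obstacle beyond what was already handled in Sections~\ref{sec-QCTL-HO} and~\ref{sec-sl}; the higher-order machinery is entirely absorbed by Theorem~\ref{theo-decidable-QCTLi-HO}.
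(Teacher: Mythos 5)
Your proposal is correct and follows essentially the same route as the paper: the paper likewise observes that the reduction of Section~\ref{sec-sl} carries over verbatim to \CPGA, with the only point to verify being that the labellings produced by the reduction are regular in the sense of Section~\ref{sec-CKS-HO}, and then concludes by Theorem~\ref{theo-decidable-QCTLi-HO}. Your treatment of the $\APact$ atoms as stack-independent (hence trivially regular) and of the $\APf$ atoms as inheriting the acceptors $\wautop{\cstate}$ is exactly the check the paper calls ``immediate''.
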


\section{Conclusion}
\label{sec-conclusion}

We proved that we can model check Strategy Logic with imperfect
information on collapsible pushdown game arenas when the stack is
visible and information  hierarchical. {This implies that, on such
infinite systems and for LTL objectives, one can decide the existence
of Nash equilibria or solve a variety of 
 synthesis problems such as distributed synthesis, rational synthesis % synthesis of
% admissible strategies 
or assume-guarantee synthesis, all easily
expressible in Strategy Logic.}

{Strategy Logic is also known to be decidable, with elementary
complexity, on imperfect-information arenas where actions are public~\cite{belardinelli2017verification}. One interesting
future work would be to extend also this result to the pushdown  setting.}

Another possible continuation of this work is to consider synthesis rather than model-checking. More specifically, to start with a collapsible pushdown system with controllable and uncontrollable actions and a specification in Strategy Logic with imperfect information, and ask for a controller that restricts the system so that the specification is satisfied. Of course, due to the non-elementary underlying complexity of the model-checking problem, one should restrict to sub-classes (of system and/or formulas) to hope for tractable results.

\newpage
% Bibliography
\bibliographystyle{alpha}
\newcommand{\etalchar}[1]{$^{#1}$}

%\bibliography{biblio}

\begin{thebibliography}{BMvdB18}

\bibitem[AHK02]{DBLP:journals/jacm/AlurHK02}
Rajeev Alur, Thomas~A. Henzinger, and Orna Kupferman.
\newblock Alternating-time temporal logic.
\newblock {\em Journal of the {ACM}}, 49(5):672--713, 2002.

\bibitem[ALM{\etalchar{+}}13]{aminof2013pushdown}
Benjamin Aminof, Axel Legay, Aniello Murano, Olivier Serre, and Moshe~Y Vardi.
\newblock Pushdown module checking with imperfect information.
\newblock {\em Information and Computation}, 223:1--17, 2013.

\bibitem[BCHS12]{BroadbentCHS12}
Christopher~H. Broadbent, Arnaud Carayol, Matthew Hague, and Olivier Serre.
\newblock A saturation method for collapsible pushdown systems.
\newblock In {\em Proceedings of the 39th International Colloquium on Automata,
  Languages, and Programming}, volume 7392 of {\em Lecture Notes in Computer
  Science}, pages 165--176. Springer, 2012.

\bibitem[BCHS13]{BroadbentCHS13}
Christopher~H. Broadbent, Arnaud Carayol, Matthew Hague, and Olivier Serre.
\newblock C-shore: a collapsible approach to higher-order verification.
\newblock In {\em Proceedings of the 18th {ACM} {SIGPLAN} International
  Conference on Functional Programming}, pages 13--24. {ACM}, 2013.

\bibitem[BCOS10]{BCOS10}
Christopher~H. Broadbent, Arnaud Carayol, C.{-}H.~Luke Ong, and Olivier Serre.
\newblock Recursion schemes and logical reflection.
\newblock In {\em Proceedings of the 25th Annual {IEEE} Symposium on Logic in
  Computer Science}, pages 120--129. {IEEE} Computer Society, 2010.

\bibitem[BEM97]{bouajjani1997reachability}
Ahmed Bouajjani, Javier Esparza, and Oded Maler.
\newblock Reachability analysis of pushdown automata: Application to
  model-checking.
\newblock In {\em Proceedings of the 8th International Conference on
  Concurrency Theory}, volume 1243 of {\em Lecture Notes in Computer Science},
  pages 135--150. Springer, 1997.

\bibitem[BLMR17]{belardinelli2017verification}
Francesco Belardinelli, Alessio Lomuscio, Aniello Murano, and Sasha Rubin.
\newblock Verification of broadcasting multi-agent systems against an epistemic
  strategy logic.
\newblock In {\em Proceedings of the 26th International Joint Conference on
  Artificial Intelligence}, volume~17, pages 91--97, 2017.

\bibitem[BMM{\etalchar{+}}17]{DBLP:conf/lics/BerthonMMRV17}
Rapha{\"{e}}l Berthon, Bastien Maubert, Aniello Murano, Sasha Rubin, and
  Moshe~Y. Vardi.
\newblock Strategy logic with imperfect information.
\newblock In {\em Proceedings of the 32nd Annual {ACM/IEEE} Symposium on Logic
  in Computer Science}, pages 1--12. {IEEE} Computer Society, 2017.

\bibitem[BMP10]{bozzelli2010pushdown}
Laura Bozzelli, Aniello Murano, and Adriano Peron.
\newblock Pushdown module checking.
\newblock {\em Formal Methods in System Design}, 36(1):65--95, 2010.

\bibitem[BMvdB18]{DBLP:journals/acta/BerwangerMB18}
Dietmar Berwanger, Anup~Basil Mathew, and Marie van~den Bogaard.
\newblock Hierarchical information and the synthesis of distributed strategies.
\newblock {\em Acta Informatica}, 55(8):669--701, 2018.

\bibitem[Cac03]{cachat2003games}
Thierry Cachat.
\newblock {\em Games on pushdown graphs and extensions}.
\newblock PhD thesis, Bibliothek der RWTH Aachen, 2003.

\bibitem[Car06]{CarayolPHD}
Arnaud Carayol.
\newblock {\em Automates infinis, logiques et langages}.
\newblock PhD thesis, Universit\'e de {R}ennes 1, 2006.

\bibitem[CFGR16]{DBLP:conf/icalp/ConduracheFGR16}
Rodica Condurache, Emmanuel Filiot, Raffaella Gentilini, and
  Jean{-}Fran{\c{c}}ois Raskin.
\newblock The complexity of rational synthesis.
\newblock In {\em Proceedings of the 43rd International Colloquium on Automata,
  Languages, and Programming}, volume~55 of {\em LIPIcs}, pages 121:1--121:15.
  Schloss Dagstuhl - Leibniz-Zentrum fuer Informatik, 2016.

\bibitem[CHM{\etalchar{+}}08]{CHMOS08}
Arnaud Carayol, Matthew Hague, Antoine Meyer, C.{-}H.~Luke Ong, and Olivier
  Serre.
\newblock Winning regions of higher-order pushdown games.
\newblock In {\em Proceedings of the 23rd Annual {IEEE} Symposium on Logic in
  Computer Science}, pages 193--204. {IEEE} Computer Society, 2008.

\bibitem[CHP10]{chatterjee2010strategy}
Krishnendu Chatterjee, Thomas~A Henzinger, and Nir Piterman.
\newblock Strategy logic.
\newblock {\em Information and Computation}, 208, 2010.

\bibitem[CSW16a]{chen2016global}
Taolue Chen, Fu~Song, and Zhilin Wu.
\newblock Global model checking on pushdown multi-agent systems.
\newblock In {\em Proceedings of the 30th {AAAI} Conference on Artificial
  Intelligence}, pages 2459--2465. {AAAI} Press, 2016.

\bibitem[CSW16b]{DBLP:conf/ijcai/ChenSW16}
Taolue Chen, Fu~Song, and Zhilin Wu.
\newblock Verifying pushdown multi-agent systems against strategy logics.
\newblock In {\em Proceedings of the 25th International Joint Conference on
  Artificial Intelligence}, pages 180--186. {IJCAI/AAAI} Press, 2016.

\bibitem[CSW17]{chen2017model}
Taolue Chen, Fu~Song, and Zhilin Wu.
\newblock Model checking pushdown epistemic game structures.
\newblock In {\em Proceedings of Formal Methods and Software Engineering - 19th
  International Conference on Formal Engineering Methods}, volume 10610 of {\em
  Lecture Notes in Computer Science}, pages 36--53. Springer, 2017.

\bibitem[EKS03]{DBLP:journals/iandc/EsparzaKS03}
Javier Esparza, Anton{\'{\i}}n Kucera, and Stefan Schwoon.
\newblock Model checking {LTL} with regular valuations for pushdown systems.
\newblock {\em Information and Computation}, 186(2):355--376, 2003.

\bibitem[FGR18]{filiot2018rational}
Emmanuel Filiot, Raffaella Gentilini, and Jean-Fran{\c{c}}ois Raskin.
\newblock Rational synthesis under imperfect information.
\newblock In {\em Proceedings of the 33rd Annual {ACM/IEEE} Symposium on Logic
  in Computer Science}, pages 422--431. ACM, {IEEE} Computer Society, 2018.

\bibitem[FKL10]{fisman2010rational}
Dana Fisman, Orna Kupferman, and Yoad Lustig.
\newblock Rational synthesis.
\newblock In {\em Proceedings of the 16th International Conference on Tools and
  Algorithms for the Construction and Analysis of Systems}, pages 190--204.
  Springer, 2010.

\bibitem[Fra06]{FrataniPHD}
S\'everine Fratani.
\newblock {\em Automates \`a piles de piles \ldots de piles}.
\newblock PhD thesis, Universit\'e de {B}ordeaux, 2006.

\bibitem[FS05]{DBLP:conf/lics/FinkbeinerS05}
Bernd Finkbeiner and Sven Schewe.
\newblock Uniform distributed synthesis.
\newblock In {\em Proceedings of the 20th {IEEE} Symposium on Logic in Computer
  Science}, pages 321--330. {IEEE} Computer Society, 2005.

\bibitem[FWW97]{finkel1997direct}
Alain Finkel, Bernard Willems, and Pierre Wolper.
\newblock A direct symbolic approach to model checking pushdown systems.
\newblock {\em Electronic Notes in Theoretical Computer Science}, 9:27--37,
  1997.

\bibitem[GTW02]{LNCS2500}
Erich Gr{\"a}del, Wolfgang Thomas, and Thomas Wilke, editors.
\newblock {\em Automata, Logics, and Infinite Games: A Guide to Current
  Research}, volume 2500 of {\em Lecture Notes in Computer Science}. Springer,
  2002.

\bibitem[HMOS08]{HMOS08}
Matthew Hague, Andrzej~S. Murawski, C.{-}H.~Luke Ong, and Olivier Serre.
\newblock Collapsible pushdown automata and recursion schemes.
\newblock In {\em Proceedings of the 23rd Annual {IEEE} Symposium on Logic in
  Computer Science}, pages 452--461. {IEEE} Computer Society, 2008.

\bibitem[HMOS17]{HagueMOS17}
Matthew Hague, Andrzej~S. Murawski, C.{-}H.~Luke Ong, and Olivier Serre.
\newblock Collapsible pushdown automata and recursion schemes.
\newblock {\em ACM Transactions on Computational Logic}, 18(3):25:1--25:42,
  2017.

\bibitem[HO09]{hague2009winning}
Matthew Hague and C-H~Luke Ong.
\newblock Winning regions of pushdown parity games: A saturation method.
\newblock In {\em International Conference on Concurrency Theory}, pages
  384--398. Springer, 2009.

\bibitem[JM77]{jones1977even}
Neil~D Jones and Steven~S Muchnick.
\newblock Even simple programs are hard to analyze.
\newblock {\em Journal of the ACM}, 24(2):338--350, 1977.

\bibitem[KPV02]{kupferman2002pushdown}
Orna Kupferman, Nir Piterman, and Moshe~Y Vardi.
\newblock Pushdown specifications.
\newblock In {\em Proceedings of the 9th International Conference on Logic for
  Programming Artificial Intelligence and Reasoning}, volume 2514 of {\em
  Lecture Notes in Computer Science}, pages 262--277. Springer, 2002.

\bibitem[KPV16]{DBLP:journals/amai/KupfermanPV16}
Orna Kupferman, Giuseppe Perelli, and Moshe~Y. Vardi.
\newblock Synthesis with rational environments.
\newblock {\em Annals of Mathematics and Artificial Intelligence}, 78(1):3--20,
  2016.

\bibitem[KV99]{kupferman1999church}
Orna Kupferman and Moshe~Y. Vardi.
\newblock Church's problem revisited.
\newblock {\em Bulletin of Symbolic Logic}, 5(2):245--263, 1999.

\bibitem[KV01]{kupermann2001synthesizing}
Orna Kupferman and Moshe~Y. Vardi.
\newblock Synthesizing distributed systems.
\newblock In {\em Proceedings of the 16th Annual {IEEE} Symposium on Logic in
  Computer Science}, pages 389--398. {IEEE} Computer Society, 2001.

\bibitem[KVW00]{DBLP:journals/jacm/KupfermanVW00}
Orna Kupferman, Moshe~Y. Vardi, and Pierre Wolper.
\newblock An automata-theoretic approach to branching-time model checking.
\newblock {\em Journal of the ACM}, 47(2):312--360, 2000.

\bibitem[LLS84]{ladner1984alternating}
Richard~E Ladner, Richard~J Lipton, and Larry~J Stockmeyer.
\newblock Alternating pushdown and stack automata.
\newblock {\em SIAM Journal on Computing}, 13(1):135--155, 1984.

\bibitem[LM14]{DBLP:journals/corr/LaroussinieM14}
Fran{\c{c}}ois Laroussinie and Nicolas Markey.
\newblock Quantified {CTL:} expressiveness and complexity.
\newblock {\em Logical Methods in Computer Science}, 10(4), 2014.

\bibitem[MMPV14]{DBLP:journals/tocl/MogaveroMPV14}
Fabio Mogavero, Aniello Murano, Giuseppe Perelli, and Moshe~Y. Vardi.
\newblock Reasoning about strategies: On the model-checking problem.
\newblock {\em ACM Transactions on Computational Logic}, 15(4):34:1--34:47,
  2014.

\bibitem[MP15]{murano2015pushdown}
Aniello Murano and Giuseppe Perelli.
\newblock Pushdown multi-agent system verification.
\newblock In {\em Proceedings of the 24th International Joint Conference on
  Artificial Intelligence}. {IJCAI/AAAI} Press, 2015.

\bibitem[PR79]{DBLP:conf/focs/PetersonR79}
Gary~L. Peterson and John~H. Reif.
\newblock Multiple-person alternation.
\newblock In {\em Proceedings of the 20th Annual Symposium on Foundations of
  Computer Science}, pages 348--363. {IEEE} Computer Society, 1979.

\bibitem[PR90]{PR90}
Amir Pnueli and Roni Rosner.
\newblock {Distributed reactive systems are hard to synthesize}.
\newblock In {\em Proceedings of the 31st Annual Symposium on Foundations of
  Computer Science}, pages 746--757. {IEEE} Computer Society, 1990.

\bibitem[PRA02]{peterson2002decision}
Gary Peterson, John Reif, and Salman Azhar.
\newblock Decision algorithms for multiplayer noncooperative games of
  incomplete information.
\newblock {\em Computers \& Mathematics with Applications}, 43(1):179--206,
  2002.

\bibitem[PV04]{piterman2004global}
Nir Piterman and Moshe~Y Vardi.
\newblock Global model-checking of infinite-state systems.
\newblock In {\em Proceedings of the 16th International Conference on Computer
  Aided Verification}, volume 3114 of {\em Lecture Notes in Computer Science},
  pages 387--400. Springer, 2004.

\bibitem[Rab69]{rabin1969decidability}
Michael~O. Rabin.
\newblock Decidability of second-order theories and automata on infinite trees.
\newblock {\em Transactions of the American Mathematical Society}, 141:1--35,
  1969.

\bibitem[Ser03]{serre2003note}
Olivier Serre.
\newblock Note on winning positions on pushdown games with $\omega$-regular
  conditions.
\newblock {\em Information Processing Letters}, 85(6):285--291, 2003.

\bibitem[Ser04]{serrePHD}
Olivier Serre.
\newblock {\em Contribution \`a l'\'etude des jeux sur des graphes de processus
  \`a pile}.
\newblock PhD thesis, Universit\'e Paris 7, 2004.

\bibitem[VW94]{vardi1994reasoning}
Moshe~Y. Vardi and Pierre Wolper.
\newblock Reasoning about infinite computations.
\newblock {\em Information and Computation}, 115(1):1--37, 1994.

\bibitem[Wal01]{walukiewicz2001pushdown}
Igor Walukiewicz.
\newblock Pushdown processes: Games and model-checking.
\newblock {\em Information and computation}, 164(2):234--263, 2001.

\bibitem[Zie98]{DBLP:journals/tcs/Zielonka98}
Wieslaw Zielonka.
\newblock Infinite games on finitely coloured graphs with applications to
  automata on infinite trees.
\newblock {\em Theoretical Computer Science}, 200(1-2):135--183, 1998.

\end{thebibliography}
\end{document}